\documentclass[acmsmall,screen,nonacm]{acmart}
\usepackage{etoolbox}
\newtoggle{ext} 
\settoggle{ext}{true} 
\usepackage{algorithm}
\usepackage[noend]{algpseudocode}
\usepackage{mathtools}
\usepackage{xcolor}
\usepackage{wrapfig}
\usepackage{multirow}
\usepackage{booktabs}
\usepackage{xurl}
\usepackage{array}
\usepackage{enumitem}
\usepackage{longtable}
\usepackage{thmtools}

\iftoggle{ext}{%
\newcommand{\refapp}[1]{\Cref{#1}}
\newcommand{\Crefapp}[1]{\Cref{#1}}
}{%
\usepackage{xr}
\externaldocument{app}
\newcommand{\refapp}[1]{\cite[Appendix \ref*{#1}]{ChengF2026}}
\newcommand{\Crefapp}[1]{\Cref*{#1}}
}

\usepackage{cleveref}

\def\tool{{\sc Mosaic}}

\newcommand{\ar}{\mathsf{ar}}

\DeclarePairedDelimiter{\tp}{\langle}{\rangle}
\DeclarePairedDelimiter{\sem}{\llbracket}{\rrbracket}
\newcommand{\Node}{\mathsf{Node}}
\newcommand{\Data}{\mathsf{Data}}
\newcommand{\VNode}{\mathcal{V}_\mathsf{Node}}
\newcommand{\VData}{\cV_\Data}
\newcommand{\isProc}{\mathsf{isProc}}
\newcommand{\nb}{N}
\newcommand{\prog}{\mathbf{P}}

\newcommand{\val}{\mu}
\newcommand{\sval}{\texttt{\textmu}}

\newcommand{\setcomp}[1]{\overline{{#1}}}
\newcommand{\QFAshcroft}{\textsc{QF-Ashcroft}}
\newcommand{\Ashcroft}{\textsc{Ashcroft}}
\newcommand{\init}{\mathsf{init}}
\newcommand{\error}{\mathsf{error}}
\newcommand{\htri}[3]{\left\{ #1 \right\}\; \allowbreak {#2}\; \allowbreak \left\{ #3 \right\}}
\newcommand{\qftype}{\mathsf{qftype}}

\newcommand{\Inv}{\mathit{Inv}}
\newcommand{\bool}{\mathsf{bool}}
\newcommand{\subst}{\textsc{Subst}}
\newcommand{\gform}{\mathsf{Global}}
\newcommand{\spec}{\mathbf{S}}
\newcommand{\Assertion}[1]{\textsc{Assertion}[{#1}]}
\newcommand{\Transition}[1]{\textsc{Transition}[{#1}]}
\newcommand{\id}[1]{%
  \ensuremath{\mathit{\mathcode`\-=`\-\relax#1}}}

\newcommand{\clsk}[2]{{\sf CL}(#1,#2)}
\newcommand{\substructures}{\mathsf{sub}}
\newcommand{\into}{\hookrightarrow}
\newcommand{\nv}{\nu}
\DeclareUrlCommand{\benchmark}{\footnotesize\urlstyle{tt}}

\theoremstyle{definition}
\declaretheorem[name=Example,numberwithin=section]{example}
\declaretheorem[name=Definition,numberwithin=section]{definition}

\declaretheorem[name=Remark]{remark}
\declaretheorem[name=Theorem,numberwithin=section]{theorem}
\declaretheorem[name=Lemma,sibling=theorem]{lemma}
\declaretheorem[name=Proposition,sibling=theorem]{proposition}
\declaretheorem[name=Corollary,sibling=theorem]{corollary}

\makeatletter
\def\@parfont{\bfseries}
\makeatother

\let\textAA\AA
\renewcommand{\AA}{\ifmmode\mathbb{A}\else\textAA\fi}
\newcommand{\BB}{\mathbb{B}}

\newcommand{\DD}{\mathbb{D}}

\newcommand{\LL}{\mathbb{L}}

\newcommand{\NN}{\mathbb{N}}

\let\textSS\SS
\renewcommand{\SS}{\ifmmode\mathbb{S}\else\textSS\fi}
\newcommand{\TT}{\mathbb{T}}
\newcommand{\UU}{\mathbb{U}}

\newcommand{\cC}{\mathcal{C}}

\newcommand{\cI}{\mathcal{I}}
\newcommand{\cJ}{\mathcal{J}}

\newcommand{\cP}{\mathcal{P}}

\newcommand{\cR}{\mathcal{R}}
\newcommand{\cS}{\mathcal{S}}
\newcommand{\cT}{\mathcal{T}}
\newcommand{\cU}{\mathcal{U}}
\newcommand{\cV}{\mathcal{V}}

\newcommand{\ff}{\mathsf{f}}

\makeatletter
\iftoggle{ext}{
  \AtBeginDocument{
    \settopmatter{printccs=true,printacmref=true}
  }
  \renewcommand{\@mkbibcitation}{%
  \bgroup\par\medskip\small\noindent
  \emph{This article is an extended version of the conference paper of the same title, which was accepted to CAV 2026.}
\par\egroup}
}{}
\makeatother

\begin{document}

\title{Complete Local Reasoning About Parameterized Programs Over Topologies}
\iftoggle{ext}{
  \subtitle{(Extended Version)}
}{}

\author{Ruotong Cheng}
\orcid{0009-0004-1857-7251}
\affiliation{%
  \institution{University of Toronto}
  \city{Toronto}
  \country{Canada}
}
\email{chengrt@cs.toronto.edu}

\author{Azadeh Farzan}
\orcid{0000-0001-9005-2653}
\affiliation{%
  \institution{University of Toronto}
  \city{Toronto}
  \country{Canada}
}
\email{azadeh@cs.toronto.edu}

\begin{abstract}
This paper investigates the algorithmic safety verification problem of infinite-state parameterized concurrent programs over a rich set of communication topologies. The goal is to automatically produce a proof of correctness in the form of a {\em universally quantified} inductive invariant, where the quantification is over the nodes in the topology. We illustrate that under reasonable assumptions on the underlying topology, the problem can be reduced to and solved as a {\em compositional} scheme, that is, the verification of the parameterized family is reduced to a set of {\em local} proofs, in a {\em complete} manner. We propose a verification algorithm and demonstrate through a set of benchmarks over several different topologies that our approach is effective in proving parameterized programs safe. 
\end{abstract}

\maketitle

\section{Introduction}\label{sec:intro}
Verifying the safety of parameterized programs is a fundamental and well-studied problem in computer-aided verification. This paper focuses on an instance of this problem that is specified by two distinguishing features: (1) The parameterized programs are {\em infinite-state} and defined over {\em a rich class of underlying communication topologies}. (2) Safety is formulated as the reachability of an error state that relates a constant number of threads.

In verification of concurrent programs, compositional verification, often called {\em thread-modular reasoning}, aims to construct a proof of correctness for the entire program by composing correctness proofs for individual threads. This is, however, known to be {\em incomplete} for reasoning about parameterized programs. In \cite{HoenickeMP2017}, the concept of {\em thread modularity at many levels} is proposed as an alternative. It is proved to be {\em complete} w.r.t. a class of proofs in the form of universally quantified invariants, called {\em Ashcroft Invariants}. Rather than focusing on one thread at a time, the {\em local} proof can be the proof of the parallel product of $k$ threads, for a fixed $k$, when an Ashcroft invariant with $k - 1$ quantifiers proves the program correct. This scheme, however, is deeply tied to the assumption that the program comprises replicated threads communicating through a {\em bounded number} of shared global variables. What is the nature of compositional verification of parameterized programs if they are over an arbitrary communication topology? 

We propose an answer to this question by using the inherent {\em symmetry} in such parameterized programs. Let us use an example to illustrate the core ideas. Consider the parameterized program given in Figure \ref{fig:pip}(a,b), which is a family of pipelines with $n$ processes (depicted as circle nodes) arranged on a line in linear order, where the first and last nodes are {\em distinguished} from the rest. Each pair of neighbouring processes shares two variables, an integer-typed {\tt data} and a Boolean-typed {\tt ready}, which reside on the square nodes in between. {\tt ready} being true indicates that {\tt data} has been computed and is ready to be read; all instances of {\tt ready} are initially {\tt false}. Additionally, the green (top) square node hosts two {\em global} counters {\tt t} and {\tt f}. Each process has an uninitialized  Boolean-typed variable {\tt pos}, which serves as a nondeterministic Boolean choice. Observe that there are {\em unboundedly many shared variables and local variables}, which is an important point of departure from classic models such as that of \cite{FarzanKP2015,HoenickeMP2017}. 

\begin{figure}[t]
\begin{center}
\includegraphics[scale=0.6,alt={(a) illustrates the topology for the pipeline, where the first and last nodes are distinguished from the rest, and every process is connected to a top node that hosts global variables. (b) is the code template for every process. (c) and (d) are two subprograms that are vital to the parameterized verification but not in the parameterized family.}]{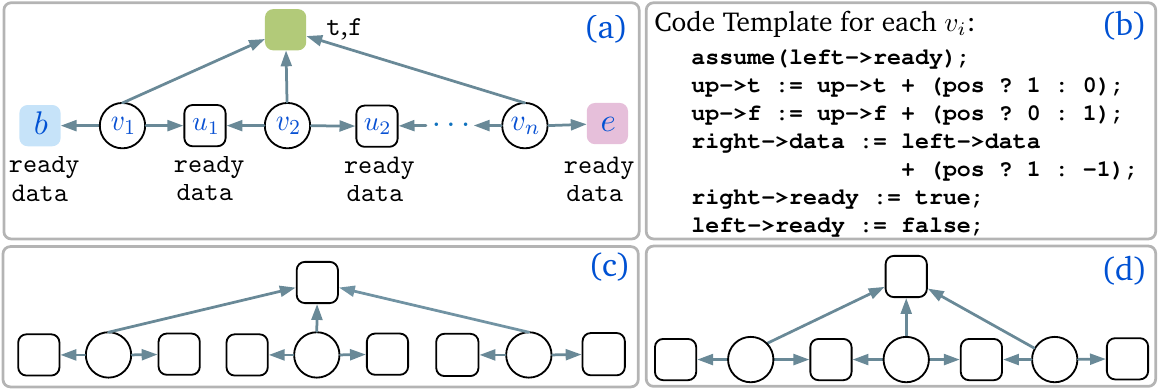}\vspace{-3pt}
\caption{Pipeline Example (a,b).  Small pipeline subprograms (c,d).\label{fig:pip}}
\vspace{-10pt}
\end{center}
\end{figure}


\def\ff{\mathop{\vartriangleright}}

Assume that each circle node atomically executes the given code template, where {\tt left} and {\tt right} refer to the corresponding square node neighbour of the process. 
The idea is that, depending on the nondeterministic value of {\tt pos}, the process increments or decrements the value before passing it along. Separately, the counts of true and false values for local {\tt pos} variables are recorded in global counters {\tt t} and {\tt f}. We want to prove the property below, which can be proved using a universally quantified inductive invariant: 
$$e\ff\texttt{ready} \implies e\ff\texttt{data} = b\ff\texttt{data} + \texttt{t} - \texttt{f}.$$


How can we discover this proof compositionally? The first key observation is that the smaller proofs do correspond to small subprograms (of programs in the parameterized family), but they do not necessarily have to be a single (or several) threads. For instance, Figure \ref{fig:pip}(c,d) illustrates two small subprograms with three processes each. In each case, the program includes processes {\em and} the {\em neighbourhoods} accessed by them. The distinction between the two programs is that in (d), the processes are strictly neighbours, but not in (c). The three-threaded programs in (c,d), however, {\em do not} belong in the pipeline parameterized family. It is clear why (c) is not a pipeline, and (d) does not have distinguished beginning and end nodes, and hence is not in the family. 

The thesis of this paper is that the proof for the pipeline parameterized family can be composed from the proofs of such subprograms. Intuitively, to construct the proof for any arbitrarily-sized pipeline, we break it into smaller pieces. If we have proofs for sufficiently many pieces that are distinct up-to-isomorphism, then we can use these proofs to give a proof for the entire pipeline program. This is fundamentally based on a {\em symmetry} argument formalized through an appropriate notion of local isomorphism. 

In Section \ref{sec:symmetry}, we prove a core result that if the parameterized family includes {\em sufficiently many such small programs}, then a generalized Ashcroft invariant proves the entire family correct {\em if and only if} it proves the small programs correct. Since programs in the family are instantiated from topologies in the family, this effectively puts a condition on the family of topologies. This has two orthogonal implications: (1) A candidate Ashcroft invariant can be verified using a bounded number of checks (Theorem \ref{thm:ashcroft-fin}), and (2) rather than searching for an Ashcroft invariant for the entire family, one can look for one for this bounded set. The former is a {\em cut-off} style theorem for verification. The latter is a key observation behind our compositional verification algorithm. This is inspired by the proof of relative completeness from \cite{pps}, where we linked a (different) proof system for parameterized programs to a generalization of Ashcroft invariants. We use these invariants, which accommodate arbitrary topologies, use a core observation from the proof, and extend it to prove our {\em cut-off} theorem.

The generalized Ashcroft invariants are not exactly {\em local} proofs, even for the small programs. The proof of the cut-off theorem relies on the insight that, without loss of generality, isomorphic neighbourhoods can share proofs, even if they are in different parts of the topology. In Section \ref{sec:chc-single}, we rely on this insight again to argue that each Ashcroft invariant can be syntactically rearranged (i.e., {\em normalized}) so that proofs of isomorphic neighbourhoods are {\em syntactically grouped}.  We refer to this transformation as {\em normalization}, and it streamlines the ingredients (function and predicate symbols) used in the matrix of the formula so that (1) there are distinct parts about the topology and the data manipulated by the program, and (2) using symmetry, one can enforce a bound on the size of terms in the matrix (Lemma \ref{lem:qftype-fin}). This results in a {\em bounded search space} for the ingredients of Ashcroft invariants, which can now be discovered through purely {\em local} proof searches for small programs. In other words, {\em normalization} produces a {\em complete template} for proof search. This search is then performed through the collection of {\em local} verification conditions as a system of constrained Horn clauses (CHC). As a consequence of (1), these local proofs become entirely oblivious to the program topology, which eliminates the need for axiomatizing the topology.
 
%
%
%
%

One can produce a proof for the entire parameterized family by using the procedure outlined above to collect constraints from all sufficiently many small programs of our cut-off theorem. There are, however, two obstacles to making this a pragmatic solution. First, we already mentioned that our pipeline example does not include the required small subprograms. In general, a typical user-friendly topology family (e.g. rings, lines, and grids) does not include these small subprograms. In \cite{pps}, we argue that one can {\em close} these topologies by adding the missing small programs while preserving the safety/unsafety status of the entire family. Second, even if one liberally closes the family, there is an enormous amount of redundancy in constraints collected from all necessary small programs.  In Section \ref{sec:encoding}, we give a construction that takes a standard user-friendly topology family, like the pipelines of Figure \ref{fig:pip}, and populates it with enough small programs to guarantee the required sufficiency conditions for compositional reasoning, and produces a CHC system with less redundancy from them. We take our treatment of redundancy further by introducing a set of {\em complete} optimizations in Section \ref{sec:opt}. These optimizations take advantage of the fact that our methodology of extracting a {\em complete template} for proof search from normalized Ashcroft invariants brings a degree of flexibility to this search. The normal form can change, and the set of templates for the search changes with it, but the guarantee of completeness remains. 

We have implemented our methodology in a tool called {\tool}, which takes as input a family of parameterized programs specified as a family of topologies and an assignment of nodes to a bounded number of code templates. 
Our experimental results show that the method is very effective in proving interesting properties of programs over a broad set of topologies. Furthermore, in \Cref{sec:tmk}, as a theoretical case study, we derive the algorithm of \cite{HoenickeMP2017} as a special case of our framework to demonstrate its robustness.   


\section{Background} \label{sec:background}

\paragraph{Structures and Maps Between Them}

We are concerned with structures in the context of first-order logic with equality. A \emph{vocabulary} $\cV$ is a set of predicate symbols and function symbols, each with a certain \emph{arity} given by a global map $\ar$. A \emph{$\cV$-structure} $\AA$ is defined as an \emph{underlying set} with an \emph{interpretation} of the symbols in $\cV$. We generally use $\AA$ to also refer to the underlying set of $\AA$.

Let $\AA, \BB$ be two $\cV$-structures. An \emph{embedding} $f : \AA \hookrightarrow \BB$ is an injection that preserves and reflects the interpretation of all predicate and function symbols. An \emph{isomorphism} is a surjective embedding. An \emph{automorphism} is an isomorphism from a structure to itself. If there is an embedding from $\AA$ into $\BB$, we say $\AA$ is an \emph{embedded substructure} of $\BB$; if furthermore, the underlying set of $\AA$ is a subset of that of $\BB$ and the inclusion map is an embedding, we say $\AA$ is a \emph{substructure} of $\BB$. The \emph{substructure generated} by a subset $A$ of $\AA$ is the smallest substructure containing $A$ that is closed under applying functions from $\cV$.


\paragraph{Constrained Horn Clauses}

For brevity, we assume the background theory (for program data) is given by a $\VData$-structure $\DD$. Let $\cR$ be a set of predicate symbols representing unknowns, disjoint from $\VData$. A \emph{constrained Horn clause} (CHC) is a formula in the vocabulary $\VData \cup \cR$ that is of the form $B_1 \land \cdots \land B_n \land C \Rightarrow H$, where every $B_i$ is an application $p(x_1, \ldots, x_{\ar(p)})$ of an unknown predicate symbol $p \in \cR$ to free variables $x_1, \ldots, x_{\ar(p)}$, $C$ is a $\VData$-formula, and the \emph{head} $H$ is either $\bot$ or an application as in $B_i$.

A \emph{CHC system} $\cC$ is a finite set of CHCs. It is \emph{satisfiable} if there exists an interpretation for each unknown predicate symbol that makes the universal closure of each clause in $\cC$ valid modulo $\DD$. A (syntactic) \emph{solution} to $\cC$ maps every unknown $p$ in $\cC$ to a $\VData$-formula with $\ar(p)$ free variables such that the subsets defined by these formulas constitute a satisfying interpretation.

\section{Parameterized Programs and Their Proofs}
In this section, we formalize concurrent programs over individual topologies and then parameterized programs over parameterized families of topologies. We formalize a class of safety proofs for such parameterized programs in the form of universally quantified inductive invariants.

\subsection{Concurrent Program Over Topology}

In our setup, a \emph{topology} is a finite logical structure $\TT$ over a vocabulary $\VNode$, whose underlying elements are called \emph{nodes}. For a set of nodes $V$, we define the \emph{neighbourhood} $\nb(V)$ as the substructure generated by the nodes in $V$. For example, Figure \ref{fig:pip}(d) illustrates the neighbourhood of two consecutive nodes in a pipeline, and Figure \ref{fig:pip}(c) that of two non-neighbouring nodes. We sometimes identify a tuple of nodes $\vec{v} \in \TT^k$ with the set $\{ v_1, \ldots, v_k \}$ and write $\nb(\vec{v})$.

We assume there is a variable on every node, which can be a structure consisting of multiple variables, standing in for variables local to a process or shared between neighbouring processes. Intuitively, a process that runs on a node only has access to the variables in its neighbourhood. 
For brevity, we fix a single data domain $\DD$ over a vocabulary $\VData$ and assume all variables have domain $\DD$. 

\paragraph{States and Transitions}
For any node $v$, we say a valuation $\nb(v) \to \DD$ is a \emph{local state} of $v$. A valuation $\TT \to \DD$ is a \emph{global state}. 
A \emph{concurrent program $\prog$ over a topology $\TT$} can be viewed as a transition system that captures the changes in the local states, which together comprise changes in the global state. 

Formally, $\prog$ is a pair $\tp{\TT, \sem{-}}$, where $\sem{-}$ maps every node $a$ to a binary relation on its local states, which induces a transition relation over the global states by declaring that variables outside the neighbourhood are unaffected: for any global states $\val, \val'$, we have $\val \xrightarrow{v} \val'$ if $(\val|_{N(v)}, \val'|_{N(v)}) \in \sem{v}$ and $\val|_{\setcomp{N(v)}} = \val'|_{\setcomp{N(v)}}$. We say $\val \to \val'$ if $\val \xrightarrow{v} \val'$ for some $v \in \VNode$.

We consider the vocabulary $\VNode$ (resp. $\VData$) to have a single sort $\Node$ (resp. $\Data$). Let $\sval$ and $\sval'$ be two formal symbols of type $\Node \to \Data$. We define a set $\Transition{\TT}$ consisting of formulas of the form $\phi[\sval(\vec{u}), \sval'(\vec{w})]$ (the notation $\sval(\vec{u})$ abbreviates the list $\sval(u_1), \ldots, \sval(u_{|\vec{u}|})$, and similarly for $\sval'(\vec{w})$), where $\phi$ is a $\VData$-formula and $\vec{u}, \vec{w}$ are tuples over $\TT$. We assume every $\sem{v}$ is definable by such a formula where the nodes are from $\nb(v)$.

\paragraph{Safety Specification}
We assume that the \emph{safety} of a concurrent program is specified through a pair of functions $\tp{\init, \error}$, where $\init$ and $\error$ map each node $v \in \TT$ to a set of local states of $v$.\footnote{To simplify presentation, we assume $\error$ is a unary function in our formalism. It is straightforward to account for an $m$-ary $\error$ function, where $m$ is no more than the width of the Ashcroft invariant to search for (see \Cref{sec:gai}).} For a global state $\val$, we say $\val$ satisfies $\init$ if $\val|_{\nb(v)} \in \init(v)$ for all $v \in \TT$, and $\val$ satisfies $\error$ if $\val|_{\nb(v)} \in \error(v)$ for some $v \in \TT$. A program $\prog$ is safe w.r.t. $\tp{\init, \error}$ if for any global states $\val, \val'$ such that $\val \to \val'$, if $\val$ satisfies $\init$, then $\val'$ does not satisfy $\error$.

The class of formulas $\Assertion{\TT}$ is defined similarly to $\Transition{\TT}$, except that the symbol $\sval'$ does not appear. We assume that every $\init(v)$ and $\error(v)$ is definable by such a formula where the nodes are from $\nb(v)$.

We abuse notations slightly and also use $\init(v)$, $\error(v)$, and $\sem{v}$ to refer to the formula that defines each.

\subsection{Parameterized Program Over a Family of Topologies} \label{sec:param}

Parameterized programs are generally defined over a notion of {\em symmetry}. That is, two programs are considered to be part of the same family when they are in some sense symmetric. In other words, we want symmetric nodes to have symmetric semantics. 
To formalize this, we use a {\em local} notion of symmetry defined based on the concept of a {\em local isomorphism}, inspired by \cite{pps}. 

\paragraph{Symmetry Equivalence} For structures $\TT$ and $\TT'$ over the same vocabulary, given tuples $\vec{u} \in \TT^d$ and $\vec{v} \in \TT^{d'}$, we say any isomorphism $\beta : \nb(\vec{u}) \to \nb(\vec{v})$ is a \emph{local isomorphism} (from $\vec{u}$ to $\vec{v}$) if it maps $\vec{u}$ to $\vec{v}$ (component-wise). \emph{Local isomorphisms} induce an equivalence relation $\simeq$, called \emph{local symmetry}, on tuples over $\TT$ and $\TT'$: we have $\vec{u} \simeq \vec{v}$ if there is a local isomorphism from $\vec{u}$ to $\vec{v}$. For example, in the pipeline topology of Figure \ref{fig:pip}(a), any two circle nodes are equivalent up to symmetry, as long as they are not the distinguished begin/end circle node; those two nodes are in two other symmetry equivalence classes.

The local isomorphism $\beta : \nb(\vec{u}) \to \nb(\vec{v})$ induces a map $\beta^* : (\nb(\vec{v}) \to \DD) \to (\nb(\vec{u}) \to \DD)$ between local states by $\mu \mapsto \mu \circ \beta$. We lift $\beta^*$ to (sets of) tuples of valuations in the natural way. In other words, symmetric nodes have symmetric semantics, which means that proof arguments from one generalize to the other up to symmetry. 

\paragraph{Parameterized Programs} 
Given a vocabulary $\VNode$ and a data domain $\DD$, a \emph{parameterized (concurrent) program} is an infinite family $\cP$ of programs over $\VNode$-topologies such that for every pair of programs $\tp{\TT, \sem{-}}$ and $\tp{\TT', \sem{-}'}$ in the family, for any $v \in \TT$ and $v' \in \TT'$ such that $v \simeq v'$ via $\beta$ (namely $v' = \beta(v)$), we have $\sem{v} = \beta^*\sem{v'}$. We write $\cP$ as an indexed set $\{ \prog_i \}_{i\in\cI}$, where $\prog_i = \tp{\TT_i, \sem{-}_i}$.

\paragraph{Safety Specification} 
The safety of a parameterized program is expressed by a \emph{parameterized specification} $\{\tp{\init_i, \error_i}\}_{i \in \cI}$, where the pair $\tp{\init_i, \error_i}$ is a safety specification for $\prog_i$. Naturally, we expect the specification to also be symmetric: Formally, for any $i, j \in \cI$, $u \in \TT_i$, and $v \in \TT_j$ such that $u \simeq v$ via $\beta$, we have $\init_i(u) = \beta^* \init_j(v)$ and $\error_i(u) = \beta^* \error_j(v)$. We say $\cP$ is safe if $\prog_i$ is safe w.r.t. $\tp{\init_i, \error_i}$ for every $i \in \cI$.

\subsection{Generalized Ashcroft Invariants} \label{sec:gai}

We adapt a class of universally quantified invariants for a generic set of topologies that appear in \cite{pps} with the simplifying assumption that program counter variables are available, and the control flow can be encoded through them; this is without loss of generality. The complete formal definition is recalled in \refapp{app:ash}. In short, we define \QFAshcroft{} as a fragment of first-order formulas over the vocabulary $\VNode \uplus \VData \uplus \{ \sval \}$ of two sorts, $\Node$ and $\Data$, where the special symbol $\sval$ is typed $\Node \to \Data$, such that no quantification is over $\Node$ and every free variable is of sort $\Node$. A generalized Ashcroft assertion of \emph{width} $k$ is a sentence of the form $\forall \nv_1, \ldots, \nv_k.\, \varphi[\nv_1, \ldots, \nv_k]$, where $\varphi \in \QFAshcroft$ and $\nv_1, \ldots, \nv_k$ are $\Node$-sorted variables.

The truth value of a (generalized) Ashcroft assertion is given by a $\VNode$-topology $\TT$ and a global state $\val : \TT \to \DD$, interpreting the symbol $\sval$ (recall that we fix the data domain). For a program $\prog$ and a safety specification $\tp{\init, \error}$, a \emph{generalized Ashcroft invariant} is an Ashcroft assertion that is {\em inductive} w.r.t. $\prog$ in the standard sense and \emph{sufficient} to establish safety. $\Phi$ is an Ashcroft invariant for a parameterized program w.r.t. a parameterized specification if it is an invariant for every member of the family.

\section{A Cut-Off Theorem for Ashcroft Invariants}\label{sec:symmetry}
The key observation of this section is as follows. Under certain conditions about the underlying family of topologies, an Ashcroft assertion for {\em a bounded number of programs} in a parameterized family is also an Ashcroft invariant for the entire parameterized family. This is formally stated as \Cref{thm:ashcroft-basis} at the end of this section. This can be viewed as a {\em cut-off} style theorem for Ashcroft invariants over these topologies. It is intuitively clear that the conditions required for such a theorem to hold have to do with the amount of {\em symmetry} present in the topology family. In some sense, we want the finite set of programs to enumerate all possible different {\em scenarios} that one may observe in any instance of the family of any size. In \cite{pps}, such a condition is defined, albeit used for a different purpose:

\begin{definition}[\cite{pps}] \label{def:topo-basis}
    Let $k \in \NN_+$ and $\cT$ be a family of topologies over a vocabulary $\cV$. A \emph{basis of rank $k$} for $\cT$ is a set $\cS$ of $\cV$-topologies where for any $\TT \in \cT$ and $\vec{v} \in \TT^k$, there is some $\SS \in \cS$ and embedding $f : \SS \hookrightarrow \TT$ such that $\vec{v} \in f(\SS)^k$.
\end{definition}

Recall the Pipeline example of \Cref{fig:pip}. Let us assume $k = 3$. If we pick three arbitrary {\em middle} circle nodes from the pipeline, there are three possibilities, depending on whether any pair is consecutive or not. \Crefapp{fig:pip}(c,d) illustrates two of them. A full basis of rank 3 with 15 members is listed in \refapp{app:pipeline}. Since the {\em first} and {\em last} nodes of the pipeline are distinguished, there are other choices of three {\em arbitrary} nodes beyond choosing only {\em middle} points.  


We can now state the following new theorem that relates Ashcroft invariants of a parameterized program to Ashcroft invariants of a \emph{finite} parameterized program, specifically the one over the basis:

\begin{theorem} \label{thm:ashcroft-basis}
    Let $\cP = \{ \prog_i \}_{i\in\cI}$ be a parameterized program over a topology family $\cT = \{ \TT_i \}_{i\in\cI}$. For any $k \in \NN$, if $\cT$ contains a basis $\{ \TT_j \}_{j\in\cJ}$ of rank $k + 1$ as a subset, then for any $\Phi$ of width $k$ and parameterized safety specification, $\Phi$ is an Ashcroft invariant for $\cP$ if and only if it is an Ashcroft invariant for $\{ \prog_j \}_{j\in\cJ}$.
\end{theorem}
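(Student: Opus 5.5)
The forward direction is immediate: $\{\prog_j\}_{j\in\cJ}$ is a subfamily of $\cP$, so an invariant for every member of $\cP$ is an invariant for every member of the basis. The work is in the converse. Assume $\Phi = \forall \nv_1,\ldots,\nv_k.\,\varphi[\vec\nv]$ is an Ashcroft invariant for every $\prog_j$ with $j\in\cJ$. We must show three things for an arbitrary $i\in\cI$: initiation, consecution and sufficiency of $\Phi$ for $\prog_i$. All three will follow from one transfer principle: whether a quantifier-free $\varphi$ holds at a tuple $\vec v$ under a state $\val$ depends only on the isomorphism type of $\nb(\vec v)$ together with $\val|_{\nb(\vec v)}$.

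\textbf{Transfer lemma.} Let $f:\SS\into\TT$ be an embedding and $\vec u\in\SS^m$. Then $f$ restricts to a local isomorphism from $\vec u$ to $f(\vec u)$. For any $\val:\TT\to\DD$ and any $\varphi\in\QFAshcroft$ with free variables among $\vec\nv$, we have $\SS,\val\circ f\models\varphi[\vec u]$ iff $\TT,\val\models\varphi[f(\vec u)]$.

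This is a structural induction on $\varphi$. Node terms built from $\vec\nv$ evaluate inside the generated substructure $\nb(\vec u)$. Embeddings commute with $\VNode$-functions and reflect $\VNode$-predicates and equality. Data terms $\sval(t)$ are read off $\val\circ f$. Data quantifiers range over the fixed $\DD$ and are unaffected. Combining this with the symmetry conditions on $\cP$ and the specification, $\sem{u}_j=f^*\sem{f(u)}_i$, and likewise for $\init$ and $\error$. Hence local transitions, initial states and error states also transfer along $f$.

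\textbf{Sufficiency and initiation.} These use $k$-tuples, which the rank-$(k+1)$ basis covers by repeating an entry.

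For initiation, suppose $\val$ satisfies $\init_i$, and fix $\vec v\in\TT_i^k$. Pick an embedding $f:\TT_j\into\TT_i$ with $\vec v\in f(\TT_j)^k$. The pulled-back state $\val\circ f$ satisfies $\init_j$, since each $\nb(u)$ lies in $\TT_j$ and transfers. So $\Phi$ holds on $\TT_j$, in particular at $f^{-1}(\vec v)$, and it transfers back to $\vec v$.

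For sufficiency, suppose $\val$ satisfies $\Phi$ on $\TT_i$ and $\error_i$ at some node $v$. Embed a basis member containing $v$. The pulled-back state satisfies $\Phi$: every $k$-tuple of $\TT_j$ maps to a $k$-tuple of $\TT_i$, where $\Phi$ holds, and the instance transfers. The pulled-back state also satisfies $\error_j$ at $f^{-1}(v)$. This contradicts sufficiency on $\TT_j$.

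\textbf{Consecution (main obstacle).} This is where rank $k+1$ is needed. Suppose $\val\models\Phi$ on $\TT_i$ and $\val\xrightarrow{w}\val'$, and fix $\vec v\in\TT_i^k$. Apply the basis definition to the $(k+1)$-tuple $(\vec v,w)$ to obtain $f:\TT_j\into\TT_i$ whose image contains $\vec v$ and $w$. Since $f(\TT_j)$ is a substructure, it contains $\nb(w)$ and $\nb(\vec v)$.

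Set $\sigma=\val\circ f$ and $\sigma'=\val'\circ f$. Then $\sigma\xrightarrow{f^{-1}(w)}\sigma'$ in $\prog_j$. The local step transfers by the symmetry of $\sem{-}$, and nodes outside $\nb(f^{-1}(w))$ keep their values because $f$ is injective and $\val'$ agrees with $\val$ outside $\nb(w)$. Moreover $\sigma\models\Phi$ on $\TT_j$ by the same pullback as in the sufficiency case. Inductiveness on $\prog_j$ then gives $\sigma'\models\varphi[f^{-1}(\vec v)]$, and the transfer lemma yields $\val'\models\varphi[\vec v]$.

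The delicate point is checking that the frame condition survives the pullback, that is, that $\nb_{\TT_j}(f^{-1}w)$ maps exactly onto $\nb_{\TT_i}(w)$. This holds because embeddings preserve generated substructures. It should be verified carefully against the formal definition of \QFAshcroft{} in \refapp{app:ash}.
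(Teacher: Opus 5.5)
Your proposal is correct and follows essentially the same route as the paper. In each of the three cases you cover the relevant tuple (the $k$-tuple for initiation, the node for safety, the $(k+1)$-tuple consisting of the acting node and $\vec v$ for consecution) by an embedded basis member. You then pull the states back along the embedding, use the invariant property on that basis program, and transfer the instance of $\varphi$ back.

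The only difference is packaging. Your single embedding-transfer lemma combines what the paper proves as two lemmas plus one observation:
\begin{itemize}
\item a lemma that Ashcroft formulas are invariant under isomorphism;
\item a lemma that transitions, initial states and error states pull back along embeddings;
\item the remark that $\varphi$ has no node quantifiers, so restricting the state to the image substructure is harmless.
\end{itemize}
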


We give a formal proof in \refapp{app:proofs}. To get a sense of why the theorem holds, and its connection to the {\em rank of the basis} for the topology family, we give an informal argument for the nontrivial direction here. Assume we have an Ashcroft invariant $\Phi$ for the basis $\{ \TT_j \}_{j\in\cJ}$, but $\Phi$ is not an Ashcroft invariant of the parameterized family. Then, it must not be an invariant of a specific member based on some topology $\TT_i$. Assuming the problem is not with initialization or safety of the invariant, there must be a specific statement of a specific node $v$ violating the inductivity of $\Phi$, which must be witnessed by a concrete tuple of nodes $\vec{u}$ of length $k$. By the premise and Definition \ref{def:topo-basis}, we know that some substructure of $\TT_i$ containing $v$ and $\vec{u}$ can be identified with a topology in the basis, which is a contradiction, because this violation must be observed in that member of the basis against our assumptions.

Observe that Theorem \ref{thm:ashcroft-basis} also indicates that checking whether a given Ashcroft assertion is indeed an invariant can be done through a bounded number of checks under the same conditions. In Section \ref{sec:ht}, we further argue (see Proposition \ref{prop:valid-entail}) that these checks are limited to entailments in the data domain $\DD$ and hence {\em independent} of the topology family. In a sense, the topology family through the high level view of Theorem \ref{thm:ashcroft-basis} determines a set of such checks, but any individual check is oblivious to it. 

\begin{theorem} \label{thm:ashcroft-fin}
    Let $k \in \NN$ and $\cT$ be a topology family such that a finite basis of rank $k+1$ contained in $\cT$ is given. For any Ashcroft assertion $\Phi$ of width $k$, parameterized program over $\cT$, and parameterized safety specification, one can compute a $\VData$-sentence that is valid modulo $\DD$ iff $\Phi$ is an Ashcroft invariant.
\end{theorem}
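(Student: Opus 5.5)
By \Cref{thm:ashcroft-basis}, $\Phi$ is an Ashcroft invariant for the whole parameterized program iff it is one for each of the finitely many programs $\prog_j$, $j \in \cJ$, over the given basis. Each $\TT_j$ is a fixed finite structure. The plan is therefore to produce, for each $j$, a $\VData$-sentence $\psi_j$ that is valid modulo $\DD$ iff $\Phi$ is initiated, inductive, and safe for $\prog_j$, and then output $\bigwedge_{j\in\cJ}\psi_j$. This conjunction is finite because the basis is finite, and it is computable since the basis is given explicitly.

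For a fixed finite $\TT_j$ with elements $t_1,\ldots,t_n$, I would eliminate node quantification by grounding. Write $\Phi = \forall \nv_1,\ldots,\nv_k.\,\varphi$. Its truth in $(\TT_j,\val)$ is equivalent to the finite conjunction $\Phi_j := \bigwedge_{\vec{a}\in\TT_j^k}\varphi[\vec{a}]$. In each instance, every $\VNode$-term built from the constants $\vec{a}$ evaluates in $\TT_j$ to a concrete node, and every $\VNode$-atom evaluates to a concrete truth value. Replacing these by their values leaves a quantifier-free-over-$\Node$ formula whose only node-dependent parts are the atoms $\sval(t)$ for $t\in\TT_j$. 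Next introduce fresh $\Data$-sorted variables $x_t$ and $x'_t$ for each node $t$, and substitute $x_t$ for $\sval(t)$ (respectively $x'_t$ for $\sval'(t)$). Then $\Phi_j$ becomes a $\VData$-formula $\widehat{\Phi}_j[\vec{x}]$. The formulas in $\Assertion{\TT_j}$ and $\Transition{\TT_j}$ defining $\init_j(v)$, $\error_j(v)$ and $\sem{v}_j$ similarly become $\VData$-formulas over $\vec{x},\vec{x}'$. The frame condition that nodes outside $\nb(v)$ are unchanged becomes $\bigwedge_{t\notin\nb(v)} x'_t = x_t$, and $\nb(v)$ is computable as a generated substructure of a finite structure.

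With this translation, the three invariant conditions become universally closed $\VData$-sentences:
\[
\forall\vec{x}.\,\textstyle\bigwedge_{v}\widehat{\init}_j(v)\Rightarrow\widehat{\Phi}_j,\qquad
\forall\vec{x}.\,\widehat{\Phi}_j\Rightarrow\textstyle\bigwedge_{v}\neg\widehat{\error}_j(v),
\]
\[
\forall\vec{x},\vec{x}'.\,\widehat{\Phi}_j\land\widehat{\sem{v}}_j\land\textstyle\bigwedge_{t\notin\nb(v)}x'_t=x_t\Rightarrow\widehat{\Phi}_j[\vec{x}'],\quad\text{one for each } v\in\TT_j.
\]
Take $\psi_j$ to be their conjunction. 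Correctness follows because global states $\val:\TT_j\to\DD$ correspond bijectively to assignments of $\vec{x}$, and the translation commutes with satisfaction. This is a routine induction on $\QFAshcroft$ formulas, using the fact that no quantification is over $\Node$; quantifiers over $\Data$ are left untouched. I should also match the precise notion of "sufficient to establish safety" in \refapp{app:ash}. If sufficiency is stated with $\error$ checked only on reachable states, the second sentence above is still the standard formulation.

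The main obstacle is the grounding step. I have to check that every $\VNode$ function term occurring in $\varphi$ is interpretable in $\TT_j$, which holds because $\TT_j$ is a total structure. I also have to check that mixed terms such as $\sval(f(\nv_1))$ are handled correctly by first evaluating the node term and then mapping it to the variable $x_{f(a_1)}$. This requires the formal grammar of $\QFAshcroft$ from the appendix: $\Data$-terms are built from $\sval$ applied to $\Node$-terms, $\Node$-terms never depend on $\Data$, and $\Node$ and $\Data$ atoms are cleanly separated. Once this is verified, everything else is finite bookkeeping.
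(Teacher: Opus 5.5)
Your proposal is correct and is essentially the paper's proof. You reduce to the finitely many basis programs via \Cref{thm:ashcroft-basis}, express initiation, consecution (one check per node, with the frame condition) and safety on each finite $\TT_j$ as $\DD$-validity of the $\subst$-translated formulas, and take the conjunction; the paper does the same but obtains the second step by citing \Cref{prop:ashcroft-htri,prop:htri-nf,prop:valid-entail}, which you reprove inline by grounding the node quantifiers and evaluating $\VNode$-terms (your hedge about reachable states is unnecessary, since the Safety clause in \Cref{app:ash} quantifies over all states satisfying $\Phi$, exactly matching your sentence).
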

\section{From Ashcroft Invariants to Local Proofs}\label{sec:chc-single}

Theorem \ref{thm:ashcroft-basis} links the proofs of small programs over the {\em basis} of the topology to the proof of the entire parameterized family. These proofs, however, are in the form of Ashcroft invariants. In this section, we shift gears to local proofs, which are algorithmically searchable. 
We first introduce a normalization procedure for Ashcroft invariants, and then model the resulting {\em local} verification conditions as a set of Hoare triples. Finally, we state the proof search problem for a single program in the topology as a CHC system. 

\subsection{Normalizing Ashcroft Assertions}\label{sec:nai}

A free-form Ashcroft invariant of width $k$ can have an arbitrarily sized and shaped matrix, which can mix and match terms on the topology family and the underlying data. We use the underlying symmetry of the topology family to argue that any Ashcroft invariant can be rearranged in a way that would syntactically separate the terms about the topology family and the data, and further put a bound on the size of the universe of terms that can appear in it.



We start with two related key observations: (1) Every {\em quantifier-free formula} represents the union of a number of {\em local symmetry equivalence classes}, and (2) Under appropriate conditions, every {\em local symmetry equivalence class} can be represented by a single {\em quantifier-free formula}. 

There is a well-established concept of \emph{types of quantifier rank $r$} in model theory (e.g., see \cite[Sec. 3.4]{Libkin2004}), where a special case would be when the rank $r = 0$, and hence the focus is on quantifier-free formulas. Consider a tuple $\vec{v} \in  \TT^k$ where $\TT$ is a $\cV$-structure. The \emph{quantifier-free $k$-type of $\vec{v}$ over $\TT$}, denoted $\qftype(\TT, \vec{v})$, is the set of quantifier-free $\cV$-formulas $\phi$ such that $\TT \models \phi(\vec{v})$. We say $(\TT, \vec{v})$ \emph{realizes} $\qftype(\TT, \vec{v})$. Quantifier-free $k$-types precisely capture the quotient $\TT^k / {\simeq}$ (i.e. local symmetry equivalence classes) as follows:

\begin{proposition} \label{prop:qftype-sym}
    Let $\TT$ and $\TT'$ be $\cV$-structures. For any tuples $\vec{v} \in \TT^k$ and $\vec{u} \in {\TT'}^k$, we have $\vec{v} \simeq \vec{u}$ if and only if $\qftype(\TT, \vec{v}) = \qftype(\TT', \vec{u})$. 
\end{proposition}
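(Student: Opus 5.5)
Both directions rest on one fact: quantifier-free formulas are exactly those whose truth is preserved and reflected by embeddings. The local isomorphism $\beta$ is the embedding used for ($\Rightarrow$), and a term-evaluation map built from the shared type is the isomorphism produced for ($\Leftarrow$).

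\emph{($\Rightarrow$).} Suppose $\beta : \nb(\vec v) \to \nb(\vec u)$ is a local isomorphism mapping $\vec v$ to $\vec u$. Since $\nb(\vec v)$ is a substructure of $\TT$, the inclusion into $\TT$ is an embedding, and so is the composite $\nb(\vec v) \xrightarrow{\beta} \nb(\vec u) \hookrightarrow \TT'$. I will first show by induction on terms that, for every $\cV$-term $t[\vec x]$, the value $t^{\TT}(\vec v)$ lies in $\nb(\vec v)$ and $\beta(t^{\TT}(\vec v)) = t^{\TT'}(\vec u)$. This uses only closure of $\nb(\vec v)$ under functions and the fact that $\beta$ preserves functions. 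Atomic formulas are equalities $t_1 = t_2$ or predicate applications $P(\vec t)$. Since $\beta$ is injective and preserves and reflects predicates, atomic formulas hold at $\vec v$ in $\TT$ iff they hold at $\vec u$ in $\TT'$. A routine induction on Boolean connectives then extends this to all quantifier-free formulas, giving $\qftype(\TT,\vec v) = \qftype(\TT',\vec u)$.

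\emph{($\Leftarrow$).} Assume the two types are equal. Every element of $\nb(\vec v)$ has the form $t^{\TT}(\vec v)$ for some term $t$, because the substructure generated by a set is the set of values of terms in its generators. Likewise for $\nb(\vec u)$. Define $\beta(t^{\TT}(\vec v)) := t^{\TT'}(\vec u)$, and verify the following in order.
\begin{itemize}
\item \textbf{Well-defined and injective.} We have $t_1^{\TT}(\vec v) = t_2^{\TT}(\vec v)$ iff the formula $t_1 = t_2$ lies in the common type, iff $t_1^{\TT'}(\vec u) = t_2^{\TT'}(\vec u)$.
\item \textbf{Surjective.} Every element of $\nb(\vec u)$ is a term value at $\vec u$.
\item \textbf{Preserves functions.} $\beta(f(t_1^{\TT}(\vec v),\dots)) = \beta\big((f(t_1,\dots))^{\TT}(\vec v)\big) = f(\beta(t_1^{\TT}(\vec v)),\dots)$, by definition of $\beta$ on the term $f(t_1,\dots)$.
\item \textbf{Preserves and reflects predicates.} Again use membership of $P(t_1,\dots)$ in the common type.
\item \textbf{Maps $\vec v$ to $\vec u$.} This holds by taking $t = x_i$.
\end{itemize}
Hence $\beta$ is a local isomorphism and $\vec v \simeq \vec u$.

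The main obstacle is the well-definedness and injectivity step. The sets $\nb(\vec v)$ and $\nb(\vec u)$ may be infinite in principle, and equality of elements is only visible through the equality atoms $t_1 = t_2$. The argument therefore relies on the type containing all equalities between arbitrary terms, not just between variables. This is where quantifier-free formulas with equality, and not just atomic formulas in the variables, are essential. Since topologies are finite structures, the term-representation of generated substructures is standard and poses no further difficulty.
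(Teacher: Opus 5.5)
Your proof is correct and follows essentially the same route as the paper. The forward direction is an induction over quantifier-free formulas, using that the local isomorphism preserves and reflects the interpretation. The backward direction defines $\beta$ by $t(\vec v) \mapsto t(\vec u)$ and uses membership of the atoms $t = t'$ and $P(\vec t)$ in the common type for well-definedness, injectivity and preservation of structure; you simply spell out the checks that the paper leaves as easy to verify.
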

\begin{example} \label{exm:1-types}
    Consider a simplification of the example in \Cref{fig:pip}, where the top node and associated statements are omitted, and the beginning and ending nodes are exposed as constant symbols instead. There are 6 quantifier-free $1$-types that are realized in the topology family $\{ \TT_n \}_{n \ge 3}$: 
\begin{center}
\includegraphics[width=\textwidth,alt={The 6 quantifier-free 1-types that are realized in the simplified pipeline topology family with instance size at least 3.}]{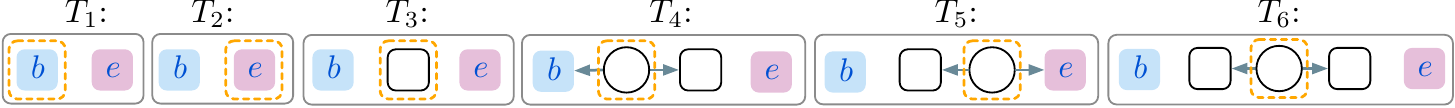}
\end{center}
where in each case, the tuple (of size 1) realizing the $1$-type is marked. \qed
\end{example}


Under certain finiteness conditions, a \emph{single} quantifier-free formula is sufficient to determine a quantifier-free $k$-type. We say a family $\cT$ of topologies is \emph{uniformly locally finite} (see \cite[Sec.~6.4]{Hodges1997}) if for any $d \in \NN_+$, there is a number $n \in \NN$ such that the size of any neighbourhood generated by $d$ nodes is bounded above by $n$, in any member of $\cT$. Practically, such topologies can encode networks where every process is connected to a bounded number of shared resources through bounded set of edge labels. Under the assumption of a finite vocabulary and uniform local finiteness, we can establish two important facts:
\begin{lemma} \label{lem:qftype-fin}
Given a topology family that is uniformly locally finite and has a finite vocabulary, we have:
\begin{description}[font=\normalfont]
\item [(L1)] There are finitely many quantifier-free types (or equivalently local symmetry equivalence classes), and each is definable by a single quantifier-free formula. Moreover, these formulas are computable.  
\item [(L2)] Every quantifier-free formula is equivalent to the disjunction of finitely many quantifier-free formulas, where each defines a distinct local symmetry equivalence class (as per the above item).   
\end{description}
\end{lemma}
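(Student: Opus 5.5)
The plan is to reduce both items to a finite ``atomic diagram'' description of a neighbourhood. Fix $k$ and $\vec{v} \in \TT^k$ for some $\TT \in \cT$. By uniform local finiteness there is a bound $n$ (depending only on $k$) on $|\nb(\vec{v})|$. Since $\nb(\vec{v})$ is the substructure generated by $\vec{v}$, every element is the value of some term $t(\vec{x})$ at $\vec{v}$. The first step is to show that a \emph{finite} set of terms $\Theta_k$, computable from $k$ and $n$, suffices to name every element of every $k$-generated neighbourhood. For this we take all terms of depth at most $n$. The generation process $A_0 = \{\vec{v}\}$, $A_{i+1} = A_i \cup \{ f(\vec{a}) : f \in \cV, \vec{a} \in A_i \}$ strictly increases until it stabilizes, and it can grow at most $n$ times. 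Hence depth $n$ terms reach every element, and the finite vocabulary makes $\Theta_k$ finite.

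Next, I would define the \emph{diagram formula} $\delta_{\vec{v}}(\vec{x})$ as the conjunction of all atomic and negated atomic formulas of three kinds: equalities $t = t'$, predicates $P(\vec{t})$, and function equations $f(\vec{t}) = t'$, with all terms drawn from $\Theta_k$, true of $\vec{v}$ in $\TT$. This is a finite quantifier-free formula. I then claim it defines the $\simeq$-class of $\vec{v}$. For the forward direction, if $\vec{u} \simeq \vec{v}$ then $\vec{u} \models \delta_{\vec{v}}$, since isomorphisms preserve quantifier-free formulas (this is also Proposition \ref{prop:qftype-sym}). For the converse, suppose $\vec{u} \in \TT'^k$ with $\TT' \in \cT$ satisfies $\delta_{\vec{v}}$. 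I map $t^{\TT}(\vec{v}) \mapsto t^{\TT'}(\vec{u})$ for $t \in \Theta_k$. The equality literals make this map well defined and injective. The function literals make it closed under operations, and the same depth bound applied in $\TT'$ shows it is onto $\nb(\vec{u})$. The predicate literals make it preserve and reflect predicates. So it is a local isomorphism, and the claim follows.

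Finiteness of the types then follows because there are only finitely many possible $\delta$'s, each being a set of literals over the finite $\Theta_k$. Computability comes from enumerating all consistent literal sets, or equivalently all $\cV$-structures of size at most $n$ generated by $k$ marked elements up to isomorphism, and writing down their diagrams. One subtlety is that we may want only the types realized in $\cT$; the statement asks only that the defining formulas be computable, so enumerating all candidates of size at most $n$ suffices. This gives (L1).

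For (L2), let $\phi(\vec{x})$ be quantifier-free. Its truth at $\vec{v}$ depends only on $\qftype(\TT,\vec{v})$, since $\phi$ is itself a member of the type. Hence over $\cT$, $\phi$ is equivalent to $\bigvee \{ \delta : \delta \models \phi \}$, where $\delta$ ranges over the finitely many type-defining formulas from (L1) that are realized and satisfy $\phi$. I expect the main obstacle to be the surjectivity and well-definedness in the converse direction for (L1). There one must check that the depth bound, which was derived from $\TT$, also applies in $\TT'$. This is where uniform local finiteness across the whole family, rather than in a single structure, is essential.
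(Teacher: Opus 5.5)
Your proposal is correct and follows essentially the same route as the paper: you write down the atomic diagram of the $k$-generated neighbourhood using terms of height bounded by the neighbourhood size, check that it pins down the local isomorphism class, get finiteness from uniform local finiteness and the finite vocabulary, and derive (L2) from the fact that quantifier-free definable sets are unions of $\simeq$-classes. The differences are small: the paper names each element of $\nb(\vec{v})$ by one minimal-height term, whereas your uniform set $\Theta_k$ of all terms of depth at most $n$ makes the finiteness count immediate; and surjectivity onto $\nb(\vec{u})$ already follows because the image of your map contains $\vec{u}$ and is closed under the function literals, so the depth bound in $\TT'$ (your ``main obstacle'') is not actually needed, and the formula defines the type over arbitrary $\cV$-structures, as the paper's appendix version states.
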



\begin{example}[Continuation of \Cref{exm:1-types}]
Assume every {\tt pos} is initialized to {\tt true}. Suppose the vocabulary $\VNode$ consists of unary function symbols $\mathtt{l}$ and $\mathtt{r}$ (for left and right), constant symbols $\mathtt{b}$ and $\mathtt{e}$, and the unary predicate symbol $\mathtt{isProc}$ indicating a process/circle node. Then, the quantifier-free formula $\alpha_1[\nv]$ defining $T_1$ from \Cref{exm:1-types} is 
\[(\nv = \mathtt{l}(\nv) = \mathtt{r}(\nv) = \mathtt{b}) \land (\nv \neq \mathtt{e}) \land (\mathtt{e} = \mathtt{l}(\mathtt{e}) = \mathtt{r}(\mathtt{e})) \land \neg \mathtt{isProc}(\nv) \land \neg \mathtt{isProc}(\mathtt{e}).\] 
\end{example}

The lemma is analogous to \cite[Theorem 3.15]{Libkin2004}, but the proof is slightly more involved due to the function symbols in $\cV$. Intuitively, uniform local finiteness ensures that it suffices to consider quantifier-free formulas where the terms are of a bounded height; combined with the finiteness of the vocabulary, it follows that there are finitely many such formulas.

Given an Ashcroft assertion $\forall \vec{\nv}.\, \varphi[\vec{\nv}]$, we \emph{normalize} as follows:
\begin{enumerate}
    \item By pushing negations inward, $\varphi[\vec{\nv}]$ is rewritten as a positive Boolean combination of formulas of two distinct forms $\phi_\Node[\vec{\nv}]$ and $\phi_\Data[\vec{\nv}]$, where $\phi_\Node$ is a quantifier-free $\VNode$-formula. 
    
    \item Using Lemma \ref{lem:qftype-fin}(L2), we replace each sub-formula of the form $\phi_\Node[\vec{\nv}]$ with a disjunction of $\alpha_r$'s, where $\alpha_1, \ldots, \alpha_m$ define the $\simeq$-equivalence classes in the topology family. This results in a formula $\varphi'[\vec{\nv}]$.

    \item We convert $\varphi'[\vec{\nv}]$ into disjunctive normal form. Observe that,  since $\alpha_r$ and $\alpha_{r'}$ ($r \neq r'$) define two disjoint equivalence classes, we have $\alpha_r \land \alpha_{r'} \equiv \bot$. Hence, each disjunct can have at most one appearance of some $\alpha_r$. To ensure that each disjunct has precisely one appearance of some $\alpha_r$, we do this process instead to the equivalent formula $\varphi'[\vec{\nv}] \land \bigvee_{r=1}^m \alpha_r$.


    \item After the previous step, we end up with a formula like $\bigvee_{r=1}^m \alpha_r[\vec{\nv}] \land \varphi'_r[\vec{\nv}]$. But, the $\VNode$-terms that appear under $\varphi'_r$ can be arbitrary. This means that the same node can be referenced in several different ways. As the last step, we put these $\VNode$-terms into normal forms so that we impose a unique way of referencing each node. Fortunately, one can argue (see Remark \ref{rem:1}) that for every $\simeq$-equivalence class, there is a bounded set of representative terms such that every node can be referenced in a unique way. Let $t_r^{(1)}, \ldots, t_r^{(n_r)}$ be a list of representative terms. We change all references to nodes in $\psi_r[\vec{\nv}]$ to use the representative terms and get the following formula as the result of the normalization process: $ \forall \vec{\nv}.\, \bigvee_{r=1}^m \alpha_r[\vec{\nv}] \land \phi_r[\sval(t_r^{(1)}[\vec{\nv}]), \ldots, \sval(t_r^{(n_r)}[\vec{\nv}])]$. 
\end{enumerate}

\begin{remark}\label{rem:1}
For any $k$-tuple $\vec{v}$ over $\TT \in \cT$, the neighbourhood $\nb(\vec{v})$ is of bounded size. Let the tuple $\vec{u}$ represent all nodes in $\nb(\vec{v})$ and for each $i = 1, \ldots|\nb(\vec{v})|$. A set of representative terms chooses a term for each $u_i$, which are by construction free of repetitions. Alternatively, equalities between different possible choices of representative terms appear as formulas in $\qftype(\TT, \vec{v})$, and the choice of the representative term can be viewed as choosing one of the equivalent terms. We call $\vec{t}$ a list of \emph{representative terms} for $\qftype(\TT, \vec{v})$ where $t^{(i)}(\vec{v}) = u_i$.
%
\end{remark}

\begin{theorem} \label{thm:nf}
    Under the condition in \Cref{lem:qftype-fin}, every Ashcroft assertion is equivalent over $\cT$ to a normalized Ashcroft assertion of the same width.
\end{theorem}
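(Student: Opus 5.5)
The plan is to follow the four normalization steps and check that each one preserves the truth value of the matrix at every tuple of every topology in $\cT$ under every global state. Equivalence over $\cT$ then follows by quantifying universally over $\vec{\nv}$. The width is unchanged throughout, because no step introduces or removes $\Node$-quantifiers.

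\textbf{Step 1.} Since $\varphi \in \QFAshcroft$ has no quantification over $\Node$, every atom is one of two kinds. It is either a $\VNode$-atom (an equality of $\Node$-terms or a $\VNode$-predicate), or an atom involving $\Data$-sorted subterms. The latter kind is built from $\VData$ symbols, possibly $\Data$-quantifiers, and terms $\sval(t)$ with $t$ a $\VNode$-term. I group the maximal subformulas not containing $\VNode$-atoms at top level as $\phi_\Data$ pieces. Pushing negations to $\VNode$-atoms and absorbing negations into $\phi_\Data$ gives the positive combination. This step is a purely propositional equivalence.

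\textbf{Steps 2 and 3.} I apply \Cref{lem:qftype-fin}(L2) to each $\phi_\Node$, which gives equivalence over $\cT$. I then conjoin with $\bigvee_r \alpha_r$, which is valid over $\cT$ by (L1), since every $k$-tuple realizes one of the finitely many quantifier-free types. Distributing into DNF and discarding disjuncts containing $\alpha_r\land\alpha_{r'}$ with $r\ne r'$ is also an equivalence, because these classes are disjoint by \Cref{prop:qftype-sym}. Collecting disjuncts by their unique $\alpha_r$ gives $\bigvee_r \alpha_r \land \varphi'_r$.

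\textbf{Step 4.} This is the main obstacle. Fix $r$, a tuple $\vec{v}$ realizing $\alpha_r$, and representative terms $t_r^{(1)},\dots,t_r^{(n_r)}$ as in \Cref{rem:1}. Any $\VNode$-term $s[\vec{\nv}]$ occurring in $\varphi'_r$ evaluates to some node of $\nb(\vec{v})$, because the neighbourhood is the generated substructure. Which representative it equals is determined by $\qftype(\TT,\vec{v})$: the equality $s = t_r^{(i)}$ holds at $\vec{v}$ if and only if it lies in the type. By \Cref{prop:qftype-sym} the type is the same for every tuple satisfying $\alpha_r$, so the index $i$ depends only on $r$ and not on $\vec{v}$ or $\TT$. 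Hence, under the guard $\alpha_r$, replacing each $\sval(s)$ by $\sval(t_r^{(i)})$ preserves the interpretation for every global state $\val$.

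After the replacement, the only node-dependence in $\varphi'_r$ is through $\sval(t_r^{(1)}),\dots,\sval(t_r^{(n_r)})$. Abstracting these occurrences as fresh data variables gives a $\VData$-formula $\phi_r$ with $\varphi'_r \equiv \phi_r[\sval(t_r^{(1)}),\dots]$ under $\alpha_r$. The care needed here is to argue that the choice of $i$ is uniform, which comes from definability of the class by $\alpha_r$ (L1). One also has to check that residual $\VNode$-atoms inside $\varphi'_r$ have already been eliminated. They have: after Step 3 they appear only as the $\alpha_r$ themselves, or they are implied true or false by $\alpha_r$ and can be replaced by $\top$ or $\bot$.

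Combining the steps shows that the matrices are equivalent at every $(\TT,\vec{v},\val)$ with $\TT\in\cT$, and hence so are the universally quantified assertions.
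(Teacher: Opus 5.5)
Your proposal is correct and follows the paper's own argument. The paper gives no separate proof of \Cref{thm:nf}; it justifies the theorem through the four normalization steps of Section~\ref{sec:nai}, using (L2) of \Cref{lem:qftype-fin} for Step~2, exhaustiveness and disjointness of the $\alpha_r$ for Step~3, and Remark~\ref{rem:1} for Step~4, and your uniformity argument for replacing each $\sval(s)$ by a fixed $\sval(t_r^{(i)})$ under the guard $\alpha_r$ spells out what the paper leaves to that remark.

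One small imprecision, which the paper's Step~1 shares: \QFAshcroft{} permits $\Data$-quantifiers, so a $\VNode$-atom can sit inside one and Step~1 is not purely propositional as you claim. Such atoms can be pulled out by a case split, which is sound because $\VNode$-atoms contain no $\Data$-sorted variables.
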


\begin{example}[Continuation of \Cref{exm:1-types}] \label{exm:nf}
    Let $\alpha_1[\nv], \ldots, \alpha_6[\nv]$ be quantifier-free formulas defining the types $T_1, \ldots, T_6$. Consider the given Ashcroft assertion
    \[
        \forall \nv. (\mathtt{isProc}(\nv) \land \mathtt{l}(\nv) \neq \mathtt{b}) \to \texttt{ready}(\sval(\mathtt{r}(\nv))) \to \texttt{data}(\sval(\mathtt{b})) < \texttt{data}(\sval(\mathtt{r}(\nv))).
    \]

After step 1, a $\phi_\Node$ subformula in this can be $\neg (\mathtt{isProc}(\nv) \land \mathtt{l}(\nv) \neq \mathtt{b})$. It is satisfied in the topology family by a union of equivalence classes, which in this case correspond to $T_1, \dots, T_4$, and hence this subformula is replaced in step 2 with $\bigvee_{r=1}^4 \alpha_r[\nv]$.
After step 3, the formula $\varphi'_5[\nv]$, corresponding to $T_5$ (from \Cref{exm:1-types}), is 
 $ \texttt{ready}(\sval(\mathtt{r}(\nv))) \to \texttt{data}(\sval(\mathtt{b})) < \texttt{data}(\sval(\mathtt{r}(\nv)))$. 
If we choose $(\nv, \mathtt{l}(\nv), \mathtt{b}, \mathtt{e})$ as the representative terms for $T_5$, then in step 4 this is rewritten to $\texttt{ready}(\sval(\mathtt{e})) \to \texttt{data}(\sval(\mathtt{b})) < \texttt{data}(\sval(\mathtt{e}))$. 
However, if we choose $(\nv, \mathtt{l}(\nv), \mathtt{b}, \mathtt{r}(\nv))$ as the representative terms, then $\varphi'_5[\nv]$ stays as it is.
\qed
 \end{example}

%
%

\subsection{Hoare Triples From a Normalized Ashcroft Invariant}\label{sec:ht}
A normalized Ashcroft assertion is basically a template invariant of width $k$. To instantiate the template, we need to search for interpretations for its uninterpreted symbols. Given a specific program, the verification conditions for the validity of the invariant can be used as constraints in this search. Below, we define precisely what these verification conditions are in the form of Hoare triples obtained from the normalized Ashcroft invariant for a specific program. 

\begin{wrapfigure}[10]{r}{0.5\textwidth}\vspace{-14pt}
\begin{center}
\fbox{
\begin{minipage}{0.47\textwidth}
$\forall \vec{w} \in \TT^k, \forall v_0, w \in \TT$:\vspace{-3pt}
\begin{align}
\{\bigwedge_{v \in \TT} \init(v)\} &\ {\epsilon}\ \{\varphi[\vec{w}]\}  \label{htri:init} \tag{\sc Init}\\  
\{\bigwedge_{\vec{v} \in \TT^k} \varphi[\vec{v}]\} &\ {v_0}\ \{\varphi[\vec{w}]\} \label{htri:cont} \tag{\sc Cont}\\  
\{\bigwedge_{\vec{v} \in \TT^k} \varphi[\vec{v}]\} &\ {\epsilon}\ \{\neg \error(w)\} \label{htri:err} \tag{\sc Safe}
\end{align}
\end{minipage}}\vspace{-7pt}
\caption{Set $H_\Phi^{\TT}$ of Hoare triples for $\Phi$ \label{fig:hts}}
\end{center}
\end{wrapfigure}
Fix vocabularies $\VNode$ and $\VData$. For any formula $\varphi[\nv_1, \ldots, \nv_k]$ in \QFAshcroft{}, topology $\TT$, and tuple $\vec{v}$ in $\TT^k$, expanding the vocabulary $\VNode$ with parameters from $\TT$, we can obtain a closed \QFAshcroft{} formula $\varphi[\vec{v}]$ by substituting $\vec{v}$ for the free variables. Then, for a given topology $\TT$, any Ashcroft assertion $\Phi := \forall \vec{\nv}.\, \varphi[\vec{\nv}]$ is equivalent to a conjunction $\bigwedge_{\vec{v} \in \TT^k} \varphi[\vec{v}]$, and we can relate $\Phi$ to a set of Hoare triples in Figure \ref{fig:hts}.

In \cite[Proposition 6.3]{pps}, valid Hoare triples are obtained from an Ashcroft invariant in the same style. We observe that the \emph{converse} also holds.
\begin{proposition}\label{prop:ashcroft-htri}
    Let $\prog$ be a program over a finite topology $\TT$. For any $\Phi$, the Hoare triples in $H_{\Phi}^{\TT}$ are valid if \textbf{and only if} $\Phi$ is an Ashcroft invariant for $\prog$.
\end{proposition}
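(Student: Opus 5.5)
The plan is to unfold the Hoare-triple semantics and match each family of triples in $H_\Phi^{\TT}$ with one of the three defining conditions of an Ashcroft invariant: initiation, consecution (inductivity under every transition $\xrightarrow{v_0}$), and sufficiency for safety. Over a finite topology $\TT$, the assertion $\Phi = \forall \vec{\nv}.\,\varphi[\vec{\nv}]$ holds at a global state $\val$ exactly when $\val \models \bigwedge_{\vec{v}\in\TT^k}\varphi[\vec{v}]$; the quantifier over $\Node$ ranges over the finite set $\TT$, so this conjunction is a genuine finite formula. A triple $\{P\}\,v_0\,\{Q\}$ is valid iff for all $\val \xrightarrow{v_0} \val'$ with $\val\models P$ we have $\val'\models Q$. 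The empty statement $\epsilon$ is the identity relation, so $\{P\}\,\epsilon\,\{Q\}$ says $P \Rightarrow Q$ over global states.

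For the ``if'' direction, which is essentially \cite[Proposition 6.3]{pps}, assume $\Phi$ is an Ashcroft invariant. Initiation gives that every $\val$ satisfying $\init$, i.e.\ $\bigwedge_v \init(v)$, satisfies $\Phi$, hence each instance $\varphi[\vec{w}]$; this yields (\textsc{Init}). Inductivity gives that from $\val\models\Phi$ and $\val\xrightarrow{v_0}\val'$ we get $\val'\models\Phi$, hence each $\varphi[\vec{w}]$; this yields (\textsc{Cont}). Sufficiency gives that $\val\models\Phi$ excludes $\error$ at every node $w$; this yields (\textsc{Safe}).

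For the ``only if'' direction, assume all triples are valid. Quantifying (\textsc{Init}) over all $\vec{w}\in\TT^k$ and conjoining the postconditions shows $\bigwedge_v\init(v) \Rightarrow \Phi$. Similarly, for each fixed $v_0$, conjoining the postconditions of (\textsc{Cont}) over all $\vec{w}$ shows that $\Phi$ is preserved by $\xrightarrow{v_0}$. Since $\to$ is the union of the $\xrightarrow{v_0}$, $\Phi$ is preserved by $\to$. Conjoining (\textsc{Safe}) over $w$ shows $\Phi \Rightarrow \neg\bigvee_w \error(w)$, i.e.\ no state satisfying $\Phi$ satisfies $\error$. Together these are exactly inductiveness plus sufficiency.

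The only delicate step is making sure the notion of Ashcroft invariant recalled in \refapp{app:ash} coincides with these three conditions. In particular, the conjunction-over-postconditions argument relies on the precondition being independent of $\vec{w}$, which holds here. I would also check that safety is phrased via states satisfying $\Phi$, not merely via reachable states; if the definition instead speaks of reachability, the preservation argument above still gives that all reachable states satisfy $\Phi$, so nothing changes. Finiteness of $\TT$ is used only to justify reading $\Phi$ as a finite conjunction.
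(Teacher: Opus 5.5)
Your proof is correct and follows the argument the paper intends. The paper writes out no proof here: it cites Proposition 6.3 of its earlier work for the direction from invariant to valid triples and simply remarks that the converse holds. Your matching of the three triple families against the Initialization, Continuation, and Safety clauses, conjoining postconditions over $\vec{w}$, $v_0$, and $w$ because the preconditions do not depend on them, and reading $\Phi$ over the finite $\TT$ as $\bigwedge_{\vec{v}\in\TT^k}\varphi[\vec{v}]$, is exactly the routine unfolding that remark presupposes.
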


We say a Hoare triple is in normal form if its precondition and postcondition are $\Assertion{\TT}$ formulas for some $\VNode$-structure $\TT$.
\begin{proposition} \label{prop:htri-nf}
    Any Hoare triple in $H_{\Phi}^{\TT}$ is equivalent to one in normal form.
\end{proposition}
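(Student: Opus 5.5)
The idea is to evaluate every $\VNode$-subterm away using the concrete, finite topology $\TT$. Once the vocabulary is expanded with parameters from $\TT$, every closed $\QFAshcroft$ formula becomes an $\Assertion{\TT}$ formula, and so does every finite Boolean combination of them. We then apply this to the pre- and postconditions of the three kinds of triples in $H_\Phi^{\TT}$.

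\textbf{Step 1: closed formulas.} Fix a closed formula $\psi$ in $\QFAshcroft$ over $\VNode$ expanded with constants for the elements of $\TT$; for instance $\psi = \varphi[\vec{v}]$ with $\vec{v} \in \TT^k$. Since $\psi$ has no free variables and no quantification over $\Node$, every $\Node$-sorted term in it is closed. Such a term denotes a unique element of the finite structure $\TT$, so we replace it by that element's constant.

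Next, each atomic formula purely over $\VNode$ (an equality of nodes or a node predicate applied to closed terms) has a fixed truth value in $\TT$. We replace it by $\top$ or $\bot$.

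What remains is a $\VData$-formula, possibly with $\Data$-quantifiers, whose $\Data$-terms are built from $\VData$-symbols and subterms $\sval(u)$ with $u \in \TT$. Abstracting the distinct $\sval(u)$ gives a formula $\phi[\sval(\vec{u})]$ with $\phi$ a $\VData$-formula, which is in $\Assertion{\TT}$. This rewriting is sound because each step preserves truth in every global state $\val : \TT \to \DD$ interpreting $\sval$, and the topology is fixed.

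\textbf{Step 2: closure of the class.} $\Assertion{\TT}$ formulas are closed under finite conjunction and negation: concatenate the tuples $\vec{u}$ and combine the $\VData$-formulas. Since $\TT$ is finite, each precondition is a finite combination of Step-1 formulas, and so is itself equivalent to a normal-form assertion. This covers $\bigwedge_{v\in\TT}\init(v)$ and $\bigwedge_{\vec{v}\in\TT^k}\varphi[\vec{v}]$. The postconditions $\varphi[\vec{w}]$ and $\neg\error(w)$ are handled in the same way. Here $\init(v)$ and $\error(v)$ are already of the required form by assumption.

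\textbf{Step 3: the command.} The command in each triple is either $\epsilon$ or a node $v_0$, whose semantics $\sem{v_0}$ is a $\Transition{\TT}$ formula. This part is unchanged by the rewriting.

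Equivalence of the triples then follows because pre- and postconditions were replaced by formulas satisfied by exactly the same global states. Validity of a Hoare triple depends only on these state sets and on the transition relation.

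\textbf{Main obstacle.} I expect the main difficulty to be the precise bookkeeping in Step 1. We must make sure that mixed terms are genuinely separated by sort: the only bridge from $\Node$ to $\Data$ is $\sval$, so no $\Data$-term can occur inside a $\Node$-term. This should follow by induction on terms from the two-sorted typing of the vocabulary $\VNode \uplus \VData \uplus \{\sval\}$. Once that is in place, the remaining arguments are routine inductions on formula structure.
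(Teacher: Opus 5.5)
Your proposal is correct and follows essentially the same route as the paper. After substituting $\vec{v}$, node-only subformulas evaluate to $\top$ or $\bot$ in the concrete topology $\TT$, each $\sval(t(\vec{v}))$ is rewritten as $\sval(u)$ for the node $u = t(\vec{v})$, and the finite conjunction over $\TT^k$ stays in $\Assertion{\TT}$. The only difference is that you evaluate atom by atom instead of first splitting $\varphi[\vec{v}]$ into a Boolean combination of node-only and data-only parts, which has the small advantage of directly covering node atoms that occur under $\Data$-quantifiers.
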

In particular, for a normalized Ashcroft assertion $\forall \vec{\nv}.\, \varphi[\vec{\nv}]$, we have
\[ \varphi[\vec{v}] \equiv \phi_{r}[\sval(t_{r}^{(1)}(\vec{v})), \ldots, \sval(t_{r}^{(n_{r})}(\vec{v}))] \ \ \ \ \ (1\le r \le m) \]
where $r$ is the unique number such that $T_r = \qftype(\TT, \vec{v})$, also denoted by $r(\vec{v})$.


For every node $v$, let $x_v$ be a fresh $\Data$-sorted variable; it is fine to overlap names between different topologies. We define a substitution operation on any $\Assertion{\TT}$ or $\Transition{\TT}$ formula $\varphi$: let $\subst(\varphi)$ be the $\VData$-formula obtained from $\varphi$ by substituting $\sval(v)$ with $x_v$ and $\sval'(v)$ with $x'_v$ for all $v \in \TT$. In addition, we use $\varphi'$ to denote the formula obtained from $\varphi$ by replacing every $\sval$ with $\sval'$. The following proposition relates the validity of a Hoare triple in normal form to an entailment in the background theory $\DD$ for program data. We define $\gform_{\TT}\sem{v}$ as the formula $\sem{v} \land \bigwedge_{u \in \TT \setminus \nb(v)} (\sval(u) = \sval'(u))$.



\begin{proposition} \label{prop:valid-entail}
    Let $\varphi$ and $\psi$ be $\Assertion{\TT}$ formulas.
    \begin{itemize}
        \item For any $a \in \TT$, the Hoare triple $\htri{\varphi}{a}{\psi}$ is valid if and only if $\DD \models \subst(\varphi \land \gform_{\TT}\sem{a} \Rightarrow \psi')$.
        \item The Hoare triple $\htri{\varphi}{\epsilon}{\psi}$ is valid if and only if $\DD \models \subst(\varphi \Rightarrow \psi)$.
    \end{itemize}
\end{proposition}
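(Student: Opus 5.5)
The plan is to show that $\subst$ is a bijective renaming between global states (and pairs of global states) of $\TT$ and valuations of the $\Data$-sorted variables $\{x_v, x'_v\}_{v \in \TT}$, and that it preserves truth. Concretely, a global state $\val : \TT \to \DD$ corresponds to the assignment $x_v \mapsto \val(v)$, and a pair $(\val, \val')$ corresponds to $x_v \mapsto \val(v)$, $x'_v \mapsto \val'(v)$. The central auxiliary claim is this: for any $\Assertion{\TT}$ or $\Transition{\TT}$ formula $\chi$, we have $(\TT, \val, \val') \models \chi$ iff $\DD \models \subst(\chi)$ under the corresponding assignment. I would prove it by structural induction on $\chi$. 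It reduces to the base case: every occurrence of $\sval(u)$ or $\sval'(u)$ in $\chi$ denotes $\val(u)$ or $\val'(u)$, which is exactly the value of $x_u$ or $x'_u$. The remaining connectives and $\Data$-quantifiers commute with the substitution, since the fresh variables $x_v$ are chosen not to clash with bound variables. Since $\TT$ is finite, $\subst(\chi)$ is an honest finite $\VData$-formula.

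Next I would unfold the semantics of the Hoare triples. By the transition-relation definition, $\val \xrightarrow{a} \val'$ holds iff $(\val|_{\nb(a)}, \val'|_{\nb(a)}) \in \sem{a}$ and $\val, \val'$ agree outside $\nb(a)$. Because $\sem{a}$ is defined by a $\Transition{\TT}$ formula over nodes of $\nb(a)$, this condition is equivalent to $(\TT,\val,\val') \models \gform_{\TT}\sem{a}$. The triple $\htri{\varphi}{a}{\psi}$ is valid iff for all $\val, \val'$, if $\val \models \varphi$ and $\val \xrightarrow{a} \val'$ then $\val' \models \psi$. Since $\val' \models \psi$ iff $(\val,\val') \models \psi'$, validity amounts to $(\TT,\val,\val') \models \varphi \land \gform_{\TT}\sem{a} \Rightarrow \psi'$ for all pairs. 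By the auxiliary claim and surjectivity of the state-to-assignment correspondence, this is exactly $\DD \models \subst(\varphi \land \gform_{\TT}\sem{a} \Rightarrow \psi')$ as a universal closure. The $\epsilon$ case is the same argument with a single state, since $\epsilon$ is the identity relation.

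The main obstacle is bookkeeping rather than depth. First, I must check that the assignment correspondence is a bijection, so that both directions of ``for all states'' and ``for all assignments'' match; this needs the variables $x_v$ to be distinct for distinct nodes, which holds by construction. Second, I must check that any variables of $\subst(\chi)$ not occurring in it, for example $x'_u$ in the $\epsilon$ case, are harmless. I would handle this by noting that validity modulo $\DD$ is taken over all assignments, and unused variables do not affect truth.
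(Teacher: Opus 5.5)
Your proposal is correct. The paper gives no proof of this proposition (the appendix proves the neighbouring normal-form result for Hoare triples, but not this one) and treats it as an immediate unfolding of the definitions. Your argument supplies exactly that routine verification: the bijection between (pairs of) global states and assignments to the variables $x_v, x'_v$, plus the observation that $\val \xrightarrow{a} \val'$ holds exactly when $(\TT,\val,\val') \models \gform_{\TT}\sem{a}$, which uses that the formula for $\sem{a}$ mentions only nodes of $\nb(a)$.
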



\subsection{CHC Encoding for Normalized Ashcroft Invariants}


We start by using the observations from Propositions \ref{prop:ashcroft-htri} and \ref{prop:valid-entail} to set up a search for an Ashcroft invariant for a single program in the topology as a CHC encoding. From the normalization procedure (\Cref{thm:nf}), we collect the quantifier-free $k$-types $T_r$ and their corresponding representative terms $t^{(1)}_r, \ldots, t^{(n_r)}_r$, for each $T_r$ ($ 1 \le r \le m$). Let $\prog_i$ be a program over a topology $\TT_i \in \cT$. The CHC system for searching for an Ashcroft invariant of width $k$ for $\prog_i$ is based on unknown predicates of the form $\Inv_i : \DD^{n_i} \to \bool$ ($1 \le i \le m$). 

The clauses are listed in Figure \ref{fig:chc}. For any $\TT \in \cT$ and tuple $\vec{v} \in \TT^k$, the tuple $\vec{x}_{\vec{v}}$ abbreviates $(x_{v_1}, \ldots, x_{v_k})$ and the tuple $U(\vec{v})$ lists the neighbours of $\vec{v}$ in the same order as $t_{r(\vec{v})}$, i.e., $U(\vec{v}) = (t_j^{(1)}(\vec{v}), \ldots, t_j^{(n_j)}(\vec{v}))$, where $j = r(\vec{v})$.

The clauses are based on \Cref{prop:valid-entail} for the validity of the Hoare triples $H_{\Phi}^{\TT_i}$ (see \Cref{fig:hts}), which certify that the Ashcroft assertion $\Phi$ is an invariant for the program over $\TT_i$.


\begin{figure}[t]
\begin{center}
\fbox{
\begin{minipage}{\textwidth}
$\forall \vec{w} \in \TT^k, \forall v_0, w \in \TT_i$:\vspace{-15pt}
\begin{align}
    \bigwedge_{v \in \TT_i} \subst(\init_i(v)) &\implies \Inv_{r(\vec{w})}(\vec{x}_{U(\vec{w})}) \label{chc:init} \\
    \bigwedge_{\vec{v} \in \TT_i^k} \Inv_{r(\vec{v})}(\vec{x}_{U(\vec{v})}) \land \subst(\gform_{\TT_i}\sem{ v_0 }_i) &\implies 
    \Inv_{r(\vec{w})}(\vec{x}'_{U(\vec{w})}) \label{chc:cont} \\
    \bigwedge_{\vec{v} \in \TT_i^k} \Inv_{r(\vec{v})}(\vec{x}_{U(\vec{v})}) &\implies \subst(\neg \error_i(w)) \label{chc:safe}
\end{align}
\end{minipage}}\vspace{-3pt}
\caption{Constrained Horn Clauses for a single program over topology $\TT_i$ \label{fig:chc}}
\end{center}
\end{figure}

\begin{theorem} \label{thm:chc-one}
    The CHC system of Figure \ref{fig:chc} is (syntactically) satisfiable iff an Ashcroft invariant of width $k$ for $\tp{\TT_i, \sem{-}_i}$ w.r.t. $\tp{\init_i, \error_i}$ exists. Moreover, the invariant has the form $\forall \vec{\nv}.\, \bigvee_{r=1}^m \alpha_r[\vec{\nv}] \land \Inv_r[\sval(t_r^{(1)}[\vec{\nv}]), \ldots, \sval(t_r^{(n_r)}[\vec{\nv}])]$.  
\end{theorem}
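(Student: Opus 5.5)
The plan is to prove the two directions separately, using \Cref{thm:nf}, \Cref{prop:ashcroft-htri}, \Cref{prop:htri-nf} and \Cref{prop:valid-entail} as the bridge between Ashcroft invariants and the clauses of Figure~\ref{fig:chc}. The central observation is the equivalence displayed after \Cref{prop:htri-nf}. For a normalized assertion, $\varphi[\vec v]$ is equivalent, over $\TT_i$, to $\phi_{r(\vec v)}$ applied to $\sval(U(\vec v))$. Applying $\subst$ turns this into $\phi_{r(\vec v)}(\vec x_{U(\vec v)})$. Hence each Hoare triple in $H_\Phi^{\TT_i}$ corresponds clause-by-clause to an instance of (\ref{chc:init}), (\ref{chc:cont}) or (\ref{chc:safe}), once each unknown $\Inv_r$ is interpreted by $\phi_r$.

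\textbf{($\Leftarrow$) Invariant to CHC solution.} Suppose an Ashcroft invariant $\Phi$ of width $k$ exists for $\tp{\TT_i,\sem{-}_i}$.
\begin{enumerate}
\item By \Cref{thm:nf}, replace $\Phi$ by an equivalent normalized assertion $\forall\vec\nv.\,\bigvee_r \alpha_r\land\phi_r[\ldots]$ of the same width. The normalization must use the same types $T_r$ and representative terms $t_r^{(j)}$ that were fixed when building the CHC system. This is legitimate because step~4 of normalization may choose any list of representative terms, so we choose the fixed ones.
\item Define the solution $\Inv_r := \phi_r$, reading $\phi_r$ as a $\VData$-formula in $n_r$ variables.
\item By \Cref{prop:ashcroft-htri}, every triple in $H_\Phi^{\TT_i}$ is valid.
\item By \Cref{prop:valid-entail}, each valid triple gives a $\DD$-valid entailment after $\subst$. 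Rewriting $\varphi[\vec v]$ as $\Inv_{r(\vec v)}(\vec x_{U(\vec v)})$ yields exactly the corresponding clause.
\end{enumerate}
For (\ref{chc:cont}), the postcondition $\varphi[\vec w]'$ becomes $\Inv_{r(\vec w)}(\vec x'_{U(\vec w)})$, since priming commutes with the rewriting.

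\textbf{($\Rightarrow$) CHC solution to invariant.} Given a syntactic solution $\Inv_r\mapsto\psi_r$, define $\Phi$ to be the formula in the theorem statement with $\Inv_r$ instantiated by $\psi_r$. This $\Phi$ is a legitimate Ashcroft assertion of width $k$, because $\alpha_r$ is a quantifier-free $\VNode$-formula and $\psi_r$ is a $\VData$-formula in the terms $\sval(t_r^{(j)}[\vec\nv])$. Running the previous argument backwards:
\begin{enumerate}
\item For each $\vec v\in\TT_i^k$, exactly one $\alpha_r$ holds, namely $r=r(\vec v)$, by \Cref{lem:qftype-fin} and \Cref{prop:qftype-sym}.
\item Hence $\varphi[\vec v]$ is equivalent to $\psi_{r(\vec v)}(\sval(U(\vec v)))$.
\item Each clause is then, via \Cref{prop:valid-entail}, the validity of the matching Hoare triple.
\item \Cref{prop:ashcroft-htri} then makes $\Phi$ an Ashcroft invariant.
\end{enumerate}
This direction also proves the ``moreover'' part about the form of the invariant.

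\textbf{Main obstacle.} The delicate step is the bookkeeping that identifies $\subst(\varphi[\vec v])$ syntactically with $\Inv_{r(\vec v)}(\vec x_{U(\vec v)})$. Two things must be checked:
\begin{itemize}
\item Different representative-term lists that denote the same nodes give the same variables $x_u$. This holds because $\subst$ works on nodes of $\TT_i$, not on terms, and the $t_r^{(j)}(\vec v)$ are distinct nodes by Remark~\ref{rem:1}.
\item The closed-formula evaluation $\varphi[\vec v]$ over the expanded vocabulary really collapses the disjunction over $r$ to a single disjunct. This uses the fact that the $\alpha_r$ are closed $\VNode$-sentences with parameters, each true or false in $\TT_i$.
\end{itemize}
I would make this a small auxiliary lemma, proved by induction on terms, and invoke it in both directions.
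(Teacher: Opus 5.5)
Your proposal is correct and takes the same route the paper intends. You normalize the invariant via \Cref{thm:nf} using the fixed representative terms, interpret each $\Inv_r$ as $\phi_r$, and pass between the triples of $H_\Phi^{\TT_i}$ and the clauses of Figure~\ref{fig:chc} via \Cref{prop:ashcroft-htri,prop:htri-nf,prop:valid-entail} in both directions, with the converse resting on the \emph{syntactic} solution $\psi_r$ being pluggable into the template, which also yields the ``moreover'' clause.
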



%
%

\section{Compositional Proof Search for an Ashcroft Invariant}\label{sec:encoding}
In Section \ref{sec:chc-single}, we established how to effectively search for ingredients of an Ashcroft invariant for a single program. Combining this with the observation of Theorem \ref{thm:ashcroft-basis}, one can immediately see how this can be extended to an entire parameterized program. Intuitively, we list all the constraints for all the programs in the basis as stated in Theorem \ref{thm:ashcroft-basis}. 

\begin{theorem} \label{thm:chc-param}
    Let $\cP = \{ \tp{\TT_i, \sem{-}_i} \}_{i\in\cI}$ be a parameterized program whose topologies satisfy the finiteness assumptions. Let $\cS = \{ \tp{\init_i, \error_i} \}_{i \in \cI}$ be a parameterized specification. For any $k \in \NN$, given a finite subset of $\cT$ that constitutes a basis of rank $k + 1$, one can effectively construct a CHC system that has a solution if and only if there is an Ashcroft invariant of width $k$ for $\cP$ w.r.t. $\cS$.
\end{theorem}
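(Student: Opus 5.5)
The construction takes the union over the finite basis of the single-program CHC systems of \Cref{fig:chc}, all sharing one set of unknowns. Concretely, I would first use \Cref{lem:qftype-fin}(L1) to compute the finitely many quantifier-free $k$-types $T_1, \ldots, T_m$ realized in $\cT$, their defining formulas $\alpha_1, \ldots, \alpha_m$, and, by Remark~\ref{rem:1}, a fixed list of representative terms $t_r^{(1)}, \ldots, t_r^{(n_r)}$ for each $T_r$. This list is chosen \emph{once}, independently of the topology. For each $j \in \cJ$ I then write down the clauses of \Cref{fig:chc} over $\TT_j$ with the shared unknowns $\Inv_1, \ldots, \Inv_m$, and let the system $\cC$ be the union of these. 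Since each $\TT_j$ is finite, there are finitely many quantifier instances. Each clause is computable, because $r(\vec{v})$ and $U(\vec{v})$ can be evaluated on the finite structure $\TT_j$, and $\subst$ is syntactic. Hence $\cC$ is finite and effectively constructible.

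\textbf{Soundness.} Suppose $\cC$ has a solution. I form $\Phi := \forall \vec{\nv}.\, \bigvee_{r} \alpha_r[\vec{\nv}] \land \Inv_r[\sval(t_r^{(1)}[\vec{\nv}]), \ldots]$, as in \Cref{thm:chc-one}. For each $j$, the clauses from $\TT_j$ are exactly the system of \Cref{thm:chc-one} for $\prog_j$. The argument of that theorem, via \Cref{prop:valid-entail,prop:htri-nf,prop:ashcroft-htri}, then shows that the Hoare triples $H_\Phi^{\TT_j}$ are valid, so $\Phi$ is an Ashcroft invariant for $\prog_j$. The key point is that $\Phi$ is the same formula for every $j$, because the unknowns are shared and the types and representative terms are global. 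Thus $\Phi$ is an invariant for $\{\prog_j\}_{j\in\cJ}$. By \Cref{thm:ashcroft-basis} with a basis of rank $k+1$, it is an invariant for all of $\cP$.

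\textbf{Completeness.} Suppose $\Phi$ is an Ashcroft invariant of width $k$ for $\cP$. By \Cref{thm:nf}, I replace it with an equivalent normalized assertion $\forall \vec{\nv}.\, \bigvee_r \alpha_r \land \phi_r[\ldots]$ that uses the fixed representative terms; equivalence over $\cT$ preserves invariance. Setting $\Inv_r := \phi_r$, with argument positions matched to $t_r^{(i)}$, gives a candidate interpretation. For each $j$, $\Phi$ is an invariant for $\prog_j$, so by \Cref{prop:ashcroft-htri} and \Cref{prop:valid-entail} every clause instantiated over $\TT_j$ is valid modulo $\DD$. This needs the identity $\varphi[\vec{v}] \equiv \phi_{r(\vec{v})}[\sval(U(\vec{v}))]$ noted after \Cref{prop:htri-nf}. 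Hence the $\phi_r$ form a syntactic solution to $\cC$.

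\textbf{Main obstacle.} The delicate step is making the representative terms and the positional encoding $\vec{x}_{U(\vec{v})}$ uniform across different topologies. The same $\Inv_r$ must mean the same thing in every $\TT_j$, and in every $\TT_i$ outside the basis. I would argue this via \Cref{prop:qftype-sym}. If $\vec{v} \in \TT_j^k$ and $\vec{u} \in \TT_i^k$ realize the same type $T_r$, then a local isomorphism $\beta$ maps $t_r^{(l)}(\vec{v})$ to $t_r^{(l)}(\vec{u})$ for each $l$, since embeddings commute with terms. Consequently, the instances of $\phi_r$ agree up to $\beta^*$. This is exactly what lets a single $\Inv_r$ serve all programs, and what lets normalization be taken independent of the particular program.
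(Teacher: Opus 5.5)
Your proposal is correct and follows the paper's intended route. You take the union, over the finite basis, of the single-program clauses of \Cref{fig:chc} with shared unknowns $\Inv_1, \ldots, \Inv_m$, and get the equivalence from \Cref{thm:chc-one} (normalization via \Cref{thm:nf} with \Cref{prop:ashcroft-htri,prop:valid-entail}) combined with the cut-off \Cref{thm:ashcroft-basis}; for effectiveness, note that $T_1, \ldots, T_m$ are obtained by enumerating $k$-tuples in the finitely many basis members, since a basis of rank $k+1$ is also one of rank $k$ and embeddings preserve quantifier-free types, so these are exactly the types realized in $\cT$.
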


There are however two problems with this process as a practical algorithm: (1) Practically no user-friendly topology family has a finite basis to satisfy the conditions of Theorem \ref{thm:ashcroft-basis}, and (2) Even for those that do, this process may produce a CHC system with a very high amount of {\em redundancy}.  In \cite{pps}, the first problem is addressed through a procedure that augments topologies with all the missing small programs to form a finite basis. 
We use this idea and turn it into a reasonable algorithm that permits the user to specify the input parameterized program using a natural family of topologies (e.g.  {\em rings}) that does not contain a finite basis, and generates the CHC encoding by efficiently deciding what small programs to add and which constraints from Figure \ref{fig:chc} to include.  

\subsection{Subprograms and Downward Closure}\label{sec:closure}

    Given a program $\prog = \tp{\TT, \sem{-}}$, for any substructure $\SS$ of $\TT$, the \emph{subprogram} of $\prog$ over $\SS$ is the program $\prog|_{\SS} = \tp{\SS, \sem{-}|_{\SS}}$.
Note that a safety specification for $\prog$ can be restricted to a safety specification for any subprogram. Adding subprograms to a parameterized program always preserves unsafety. We show that under certain conditions, it also preserves safety. Given a topology $\TT$, we say an initial condition $\init$ is \emph{extensible} if for any substructure $\SS$ of $\TT$, for any initial global state $\val$ on $\SS$, there is an initial global state $\val'$ on $\TT$ such that $\val' |_{\SS} = \val$.

\begin{theorem} \label{thm:subprogram-safety}
    Let $\cP$ and $\cP'$ be parameterized programs such that $\cP \subset \cP'$ and every program of $\cP'$ is a subprogram of some program in $\cP$. Given any parameterized safety specification for $\cP$, the safety of $\cP'$ implies the safety of $\cP$, and the converse holds if the initial condition on every topology in $\cP$ is extensible.
\end{theorem}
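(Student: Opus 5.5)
The first direction is immediate from the inclusion $\cP \subset \cP'$: every program of $\cP$ is a program of $\cP'$, and its specification is unchanged. So if $\cP'$ is safe, then so is every member of $\cP$.

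For the converse, assume $\cP$ is safe and every initial condition on topologies of $\cP$ is extensible. Take any $\prog' \in \cP'$. By hypothesis $\prog' = \prog|_{\SS}$ for some $\prog = \tp{\TT, \sem{-}} \in \cP$ and substructure $\SS \subseteq \TT$. Its specification is the restriction of $\tp{\init, \error}$ to $\SS$. Here I rely on symmetry of the parameterized specification: a node $v \in \SS$ has $\nb_{\SS}(v) = \nb_{\TT}(v)$, because $\SS$ is closed under the functions of $\VNode$ and hence contains the substructure generated by $v$. So the identity is a local isomorphism, and $\init'(v) = \init(v)$, $\error'(v) = \error(v)$, $\sem{v}' = \sem{v}$.

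Suppose, for contradiction, that $\prog'$ is unsafe. Then there is a run $\val_0 \to \cdots \to \val_n$ over $\SS$ in which $\val_0$ satisfies $\init'$ and $\val_n|_{\nb(w)} \in \error(w)$ for some $w \in \SS$. I read the reachability relation in the safety definition as the reflexive–transitive closure. Extensibility gives an initial global state $\hat\val_0$ on $\TT$ with $\hat\val_0|_{\SS} = \val_0$. I lift the run inductively: if $\val_j \xrightarrow{v} \val_{j+1}$ with $v \in \SS$, define $\hat\val_{j+1}$ to agree with $\val_{j+1}$ on $\SS$ and with $\hat\val_j$ on $\TT \setminus \SS$. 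Then $\hat\val_j \xrightarrow{v} \hat\val_{j+1}$ in $\prog$, for three reasons:
\begin{itemize}
\item on $\nb(v) \subseteq \SS$, the local pair is exactly $(\val_j|_{\nb(v)}, \val_{j+1}|_{\nb(v)}) \in \sem{v}$;
\item on $\SS \setminus \nb(v)$, the two states agree because the subprogram step leaves these nodes unchanged;
\item on $\TT \setminus \SS$, the two states agree by construction.
\end{itemize}
Hence $\hat\val_n$ is reachable in $\prog$ from an initial state. Moreover, $\hat\val_n|_{\nb(w)} = \val_n|_{\nb(w)} \in \error(w)$, so $\prog$ is unsafe, a contradiction.

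The main obstacle is the bookkeeping that subprogram semantics and specifications genuinely coincide with the restrictions. This needs neighbourhoods computed in $\SS$ and in $\TT$ to agree, which holds because substructures are closed under the vocabulary's functions. It also needs the $\cP'$-specification to be the restricted one, which the parameterized symmetry requirement guarantees via identity local isomorphisms. The lifting of steps is otherwise routine.
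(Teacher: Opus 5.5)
Your proof is correct. The paper states this theorem without proof: it is not among the appendix's key proofs, and the closure idea is credited to \cite{pps}. So there is no written argument to diverge from, and yours is the one the statement calls for. You handle the easy direction by inclusion. For the converse you lift a violating run of $\prog|_{\SS}$ to $\prog$, starting from the extension of its initial state that extensibility provides and freezing $\TT \setminus \SS$, using that $\nb(v)$ is the same whether computed in $\SS$ or in $\TT$.

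Two minor remarks. First, the specification on $\prog|_{\SS}$ is the restricted one by definition; symmetry is needed only to make it well defined when $\prog|_{\SS}$ arises as a subprogram of several members of $\cP$. Second, if membership in $\cP'$ were read only up to isomorphism, you would add one transport step along the isomorphism, exactly as in \Cref{lem:li-t}.
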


For any topology $\TT$, define $\substructures_k(\TT)$ as the set of substructures of $\TT$ generated by $k$ not necessarily distinct nodes. For any parameterized program $\cP = \{\prog_i\}_{i\in\cI}$, where $\prog_i$ is over topology $\TT_i$, define $\clsk{\cP}{k}$ as the parameterized program obtained from $\cP$ by adding the subprogram $\prog_i|_{\SS}$ for any $i \in \cI$ and $\SS \in \substructures_k(\TT_i)$. The following proposition, which follows from \cite[Proposition 6.7]{pps}, presents a finite basis of rank $k$ contained in the topologies of $\clsk{\cP}{k}$ under the assumptions in \Cref{thm:nf}.
\begin{proposition} \label{prop:topo-sub}
    Let $k \in \NN_+$ and $\cT$ be a locally uniformly finite class of topologies over a finite vocabulary. Let $\cS := \bigcup_{\TT \in \cT} \substructures_k(\TT)$. Then $\cS$ is a basis for $\cT$ of rank $k$ that is finite up to isomorphism.
\end{proposition}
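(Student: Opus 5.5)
The statement has two parts: that $\cS$ is a basis of rank $k$ in the sense of \Cref{def:topo-basis}, and that $\cS$ is finite up to isomorphism. I would handle them separately, since the first is essentially definitional and the second is where uniform local finiteness and the finite vocabulary are needed.

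\emph{Basis property.} Fix $\TT \in \cT$ and $\vec{v} \in \TT^k$. Let $\SS := \nb(\vec{v})$ be the substructure of $\TT$ generated by $v_1, \ldots, v_k$. By definition, $\SS \in \substructures_k(\TT) \subseteq \cS$. The inclusion $f : \SS \hookrightarrow \TT$ is an embedding, because $\SS$ is a substructure. Since $f$ is the identity on elements, $\vec{v} \in f(\SS)^k$. This is exactly what \Cref{def:topo-basis} requires. One small point: the statement counts the members of $\cS$ as $\cV$-topologies, so I would note that they are finite. This holds because each is a substructure of a finite $\TT$, and in any case its size is bounded, as shown next.

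\emph{Finiteness up to isomorphism.} By uniform local finiteness, applied with $d = k$, there is an $n$ such that every member of $\cS$ has at most $n$ elements. Any structure of size at most $n$ is isomorphic to one whose underlying set is $\{1, \ldots, m\}$ for some $m \le n$. For fixed $m$, such a structure is determined by interpreting each symbol of the finite vocabulary $\cV$:
\begin{itemize}
\item a predicate symbol of arity $a$ is interpreted as a subset of $\{1,\ldots,m\}^a$;
\item a function symbol of arity $a$ is interpreted as a map $\{1,\ldots,m\}^a \to \{1,\ldots,m\}$.
\end{itemize}
Each symbol therefore has finitely many possible interpretations, and there are finitely many symbols. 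So there are only finitely many such structures in total, and $\cS$ meets finitely many isomorphism classes.

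\emph{Main obstacle.} The step needing the most care is checking that a generated substructure of a member of $\cT$ is a genuine $\cV$-structure. It must be closed under all function symbols, including constants, which is guaranteed by the definition of ``generated''. The size bound must also apply to it, which is the content of uniform local finiteness: it bounds $|\nb(V)|$ for $|V| \le d$, and the nodes of $\vec{v}$ need not be distinct, so there are at most $k$ generators. Once this is checked, the counting argument above completes the proof, matching \cite[Proposition 6.7]{pps}.
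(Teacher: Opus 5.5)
Your proof is correct. The paper gives no argument of its own here and only defers to \cite[Proposition 6.7]{pps}. Your direct proof uses $\nb(\vec{v})$ with its inclusion into $\TT$ as the witnessing embedding, then counts the finitely many structures of bounded size over a finite vocabulary. This is the natural way to fill in that citation, and it mirrors the bounded-size counting in the paper's proof of \Cref{lem:qftype-fin}. One minor nit: if ``generated by $d$ nodes'' means $d$ distinct nodes, take the maximum of the bounds for $d = 1, \ldots, k$, so that $k$-tuples with repetitions are covered.
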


\subsection{CHCs for Downward Closure}\label{sec:algorithm}

\begin{algorithm}[t]
\caption{CHCs for a Downward Closure}
\label{alg:cl}
\begin{algorithmic}[1]
\Require Program $\cP = \{ \tp{\TT_i, \sem{-}_i} \}_{i\in\cI}$ on topologies $\cT = \{ \TT_i \}_{i\in\cI}$ satisfying the finiteness assumptions, $\cS = \{ \spec_i \}_{i\in\cI}$ is a parameterized specification, and $k \in \NN_+$
\Procedure{Chcs-for-Downward-Closure}{$\cP, \cS, k$}

\State compute $\alpha_1, \ldots, \alpha_m$ and representative terms $t_r^{(j)}$ as per \Cref{thm:nf}
\State $\cC \gets \varnothing$

\ForAll{$(\TT_i, \vec{w}) \in \Call{QF-Types}{\cT, k}$}
    \State $\cC \gets \cC \cup \left\{\bigwedge_{v \in \nb(\vec{w})} \subst(\init_i(v)) \Rightarrow \Inv_{r(\vec{w})}(\vec{x}_{U(\vec{w})})\right\}$
\EndFor

\ForAll{$(\TT_i, v_0\,\vec{w}) \in \Call{QF-Types}{\cT, k + 1}$}
    \State $\phi \gets \subst(\gform_{\nb(v_0\vec{w})}\sem{ v_0 }_i)$
    \State $\cC \gets \cC \cup \left\{\bigwedge_{\vec{v} \in \nb(v_0\vec{w})^k} \Inv_{r(\vec{v})}(\vec{x}_{U(\vec{v})}) \land \phi \Rightarrow 
    \Inv_{r(\vec{w})}(\vec{x}'_{U(\vec{w})})\right\}$
\EndFor

\ForAll{$(\TT_i, w) \in \Call{QF-Types}{\cT, 1}$}
    \State $\cC \gets \cC \cup \left\{\bigwedge_{\vec{v} \in \nb(w)^k} \Inv_{r(\vec{v})}(\vec{x}_{U(\vec{v})}) \Rightarrow \subst(\neg \error_i(w))\right\}$
\EndFor

\State \Return $\cC$
\EndProcedure
\end{algorithmic}
\end{algorithm}

If we blindly collect Hoare triples from all small programs in the basis, we end up with redundant Hoare triples with the same postcondition and command, but different preconditions, where one is subsumed by the other in a purely syntactic way. Avoiding these redundancies by constructing the full set and pruning it is clearly not a desirable solution. In \Cref{alg:cl}, we present an alternative. It takes as input a parameterized program $\cP$, a parameterized specification $\cS$, and a number $k \in \NN_+$, and returns a CHC system that encodes the existence of an Ashcroft invariant for the downward closure of $\cP$. The indexed sets $\cP$ and $\cS$ are given as computable functions. We assume there is an oracle $\Call{QF-Types}{\cT, k}$ that returns a list of all quantifier-free $k$-types in $\cT$, where each type is represented by a $k$-tuple in a topology in $\cT$. The algorithm may simplify the verification condition of one program by deferring to that of a smaller program, leveraging the fact that the \emph{smallest} substructure containing the nodes in a command and postcondition is in the downward closure.

\begin{theorem}
    The algorithm $\Call{Chcs-for-Downward-Closure}{\cP, \cS, k}$ returns a CHC system that is (syntactically) satisfiable if and only if there is an Ashcroft invariant of width $k$ for $\clsk{\cP}{k + 1}$ w.r.t. $\cS$.
\end{theorem}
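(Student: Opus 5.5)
The plan is to compare the system $\cC$ returned by \Cref{alg:cl} with the ``full'' CHC system $\cC^{\mathrm{full}}$ from \Cref{thm:chc-param}, instantiated for $\clsk{\cP}{k+1}$. By \Cref{prop:topo-sub}, the topologies of $\clsk{\cP}{k+1}$ contain a basis of rank $k+1$ that is finite up to isomorphism, namely $\bigcup_i \substructures_{k+1}(\TT_i)$. Combining \Cref{thm:ashcroft-basis} with \Cref{thm:chc-one}, a width-$k$ Ashcroft invariant for $\clsk{\cP}{k+1}$ exists iff the union over basis members $\SS$ of the Figure~\ref{fig:chc} clauses (instantiated on $\SS$) is satisfiable. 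Isomorphic basis members yield identical clauses up to renaming of the $x$-variables, so this union is a finite system $\cC^{\mathrm{full}}$, and it suffices to show that $\cC$ and $\cC^{\mathrm{full}}$ have the same solutions.

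For the direction ``$\cC^{\mathrm{full}}$ solvable $\Rightarrow$ $\cC$ solvable'', I would show that every clause of $\cC$ is, up to renaming, a clause of $\cC^{\mathrm{full}}$.
\begin{itemize}
\item The \textsc{Cont} clause produced for $(\TT_i, v_0\vec w)$ is literally the \ref{chc:cont} clause of the subprogram $\prog_i|_{\nb(v_0\vec w)}$, which lies in $\clsk{\cP}{k+1}$. This uses that $\sem{-}$ restricts correctly, i.e.\ that $\gform_{\nb(v_0\vec w)}$ is the global transition of that subprogram.
\item The \textsc{Init} clause for $\vec w$ is the \ref{chc:init} clause of $\prog_i|_{\nb(\vec w)}$. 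Padding $\vec w$ to $k+1$ nodes by repetition shows $\nb(\vec w)\in\substructures_{k+1}$.
\item Likewise, the \textsc{Safe} clause for $w$ is the \ref{chc:safe} clause of $\prog_i|_{\nb(w)}$.
\end{itemize}
The types $r(\vec v)$ and terms $U(\vec v)$ computed inside the substructure agree with those computed in $\TT_i$, because $\nb(\vec v)$ is the same substructure in both and local symmetry depends only on $\nb(\vec v)$ (\Cref{prop:qftype-sym}).

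For the converse, ``$\cC$ solvable $\Rightarrow$ $\cC^{\mathrm{full}}$ solvable'', I would take a solution of $\cC$ and show it satisfies every clause of $\cC^{\mathrm{full}}$. Fix a member $\SS$ (a substructure of some $\TT_i$) and one of its clauses.
\begin{itemize}
\item \ref{chc:cont} with $v_0,\vec w$: the \textsc{QF-Types} enumeration contains some $(\TT_j, v_0^\ast \vec w^\ast)$ locally isomorphic to $(v_0,\vec w)$ via $\beta$. Parameterized symmetry gives $\sem{v_0}=\beta^\ast\sem{v_0^\ast}$, and $\beta$ maps $r$ and $U$ consistently (\Cref{prop:qftype-sym}), so the algorithm's clause is a renaming of the clause restricted to $\nb(v_0\vec w)$. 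The full clause has a body that is stronger: it contains additional conjuncts $\Inv_{r(\vec v)}$ for $\vec v\not\subseteq\nb(v_0\vec w)$, together with frame equalities for nodes outside $\nb(v_0\vec w)$. Its head mentions only primed variables of $U(\vec w)\subseteq\nb(\vec w)\subseteq\nb(v_0\vec w)$. The full clause is therefore implied by the algorithm's clause, by weakening the body and dropping unconstrained variables.
\item \ref{chc:init} and \ref{chc:safe}: these are handled identically, using $\nb(\vec w)$ and $\nb(w)$ respectively, plus symmetry of $\init$ and $\error$.
\end{itemize}

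The main obstacle is the bookkeeping in the converse step, in two places. First, one must check that the nodes in the head and in $\phi$ really lie inside the restricted neighbourhood, so that the dropped conjuncts only strengthen the body. Here the subtle point is the frame condition: $\gform_{\TT}\sem{v_0}$ asserts $x_u=x'_u$ for $u\notin\nb(v_0)$, and restricting this to $u\in\nb(v_0\vec w)\setminus\nb(v_0)$ is exactly what the head needs. Second, one must confirm that the \textsc{QF-Types} representatives cover every local isomorphism class of tuples occurring in every member of the closure. This follows because the substructure's local types coincide with those in the ambient $\TT_i\in\cT$.
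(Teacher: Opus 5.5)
Your proposal is correct and follows the route the paper indicates for \Cref{alg:cl}. You reduce, via \Cref{thm:ashcroft-basis} and \Cref{thm:chc-one} (i.e.\ \Cref{thm:chc-param}), to the \Cref{fig:chc} clauses over the basis $\bigcup_i \substructures_{k+1}(\TT_i)$, and then observe that each generated clause is exactly the clause of the smallest subprogram over $\nb(v_0\vec w)$, $\nb(\vec w)$ or $\nb(w)$, which subsumes (having a weaker body) the corresponding clause of every larger member. The one point you gloss over is that \Cref{prop:topo-sub} gives a basis for $\cT$ rather than for the closure, but this is immediate because $\nb_{\SS}(\vec v)=\nb_{\TT_i}(\vec v)$ for any substructure $\SS\subseteq\TT_i$.
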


There are $|\Call{QF-Types}{\cT, k}|$ unknown predicates produced by \Cref{alg:cl}, the same as \Cref{thm:chc-param}, and $O(|\Call{QF-Types}{\cT, k + 1}|)$ clauses are generated. Both are characteristics of the topology family, which reflect its complexity. For a specific topology family $\cT$, the procedure $\Call{QF-Types}{\cT, k}$ can often be implemented by enumerating node tuples of length $k$ in small instances of $\cT$. For instance, if $\cT$ is the family of lines in the style of \Cref{fig:pip} (a), it suffices to enumerate node tuples in line instances with up to $2 k + 1$ circle nodes.

\section{Complete Encoding Optimizations} \label{sec:opt}
A key feature of our solution is that the normalization process grants us flexibility and control in devising the {\em complete} template for proof search. In this section, we take advantage of this control to introduce a couple of {\em complete} optimizations for the encoding. We first lay the formal foundation for these optimizations through a {\em parametric} variance of the normalization process introduced in \Cref{sec:nai}. Concrete optimizations are defined as instantiations of the parameter. This means that we can give a single proof of completeness for the template (\Cref{thm:nf2}) and the completeness of all individual optimizations presented in this section immediately follows. The soundness is trivial, as any instantiation of a restricted template can be converted into an instantiation of a full template.

\subsection{Parametric Normalization of Ashcroft Invariants} \label{sec:nf2}

Let us continue with the normalization procedure in \Cref{sec:nai}. We claim that with further syntactic manipulations to a normalized Ashcroft assertion, the formulas $\phi_r$'s can be guaranteed to be {\em symmetric}: Let the types in the topology family be $T_1, \ldots, T_m$ and $\vec{v}_r$ be a node tuple that realizes $T_r$ for each $r$. We say the formula $\phi_r$ is \emph{symmetric} (w.r.t. the type $T_r$) if for any automorphism $f$ on $\nb(\vec{v}_r)$, we have $\phi_r[x_1, \ldots, x_{n_r}] \equiv \phi_r[x_{\pi_f(1)}, \ldots, x_{\pi_f(n_r)}]$, where $\pi_f$ is the permutation on indices $\{ 1, \ldots, n_r \}$ induced by $f$ and the representative terms of $T_r$. We exploit the symmetry of these formulas in \Cref{sec:tmk}.

Moreover, some $\phi_r$'s can be assumed to be $\top$ (true).  
We use an {\em indicator} vector $\vec{\chi} \in \{\top, \bot\}^m$ to {\em indicate} which. We say $\vec{\chi}$ \emph{selects a basis of rank $k$} from $\cT$ if the collective neighbourhoods of the node tuples at the indicated indices form a basis of rank $k$ for the topology; formally, if $\cU := \{ \nb(\vec{v}_r) : \chi_r \equiv \top \}$ is a basis of rank $k$ for $\cT$.

%
%

\begin{theorem} \label{thm:nf2}

    Assume the setting of \Cref{thm:nf}. Let $\vec{\chi} \in \{\top,\bot\}^m$ be any indicator vector that selects a basis of rank $k$ from $\cT$. Then over $\cT$, any Ashcroft assertion is equivalent to a normalized Ashcroft assertion of the same width
\[ \forall \nv_1, \ldots, \nv_k.\, \bigvee_{r=1}^m \alpha_r[\vec{\nv}] \land \phi_r[\sval(t_r^{(1)}[\vec{\nv}]), \ldots, \sval(t_r^{(n_r)}[\vec{\nv}])] \]
    such that for every $r$,  $\phi_r$ is symmetric w.r.t. the type $T_r$, and $\phi_r \equiv \top$ if $\chi_r \equiv \bot$.
\end{theorem}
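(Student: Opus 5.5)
The plan is to start from the normalized assertion given by \Cref{thm:nf} and apply two syntactic transformations, each of which preserves equivalence over $\cT$. The first \emph{absorbs} the constraints of the non-selected types into the selected ones, so those can be set to $\top$. The second \emph{symmetrizes} every $\phi_r$. The same fact is used throughout. By \Cref{prop:qftype-sym}, a tuple $\vec{u}$ over any $\TT \in \cT$ with $r(\vec{u}) = r$ comes with a local isomorphism $\beta : \nb(\vec{v}_r) \to \nb(\vec{u})$, and $\beta$ sends the node denoted by $t_r^{(j)}[\vec{v}_r]$ to the node denoted by $t_r^{(j)}[\vec{u}]$. Any node-level relation we read off the fixed realizer $\vec{v}_r$ therefore transfers uniformly to every tuple of type $T_r$. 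This is what makes the rewriting below independent of the particular topology and instance.

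\emph{Step 1 (absorption).} Fix a selected index $s$, so $\chi_s \equiv \top$. For each $k$-tuple $\vec{w} \in \nb(\vec{v}_s)^k$, let $r = r(\vec{w})$. Each node $t_r^{(j)}[\vec{w}]$ lies in $\nb(\vec{v}_s)$, because $\nb(\vec{w}) \subseteq \nb(\vec{v}_s)$. Hence it equals $t_s^{(\sigma(j))}[\vec{v}_s]$ for a unique index $\sigma(j)$, by \Cref{rem:1}. Define
\[ \phi'_s[x_1,\ldots,x_{n_s}] := \phi_s[\vec{x}] \land \bigwedge_{\vec{w} \in \nb(\vec{v}_s)^k} \phi_{r(\vec{w})}[x_{\sigma_{\vec{w}}(1)}, \ldots, x_{\sigma_{\vec{w}}(n_{r(\vec{w})})}], \]
and set $\phi'_r := \top$ whenever $\chi_r \equiv \bot$.

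The new assertion is implied by the old one. This is because, by the transfer fact, each added conjunct evaluated at a tuple $\vec{u}$ of type $T_s$ is exactly the old matrix instantiated at $\beta(\vec{w})$.

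For the converse, take any tuple $\vec{v} \in \TT^k$ with $r(\vec{v})$ non-selected. Since $\vec{\chi}$ selects a basis of rank $k$, there are a selected $s$ and an embedding $g : \nb(\vec{v}_s) \into \TT$ with $\vec{v} \in g(\nb(\vec{v}_s))^k$. The image $g(\nb(\vec{v}_s))$ is the substructure generated by $\vec{u} := g(\vec{v}_s)$. So $g$ is a local isomorphism onto $\nb(\vec{u})$, and $r(\vec{u}) = s$. The tuple $\vec{w} := g^{-1}(\vec{v})$ is one of the tuples ranged over in $\phi'_s$, so the new instance at $\vec{u}$ yields the old instance at $\vec{v}$. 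The instances at selected types are retained verbatim. Hence the two assertions are equivalent over $\cT$.

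\emph{Step 2 (symmetrization).} For each $r$, replace $\phi'_r$ by $\phi''_r := \bigwedge_f \phi'_r[x_{\pi_f(1)},\ldots,x_{\pi_f(n_r)}]$, where $f$ ranges over the finitely many automorphisms of $\nb(\vec{v}_r)$. This formula is symmetric, because the map $f \mapsto \pi_f$ turns composition of automorphisms into composition of permutations, so precomposing with any $\pi_g$ merely reindexes the conjunction. It equals $\top$ whenever $\phi'_r \equiv \top$, so Step 1 is not undone. For equivalence, note that $f$ restricts to a local isomorphism from $\vec{v}_r$ to $f(\vec{v}_r)$, since $f(\vec{v}_r)$ still generates $\nb(\vec{v}_r)$. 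Hence $f(\vec{v}_r)$ also has type $T_r$. Transported to an arbitrary $\vec{u}$ of type $T_r$, the $f$-conjunct is just the matrix instantiated at another tuple $\beta(f(\vec{v}_r))$ of type $T_r$. The universal assertion therefore already implies it, and the converse direction is trivial because $f = \mathrm{id}$ contributes $\phi'_r$ itself.

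The main obstacle I anticipate is the bookkeeping in Step 1 and Step 2 that re-expresses nodes of a sub-neighbourhood, or of a permuted neighbourhood, through the representative terms of $T_s$ or $T_r$. One has to check that the index maps $\sigma_{\vec{w}}$ and $\pi_f$, computed once on the realizer, give the correct nodes for every tuple of that type in every $\TT \in \cT$. My approach is to reduce everything to local isomorphisms commuting with the evaluation of terms, which holds because isomorphisms preserve function symbols. Then \Cref{prop:qftype-sym} supplies the required $\beta$ uniformly.
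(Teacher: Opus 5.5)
Your proposal is correct and follows essentially the same route as the paper: your Step~1 conjunction over all $\vec{w} \in \nb(\vec{v}_s)^k$ is exactly the paper's conjunction over all embeddings $\nb(\vec{v}_{r}) \into \nb(\vec{v}_s)$. Such an embedding is determined by where it sends the generators, your $\sigma_{\vec{w}}$ coincides with the paper's $\pi_f$, and both directions of the equivalence are argued the same way, via local isomorphisms and the basis property. The only difference is that the paper notes this conjunction is already symmetric, because composing with an automorphism merely permutes the embeddings (equivalently, your tuples $\vec{w}$), so your separate symmetrization Step~2 is correct but redundant.
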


The proof idea is that, given a formula $p[x_1, x_2]$, the conjunction $p[x_1, x_2] \land p[x_2, x_1]$ is invariant under any permutation on the indices $\{1,2\}$.


We use the symmetry of $\phi_r$'s in \Cref{sec:tmk} and the indicator vector $\vec{\chi}$ to define optimizations in this section. Given $\vec{\chi}$, we may modify \Cref{alg:cl} by skipping $\vec{w}$ such that $\chi_{r(\vec{w})} \equiv \bot$ in line 4 and 6 and, similarly, skipping $\vec{v}$ such that $\chi_{r(\vec{v})} \equiv \bot$ in line 8 and 10. For an input parameterized program $\cP$ to the algorithm, this simplification is \emph{complete} if $\vec{\chi}$ selects a basis of rank $k$ from the topology family of $\clsk{\cP}{k + 1}$. 

\subsection{One Predicate per Neighbourhood} \label{sec:per-nbhd}
\def\opn{{\tt OPN}}

By default, our CHC encoding uses one unknown predicate per quantifier-free $k$-type $T_1, \ldots, T_m$ in $\cT$. Let $\UU_r$ be the neighbourhood of an arbitrary tuple with type $T_r$, which is unique up to isomorphism. We define an equivalence relation on the types $T_1, \ldots, T_m$ based on these neighbourhoods: we say $T_r \sim T_s$ if $\UU_r$ is isomorphic to $\UU_s$. Then, the $\sim$ relation provides a choice of $\vec{\chi}$ that always induces a complete optimization to \Cref{alg:cl}: for each equivalence class in $\{ T_1, \ldots, T_m \}/{\sim}$, $\chi_r$ chooses precisely one representative from the class. The net effect is that we use one unknown predicate per neighbourhood generated by $k$ nodes. In \Cref{sec:experiments}, we refer to this optimization as \opn.

\subsection{Distinct Process Nodes as Generators} \label{sec:distinct}


We start by assuming that the nodes in a topology $\TT$ are partitioned into two kinds: process nodes and resource nodes; for example, circles and squares in Figure \ref{fig:pip}. Resource nodes do not execute code. We add a unary predicate $\isProc$ to the vocabulary $\VNode$ to indicate whether a node is a process node. We require that the neighbourhood of any resource node contains no process node, and every resource node is in the neighbourhood of some process node.

\def\dpg{{\tt DPG}}

\paragraph{Skipping Resource Nodes}

Recall that \Cref{alg:cl} implicitly adds subprograms over the topologies $\{ \nb(\vec{v}) \mid \vec{v} \text{ is a $(k+1)$-tuple over some } \TT \in \cT \}$ to guarantee a basis of rank $k + 1$. Under the modelling assumptions mentioned above, however, it suffices to consider $\vec{v}$ consisting of process nodes. We can exclude the rest, which are theoretically unnecessary by, for each $r$, setting $\chi_r$ to $\bot$ if $\vec{v}_r$ (an arbitrary tuple that realizes $T_r$, as defined in \Cref{sec:nf2}) contains a resource node. Meanwhile, in lines 6 and 9 of \Cref{alg:cl}, we also skip any $v_0$ and $w$ that are resource nodes.

\paragraph{Distinct Generators}

We can further exclude topologies generated by fewer than $k$ process nodes from the downward closure,  by setting $\chi_r$ to $\bot$ for any $r$ such that $\vec{v}_r$ contains repetitive nodes. Since line 9--10 of \Cref{alg:cl} implicitly closes $\cT$ under substructures generated by one node, we change them to the following:
\begin{algorithmic}
\ForAll{$(\TT, w) \in \Call{Substructures}{\cT, k, 1}$}
    \State $\cC \gets \cC \cup \left\{\bigwedge_{\vec{v} \in \TT^k :\, \chi_{r(\vec{v})}} \Inv_{r(\vec{v})}(\vec{x}_{U(\vec{v})}) \Rightarrow \subst(\neg \error_i(w))\right\}$
\EndFor
\end{algorithmic}
where $\Call{Substructures}{\cT, k, 1}$ returns, up to isomorphism, a list of all structures $\TT$ together with a distinguished node $w \in \TT$ such that $\TT$ is generated by $k$ distinct process nodes over any topology in $\cT$. We refer to the combined optimization in this subsection as \dpg.

\section{Theoretical Case Study: Deriving the Solution of \cite{HoenickeMP2017}} \label{sec:tmk}
To illustrate that the solution presented in this paper is robust, we look at the specific classic topology of {\em identical replicated threads sharing boundedly many global variables}, which is the premise of \cite{HoenickeMP2017}, and show that the solution of \cite{HoenickeMP2017} is a special case of our solution.

\begin{wrapfigure}[5]{r}{0.15\textwidth}\vspace{-12pt}
\includegraphics[scale=0.18,alt={The star topology, where $n$ identical processes are connected to a single resource node.}]{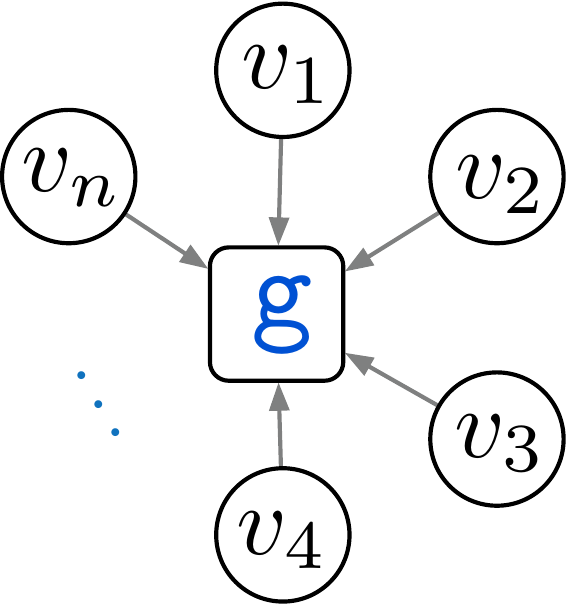}
\end{wrapfigure}
First, we model the topology as a {\em star}. The vocabulary $\VNode$ consists of a single constant symbol $\mathtt{g}$. The $\VNode$-structure $\TT_n$ has underlying set $\{0, 1, \ldots, n\}$, where $0$ is the interpretation of $\mathtt{g}$ (the resource node hosting all shared variables), and the positive numbers stand for the $n$ indistinguishable process nodes $v_i$'s. 

In \cite{HoenickeMP2017}, the safety specification is given as a pair of $\VData$-formulas $\id{Init}$ and $\id{Error}$, and the transitions are given by a $\VData$-formula $s$. To simplify the presentation, we fix $k = 2$ and assume $\id{Error}$ is stated for a single node. In our notation, for any $n \in \NN_+$ and $v \neq 0$, we have $\init_n(v) = \id{Init}(\sval(0), \sval(v))$, $\error_n(v) = \id{Error}(\sval(0), \sval(v))$, and $\sem{v}_n = s(\sval(0), \sval(v), \sval'(0), \sval'(v))$; for $v = 0$, we have $\init_n(v) = \top$, $\init_n(v) = \bot$, and $\sem{v}_n = \bot$.

Let $\cP := \{ \tp{\TT_n, \sem{-}_n} \}_{n \ge 2}$. The refined downward closure process in \Cref{sec:distinct} does not add any new subprogram to $\cP$, as the topology family already contains $\{ \TT_2, \TT_3 \}$, which is also a basis of rank $3$. 

Our normalization produces   
    $\forall u_1, u_2.\, \alpha[u_1, u_2] \to \phi[\sval(\mathtt{g}), \sval(u_1), \sval(u_2)]$,
where $\alpha[u_1, u_2] := u_1 \neq u_2 \land \isProc(u_1) \land \isProc(u_2)$, as a  \emph{complete} template for an Ashcroft assertion of width $2$ for $\cP$, which incidentally is precisely how an Ashcroft assertion is defined in \cite{HoenickeMP2017}.


Running \Cref{alg:cl} modified with the optimization described in \Cref{sec:distinct}, with minor syntactic simplification, we obtain the clauses $\cC$ in Figure \ref{fig:tmk}.
\begin{figure}[t]
\begin{center}
\begin{minipage}{\textwidth}
\begin{align*}
\id{Init}(x_0, x_1) \land \id{Init}(x_0, x_2) &\implies \Inv(x_0, x_1, x_2) \\
\Inv(x_0, x_1, x_2) \land {\color{gray} \Inv(x_0, x_2, x_1)} \land s(x_0, x_1, x'_0, x'_1) &\implies \Inv(x'_0, x'_1, x_2) \\
\Inv(x_0, x_1, x_2) \land {\color{gray} \Inv(x_0, x_2, x_1)} \land s(x_0, x_2, x'_0, x'_2) &\implies \Inv(x'_0, x_1, x'_2) \\
\Inv(x_0, x_1, x_2) \land \Inv(x_0, x_3, x_2) \land \Inv(x_0, x_1, x_3) & \\
{\color{gray} \Inv(x_0, x_2, x_1) \land \Inv(x_0, x_2, x_3) \land \Inv(x_0, x_3, x_1)}
{}\land s(x_0, x_3, x'_0, x'_3) &\implies \Inv(x'_0, x_1, x_2) \\
\Inv(x_0, x_1, x_2) \land {\color{gray} \Inv(x_0, x_2, x_1)} &\implies \neg \id{Error}(x_0, x_1)
\end{align*}
\end{minipage}\vspace{-5pt}
\caption{Clauses generated by Algorithm \ref{alg:cl} for the star topology.  \label{fig:tmk}}\vspace{-10pt}
\end{center}
\end{figure}

By the discussion in \Cref{sec:nf2}, without loss of generality, we may assume that $\Inv(x_0, x_1, x_2) \equiv \Inv(x_0, x_2, x_1)$ for all $x_0, x_1, x_2$, which prunes the clauses illustrated in {\color{gray} gray}, and results in a set of clauses $\cC'$ that is equisatisfiable to $\cC$.
The set of clauses $\cC'$ is exactly the CHC encoding for $2$-thread-modular proofs \cite[Figure 6]{HoenickeMP2017}, except that we do not treat control locations explicitly.

\section{Experimental Results}\label{sec:experiments}

\def\cblue{\color[rgb]{0.918545,0.000000,0.172660}}
\def\cred{\color[rgb]{0.000000,0.317275,0.826623}}
\def\cgreen{\color[rgb]{0.327398,0.834854,0.162883}}


\paragraph{Implementation} 
Our tool, \tool, takes as input a number $k \in \NN_+$, a parameterized program over topologies $\cT = \{ \TT_i \}_{i\in\cI}$ over a finite vocabulary $\cV$ (assuming the indices $\cI$ are consecutive natural numbers), a parameterized safety specification, and a pair of numbers $i, i' \in \cI$ with the assumption that all quantifier-free $(k+1)$-types appear in $\cT$ already appear in $\{ \TT_j \mid i \le j \le i' \}$. For the parameterized safety specification, we allow a more general form of error states that can relate $m$ nodes in a topology, where $m \le k$. 
{\tool} implements \Cref{alg:cl} with optimization options {\opn} and {\dpg} (\Cref{sec:opt}). The output encoding is in SMT-LIB2 format and can be passed to a variety of existing CHC solvers \cite{KomuravelliGC2016,Eldarica,Golem}. Here, we report the results for Z3/Spacer \cite{Z3,KomuravelliGC2016} except for one benchmark, which can be solved by Golem \cite{Golem} but not Spacer. Details about the experiment can be found in \refapp{app:experiments}.

\paragraph{Evaluation Goals} The main goal of our evaluation is straightforward: Establishing whether {\tool} can prove interesting properties of parameterized programs over a rich family of topologies.  As a secondary goal, we evaluate the impact of the optimizations proposed in \Cref{sec:per-nbhd,sec:distinct}.

\paragraph{Benchmarks}
There is not a lot of diversity in the example pool in the literature when it comes to different topologies for parameterized programs.
Hence, beyond including a few examples from the literature, we designed many more benchmarks over many more topologies. In particular, we have 30 programs over (1) rings, (2) lines (like example of Figure \ref{fig:pip}), (3) grids, (4) bounded-depth trees~\cite{pps}, and (5) binary trees (full list in \refapp{app:benchmarks}). The generality and flexibility offered by logical structures are very helpful in modelling some extra requirements for specialized versions of some of these topologies. For instance, a direction predicate for the line and an orientation predicate for the ring can be added effortlessly, and our benchmark set also includes these variations.

\begin{figure}[t]
\begin{center}
\includegraphics[width=\textwidth,alt={(a) is the quantile plot for the }]{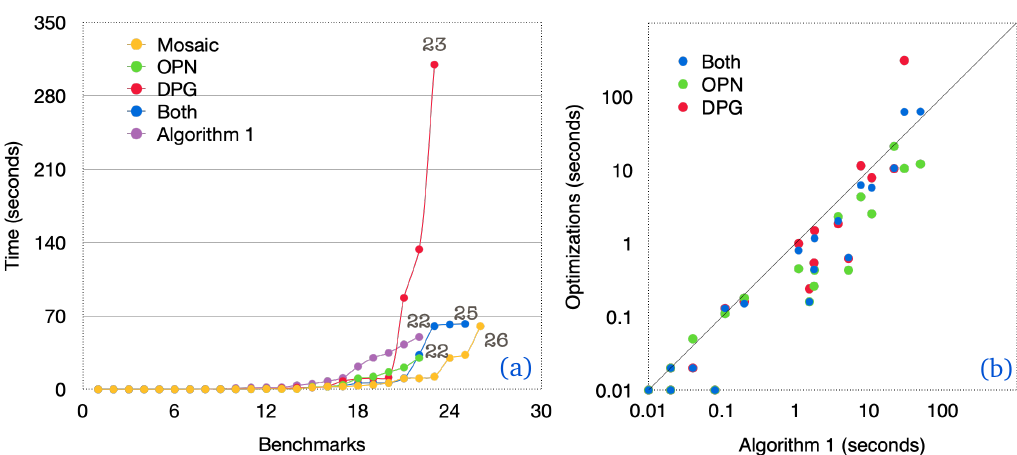}\vspace{-5pt}
\caption{(a) Quantile plot. (b) Scattered plot of optimizations vs baseline.}\vspace{-10pt}
\label{fig:plots}
\end{center}
\end{figure}

\paragraph{Results}
Figure \ref{fig:plots} presents our results, where {\tool} corresponds to the best performance of all options. The quantile plot includes the count of benchmarks solved by each option. Of the 26 benchmarks solved by \tool, the winning time of 16 is by \opn, 8 by \dpg, 13 by the combination, and 4 by Algorithm \ref{alg:cl} (counting ties).
The optimizations clearly have a significant impact. Remarkably, they do not consistently lead to an improved time: There are a few outliers above the middle line in Figure \ref{fig:plots}(b), which incidentally do not include the 3 cases where Algorithm 1 times out. This is because they target the size of the encoding, but a smaller encoding is not necessarily always easier to solve for the backend solver. \refapp{app:data} lists the full results, including encoding sizes.

The 4 benchmarks that {\tool} cannot solve are due to one of the following reasons: (1) non-linear reasoning required (1 tree benchmark), (2) encoding is too large (1 grid benchmark), and (3) CHC solver fails (1 line and 1 ring benchmark).
%

\paragraph{Limitations and Discussion}
The key limitations are related to the unscalability and unpredictability of the backend CHC solvers. Large encodings (e.g. a grid example) cannot be handled. A minor change in Figure \ref{fig:pip} example can flip the status of its solvability, without much change in the encoding size. 


Another bottleneck for scalability is that the basis of rank $k + 1$ becomes too large as $k$ increases, especially for a larger vocabulary $\VNode$. For instance, this happens for width-$2$ invariants for grid examples, since the distinctness of the four neighbours causes the basis to include $347$ small programs even with optimizations, and one of two such examples times out.

Finally, we came across benchmarks (not included) that do not admit universally quantified inductive invariants as proofs, where a richer class of invariants or ghost variables are required to handle them.

\section{Related Work}\label{sec:relwork}
The problem of verification of parameterized programs is generally known to be undecidable~\cite{AptK1986}. Our focus is on the {\em algorithmic} verification of infinite-state parameterized programs. As such, we do not survey the rich literature on finite-state parameterized programs \cite{AbdullaD2016} nor deductive verification techniques \cite{AshmoreGT2019,PadonMPS+2016}. 

\paragraph{Compositional Proofs of Parameterized Programs}
Compositional proofs have been a central theme in concurrency verification. We made a thorough comparison with the work in \cite{HoenickeMP2017} in Section \ref{sec:tmk}, which is singled out because it is sound and complete, albeit for a fixed topology. Similar results, but for finite-state programs, have appeared in \cite{Namjoshi2007}. However, general cut-off results like ours and the one in \cite{HoenickeMP2017} for the star topology were not given.

The work in \cite{NamjoshiT2012,NamjoshiT2016,NamjoshiT2018,AshmoreGT2019} is noteworthy for the common themes of the use of symmetry in compositional verification. In particular, in \cite{NamjoshiT2016}, the insight of identifying \emph{balanced} (therefore \emph{locally symmetric}) nodes in compositional verification is used to establish \emph{compositional cutoffs}.
The programs are finite-state, and the key result is the decidability of compositional model checking in polynomial time. Specifically, the problem is formally framed as deciding, for all programs in the parameterized family, the existence of interference-free local proofs for every process, which collectively become a global universal invariant {\em with one quantifier}. Our work generalizes this view by permitting universal quantification of any depth and generalizing compositional cutoffs to bases of any rank. However, the starting point of \cite{NamjoshiT2016} is local proofs, whereas we start with global invariants and look for a sound and complete reduction to local proofs.

\paragraph{Handling Multiple Communication Topologies}
Several topologies are studied for finite-state programs with multiple communication primitives in \cite{AbdullaHH2013}, where the decision procedure for each topology requires a bespoke representation of the configurations and a pre-order on them that induces subconfigurations for abstraction. Later in \cite{AminofKRS+2018}, parameterized finite-state programs over a rich set of topologies are studied, where several uniform (across topology classes) decidability and complexity results are given. The focus for decidability is on what they call {\em homogeneous} topologies, and none of the topologies in our benchmark set is homogeneous. Note that there are known undecidability results \cite{EmersonN1995} about rings (not {\em homogeneous}), which clarify why our algorithm cannot be a decision procedure. For infinite-state programs, a rich class of topologies are targeted in \cite{AshmoreGT2019} in the context of deductive verification. However, the invariant templates are predetermined by the user, and their small model property relates to checking concrete verification conditions rather than safety through Ashcroft invariants.


\paragraph{Parameterized Model Checking of Infinite-State Programs}
For algorithmic verification of infinite-state parameterized programs, there exist techniques that automatically generate universally quantified invariants \cite{EmmiMM2010,GurfinkelSM2016,HojjatRSW2014,MonniauxG2016}, all for a single topology. One can view {\em thread-modular proofs at many levels} \cite{HoenickeMP2017,FarzanKP2015} as the conclusion of this line of work by proposing proof systems that form a strict hierarchy and are relatively complete w.r.t. Ashcroft invariants. 

\section{Conclusion and Future Work}
On the conceptual side, this paper observes that under certain conditions about the families of topologies, a universally quantified invariant of an unbounded family of concurrent parameterized programs over these topologies can be verified through a bounded number of checks (Theorem \ref{thm:ashcroft-basis}), which are limited to entailments in the underlying data domain (Theorem \ref{thm:ashcroft-fin}). Hence, verification conditions do not require any axiomatization of the underlying topology family. On the algorithmic side, using the above observations, we propose a {\em complete} algorithm for discovering such an invariant of a given width $k$, if one exists. 

Our implementation uses CHC encodings (and solvers) as a point of convenience. The solvers do not yet scale to the level of handling the encodings of larger programs. The paper offers two {\em complete} optimizations to improve the encoding, but a more practical algorithm will have to rely on a combination of abstraction and decomposition techniques to make the invariant search tractable. We hope that our conceptual contribution in separation of reasoning about data and topologies will bring new opportunities for discovery on this path.  

Recall that Algorithm \ref{alg:cl} relies on an oracle that computes the quantifier-free types for a topology family. In the implementation, these are computed through brute-force enumeration relying on upper-bounds that are manually calculated for the topology class. The technical problem of necessary/sufficient conditions that guarantee efficient computability of quantifier-free types remains open. 

Universally quantified invariants are clearly not enough to capture every proof (or even every specification) of properties of interest in these contexts. It will be interesting to investigate if similar cut-off results can be proven for richer classes of invariants with quantifier alternations (e.g., see \cite{KoenigPIA2020}). Independently, there is some evidence that commutativity-based reductions can help simplify programs (on the star topology) enough for a universally quantified invariant to be sufficient in some cases \cite{FarzanKP2024}, and it will be interesting to incorporate similar techniques for arbitrary topology families.

\section*{Data-Availability Statement}

Proofs and experiment details can be found in the appendix. Our tool {\tool} and the benchmarks used in the evaluation (Section \ref{sec:experiments}) can be found at \cite{ChengF2026a}, where version 0.1.0 was submitted for artifact evaluation and includes the first 21 benchmarks listed in \refapp{app:benchmarks}.

\begin{acks}
The authors were supported by a Discovery Grant from Natural Sciences and Engineering Research Council of Canada.
\end{acks}

\section*{Disclosure of Interests}
The authors have no competing interests to declare that are relevant to the content of this article.

\bibliographystyle{ACM-Reference-Format}
\bibliography{biblio}

\crefalias{section}{appendix}
\crefalias{subsection}{appendix}

\iftoggle{ext}{%
\appendix
\section{A Basis of Rank 3 for the Downwardly Closed Pipeline Family}\label{app:pipeline}

For a topology family \(\cT\), let \(\clsk{\cT}{k} := \bigcup_{\TT \in \cT} \substructures_k(\TT)\) (see \Cref{sec:closure}). Let \(\LL_n\) be the pipeline with \(n\) process nodes, as depicted in \Cref{fig:pip}. A basis of rank \(3\) for the family \(\clsk{\{\LL_n\}_{n \ge 3}}{3}\) is illustrated in Figure \ref{fig:pipb}.

\begin{figure}[htbp]
\begin{center}
\includegraphics[width=\textwidth]{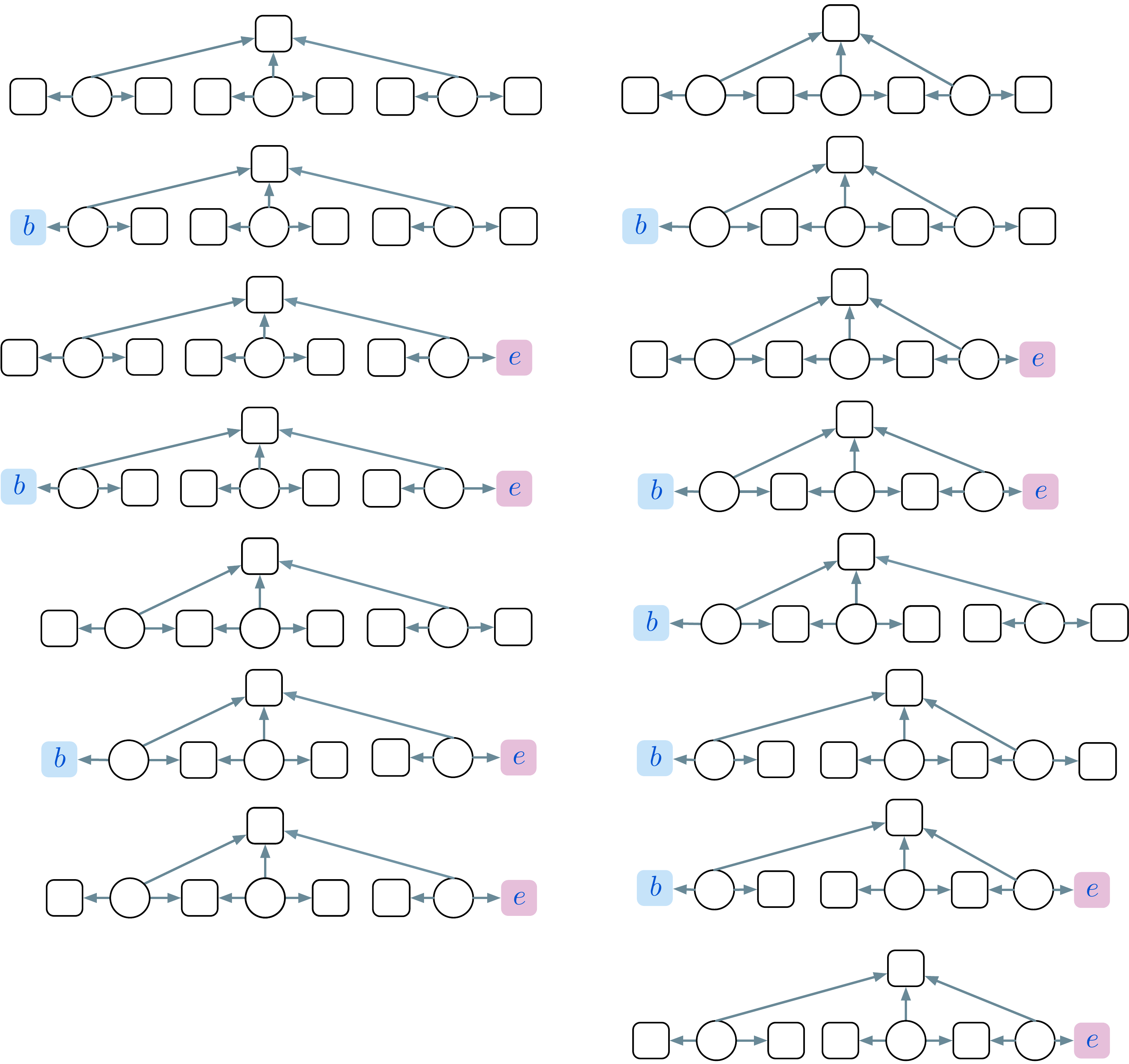}
\caption{The 15 topologies in a basis of rank 3 for \(\clsk{\{\LL_n\}_{n \ge 3}}{3}\)}
\label{fig:pipb}
\end{center}
\end{figure}

In general, we have the following theorem:
\begin{theorem}
  Any substructure with \(k\) generators of some \(\LL_n\) (\(n \in \NN_+\)) is an embedded substructure of \(\LL_m\) for some \(m \le 2k + 1\).
\end{theorem}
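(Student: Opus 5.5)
We would prove this by a direct compression argument: take the $k$ generators inside $\LL_n$, keep the part of the line that the substructure actually sees, and shorten every unobserved stretch of the line to a single filler process. First we fix the vocabulary of the pipeline as in \Cref{fig:pip}. We assume it has the functions $\mathtt{l},\mathtt{r}$, the predicate $\isProc$, and whatever symbols distinguish the first and last process and the top node (e.g.\ constants). We also assume, as in the $\alpha_1$ example, that $\mathtt{l}$ and $\mathtt{r}$ act as the identity on resource nodes. Under these assumptions the substructure generated by a set $V$ of nodes is easy to describe. It consists of the closure of the constants, which is a fixed bounded set containing the top node and the boundary data. For each process generator $v\in V$ it adds $v,\mathtt{l}(v),\mathtt{r}(v)$. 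Each resource generator contributes only itself, since resource nodes are fixed by the functions. Hence the substructure is a union of at most $k$ ``windows'' of the form square--process--square, together with the constant part and isolated squares.

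Next we list the process nodes of $\SS$ in their order along $\LL_n$, say $p_1<\cdots<p_j$ with $j\le k$, and look at the gaps between consecutive ones and at the two ends. Each gap falls into one of two cases. In the first case $p_{i+1}$ is the immediate successor of $p_i$, so $\mathtt{r}(p_i)=\mathtt{l}(p_{i+1})$; then we keep the two processes adjacent. In the second case at least one process lies strictly between them in $\LL_n$, and these intermediate processes and their private squares are absent from $\SS$, except possibly squares that are themselves generators. We then replace the whole gap by exactly one filler process node. This single filler keeps $\mathtt{r}(p_i)\neq \mathtt{l}(p_{i+1})$, and it also supplies squares onto which isolated resource generators in that gap can be mapped. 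The two end gaps are treated the same way: if $p_1$ is not the distinguished first process, one filler is placed before it, and symmetrically after $p_j$.

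The target line therefore has at most $j$ kept processes plus at most $j+1$ fillers, so at most $2j+1\le 2k+1$ processes; we call it $\LL_m$. The map $f:\SS\hookrightarrow\LL_m$ is defined by sending each kept process and its squares to the corresponding process and squares of $\LL_m$, and by sending the constants to the constants. We then check that $f$ is injective, commutes with $\mathtt{l},\mathtt{r}$ and the constants, and preserves and reflects $\isProc$. Reflection of equalities between terms is the point where the fillers are needed, because they keep squares that are distinct in $\LL_n$ distinct in $\LL_m$.

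The main obstacle we expect is the bookkeeping of edge cases, not the core idea. One case is generators that are resource nodes lying between non-adjacent processes: we would map such a node to the square beside the filler in that gap. If a gap holds more than one isolated square generator, the single-filler budget may not be enough, and we would need to refine the count by charging each such square against its own generator. Another case is generators coinciding with the distinguished first or last process, where no end filler is needed. A third is very small $n$, where the constant part already overlaps the windows. We would handle each of these by a short case split on whether the relevant boundary or square node is among the generators. In each case we would confirm that the total number of processes still stays within $2k+1$.
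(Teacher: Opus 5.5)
Your compression argument is the paper's proof. After assuming all generators are process nodes, the paper lists them as $v_{i_1},\ldots,v_{i_k}$ and sets $j_1=\min(i_1,2)$, $j_t=j_{t-1}+\min(i_t-i_{t-1},2)$ and $m=j_k+\min(n-i_k,1)$. This is exactly your rule: keep adjacent generators adjacent, collapse every nonempty gap (including the two ends) to one filler process, which gives $m\le 2k+1$.

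The one step that fails is your treatment of resource generators. A single filler $q$ between non-adjacent $p_i$ and $p_{i+1}$ has $\mathtt{l}(q)=f(\mathtt{r}(p_i))$ and $\mathtt{r}(q)=f(\mathtt{l}(p_{i+1}))$, so it supplies no new square. Sending an isolated square generator to ``the square beside the filler'' therefore breaks injectivity. Already \emph{one} such square, not two, forces a second filler in that gap, and the same holds for the end gaps.

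Your charging repair does go through. Let $j$ be the number of distinct process generators and $s$ the number of isolated square generators, so $j+s\le k$. A gap holding $g$ isolated squares needs $g+1$ fillers, so the line has at most $j+(j+1)+s\le 2k+1$ processes.

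The paper avoids the whole case split with a one-line reduction you could adopt. Replace each resource generator by an adjacent process node. The resulting generated substructure $\SS'$ contains $\SS$ as a substructure, and any embedding $\SS'\hookrightarrow\LL_m$ restricts to an embedding of $\SS$. So only process generators ever need to be handled.
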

\begin{proof}[Proof Sketch]
  Let \(\SS\) be a substructure of \(\LL_n\) with \(k\) generators. Without loss of generality, we may assume the generators are process nodes. List them from left to right: \(v_{i_1}^{(n)}, \ldots, v_{i_k}^{(n)}\) where \(i_1 \le \cdots \le i_k\) (we use \(v_i^{(n)}\), where \(1 \le i \le n\), to denote the \(i\)-th process node in \(\LL_n\)). Define the sequence \(j_1, \ldots, j_k\) inductively as follows:
  \[
    j_t := \begin{cases}
      \min(i_1, 2), &\text{if } t = 1 \\
      j_{t-1} + \min(i_t - i_{t-1}, 2). &\text{otherwise}
    \end{cases}
  \]
  Let \(m := j_k + \min(n - i_k, 1)\). Then \(m \le 2k + 1\). One can verify that the substructure of \(\LL_m\) generated by \(v_{i_1}^{(m)}, \ldots, v_{i_k}^{(m)}\) is isomorphic to \(\SS\).
\end{proof}

\begin{corollary}
  Any quantifier-free \(k\)-type realized in \(\{\LL_n\}_{n \in \NN_+}\) is realized in \(\{\LL_m\}_{m \le 2k + 1}\).
\end{corollary}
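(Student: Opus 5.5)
The plan is to derive the corollary from the preceding theorem together with \Cref{prop:qftype-sym}, translating ``embedded substructure'' into ``local isomorphism''. Fix a quantifier-free $k$-type realized by some $\vec{v} \in \LL_n^k$. Let $\SS := \nb(\vec{v})$ be the substructure of $\LL_n$ generated by the (not necessarily distinct) nodes of $\vec{v}$. Then $\SS$ is a substructure with at most $k$ generators. If there are fewer than $k$ distinct generators, I will pad with repetitions, which the sorted list $i_1 \le \cdots \le i_k$ in the theorem's proof already permits. The preceding theorem then gives some $m \le 2k+1$ and an embedding $f : \SS \hookrightarrow \LL_m$.

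Next I will show that $f$ is a local isomorphism from $\vec{v}$ to $f(\vec{v})$, taken componentwise. Since $f$ preserves every function symbol of the vocabulary, the image $f(\SS)$ is closed under the functions of $\LL_m$. It is also the smallest closed set containing $f(\vec{v})$: every element of $\SS$ is the value of some term applied to $\vec{v}$, and $f$ commutes with term evaluation. Hence $f(\SS) = \nb(f(\vec{v}))$ computed in $\LL_m$. Because $f$ is an injection that preserves and reflects all symbols, it is an isomorphism $\nb(\vec{v}) \to \nb(f(\vec{v}))$ that maps $\vec{v}$ to $f(\vec{v})$. This gives $\vec{v} \simeq f(\vec{v})$, and \Cref{prop:qftype-sym} yields $\qftype(\LL_n, \vec{v}) = \qftype(\LL_m, f(\vec{v}))$. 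So the type is realized in $\LL_m$ with $m \le 2k+1$.

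The main subtlety I expect is that the corollary needs an embedding of $\SS$ \emph{as a structure containing the specific tuple} $\vec{v}$, not just some abstract copy. Two details need care here. First, the theorem's proof sketch speaks of generators assumed to be process nodes, whereas $\vec{v}$ may contain resource nodes. My argument only uses the existence of an embedding of the whole $\SS$ and then takes the image of $\vec{v}$, so I never need the generators themselves to be process nodes. The only requirement is that a resource generator lies in the neighbourhood of some process node in $\SS$. If it does not, for instance a top-node-only substructure, then $\SS$ is generated by fewer nodes, and the theorem applies to that smaller generating set. Second, if $n \le 2k+1$ the claim is trivial, since we can take $m = n$ and $f$ the identity. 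So only $n > 2k+1$ actually requires the theorem. I would state both points briefly and let the rest follow as above.
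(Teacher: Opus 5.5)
Your argument is correct and matches the paper, which treats the corollary as an immediate consequence of the preceding theorem: apply the theorem to $\nb(\vec{v})$, take the image $f(\vec{v})$, observe that $f$ is a local isomorphism onto $\nb(f(\vec{v}))$, and conclude with \Cref{prop:qftype-sym}. Your aside about resource-node generators is unnecessary, because the theorem is stated for arbitrary generators; in addition, the justification you sketch there (that $\SS$ would then be generated by fewer nodes) need not hold, and the sound reduction is instead to replace each resource generator by a process node whose neighbourhood contains it, embed the resulting larger substructure, and restrict that embedding to $\SS$.
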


\section{Generalized Ashcroft Invariants}\label{app:ash}

Formally, \emph{generalized Ashcroft assertions} are defined as a fragment of first-order formulas over a vocabulary $\cV_\Ashcroft = \VNode \uplus \VData \uplus \{ \sval \}$ of two sorts, $\Node$ and $\Data$, where the special symbol $\sval$ is typed $\Node \to \Data$. 
The fragment \QFAshcroft{} consists of first-order formulas over $\cV_\Ashcroft$ where no quantification is over $\Node$ and every free variable is of sort $\Node$. Ashcroft assertions of \emph{width} $k$, denoted by $\Ashcroft[k]$, are of the form $\forall u_1, \ldots, u_k.\, \varphi[u_1, \ldots, u_k]$, 
where $\varphi$ is a \QFAshcroft{} formula.  
The set of Ashcroft assertions \Ashcroft{} is the union of $\Ashcroft[k]$ for all $k \in \NN$.

We always fix the $\VData$-structure $\DD$. Therefore, the truth value of $\Phi$ is given by a specific $\VNode$-topology $\TT$ and a global state $\val$ on $\TT$, interpreting the symbol $\sval$. Given $\TT$, we define the satisfiability relation $\val \models \Phi$ and the entailment relation $\Phi \models \Phi'$ under $\TT$ in the standard way. Although both relations are parametric on $\TT$, since $\TT$ is clear from the context, we omit it from our notation.

For a program $\prog = \tp{\TT, \sem{-}}$ and a safety specification $\tp{\init, \error}$, a \emph{generalized Ashcroft invariant} is a formula $\Phi \in \Ashcroft$ such that for any global states $\sval$ and $\sval'$, \vspace{-3pt}
    \begin{description}
        \item[Initialization:] if $\val$ satisfies $\init$, then $\val \models \Phi$;
        \item[Continuation:] if $\val \models \Phi$ and $\val \to \val'$, then $\val' \models \Phi$;
        \item[Safety:] if $\val \models \Phi$, then $\val$ does not satisfy $\error$.
    \end{description}\vspace{-3pt}
$\Phi$ is an invariant for the parameterized program if it is an invariant for every member of the family.

\section{Key Proofs} \label{app:proofs}

\begin{proposition}[\Cref{prop:qftype-sym}]
    Let $\TT$ and $\TT'$ be $\cV$-structures. For any tuples $\vec{v} \in \TT^k$ and $\vec{u} \in {\TT'}^k$, we have $\vec{v} \simeq \vec{u}$ if and only if $\qftype(\TT, \vec{v}) = \qftype(\TT', \vec{u})$. 
\end{proposition}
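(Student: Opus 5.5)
We prove the two directions separately, using the fact that $\nb(\vec{v})$ is the substructure generated by $v_1,\ldots,v_k$. Its underlying set is therefore exactly the set of interpretations $t^{\TT}(\vec{v})$ of $\cV$-terms $t[\nv_1,\ldots,\nv_k]$; constant symbols count as $0$-ary terms.

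For the forward direction, suppose $\beta : \nb(\vec{v}) \to \nb(\vec{u})$ is a local isomorphism with $\beta(v_i) = u_i$. We first show that $\nb(\vec{v})$ is a substructure of $\TT$, so for every quantifier-free $\phi$ we have $\TT \models \phi(\vec{v})$ iff $\nb(\vec{v}) \models \phi(\vec{v})$. This holds because quantifier-free truth only depends on the values of terms, and all those values lie in $\nb(\vec{v})$. Next, by induction on terms we show $\beta(t^{\TT}(\vec{v})) = t^{\TT'}(\vec{u})$. The base cases are variables, where $\beta$ sends $v_i$ to $u_i$, and constants, which $\beta$ preserves as an isomorphism. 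The step for function symbols uses that $\beta$ preserves functions. Then, by induction on quantifier-free formulas, $\nb(\vec{v}) \models \phi(\vec{v})$ iff $\nb(\vec{u}) \models \phi(\vec{u})$. Atomic equalities are handled by injectivity of $\beta$, and atomic predicates by the fact that $\beta$ preserves and reflects predicates; Boolean connectives are routine. Combining these gives $\qftype(\TT,\vec{v}) = \qftype(\TT',\vec{u})$.

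For the converse, assume the two types are equal. Define $\beta : \nb(\vec{v}) \to \nb(\vec{u})$ by $\beta(t^{\TT}(\vec{v})) := t^{\TT'}(\vec{u})$. This is well defined and injective because $t(\vec{\nv}) = s(\vec{\nv})$ belongs to one type iff it belongs to the other. It is surjective because every element of $\nb(\vec{u})$ is the value of some term. It preserves functions: $f(t_1,\ldots,t_n)$ is itself a term, so $\beta(f(\ldots)) = f(\beta(\ldots))$ by definition. It preserves and reflects predicates because the atomic formulas $P(t_1,\ldots,t_n)$ are in one type iff they are in the other. Finally, $\beta(v_i) = u_i$ by taking $t = \nv_i$. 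So $\beta$ is a local isomorphism, and $\vec{v} \simeq \vec{u}$.

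The main point needing care is the description of the generated substructure as the set of term values, including the degenerate case where $\cV$ has constants but $k = 0$, or where the substructure is the closure rather than just the set $\{v_1,\ldots,v_k\}$. We will state this description as a standard fact and verify that the term-value set is closed under the functions and is the smallest such set. Everything else is routine structural induction.
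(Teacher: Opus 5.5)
Your proposal is correct and matches the paper's proof. The forward direction goes by structural induction; you additionally spell out the term-level induction and the reduction to the substructure $\nb(\vec{v})$, which the paper leaves implicit. The converse defines $\beta$ by $t(\vec{v}) \mapsto t(\vec{u})$ for all terms $t$, with well-definedness and injectivity read off from the equations $t = t'$ shared by the two types.
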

\begin{proof}
    By induction over the structure of formulas, one can show that for any quantifier-free $\cV$-formula $\phi$ with $k$ free variables, tuple $\vec{v} \in \TT^k$, and local isomorphism $\beta$ from $\vec{v}$ to some tuple in $\TT'^k$, we have $\TT \models \phi(\vec{v})$ iff $\TT' \models \phi(\beta(\vec{v}))$.

    Let $\vec{v} \in \TT^k$ and $\vec{u} \in {\TT'}^k$. Suppose $\qftype(\TT, \vec{v}) = \qftype(\TT', \vec{u})$. Define a binary relation $\beta \subset \TT \times \TT'$ that consists of the pairs $(t(\vec{v}), t(\vec{u}))$ for all $\cV$-term $t$. It is easy to verify that $\beta$ is a local isomorphism from $\vec{v}$ to $\vec{u}$. For instance, to show well-definedness, suppose there are terms $t$ and $t'$ such that $t(\vec{v}) = t'(\vec{v})$. Then the formula $t = t'$ is in $\qftype(\TT, \vec{v})$ and thus $\qftype(\TT', \vec{u})$, which implies $t(\vec{u}) = t'(\vec{u})$.
\end{proof}

\begin{proposition}[\Cref{prop:htri-nf}]
    Any Hoare triple in $H_{\Phi}^{\TT}$ is equivalent to one in normal form.
\end{proposition}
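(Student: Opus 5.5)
The plan is to show that every pre- and postcondition occurring in $H_\Phi^{\TT}$ is equivalent, over the fixed finite topology $\TT$, to an $\Assertion{\TT}$ formula. Then each triple is equivalent to the triple obtained by replacing its assertions with these equivalents. The command part ($\epsilon$ or a node $v_0$) is left untouched. The formulas $\init(v)$ and $\error(w)$ are already $\Assertion{\TT}$ formulas by the modelling assumption, and $\Assertion{\TT}$ is closed under finite conjunction and negation (it is just a $\VData$-formula over finitely many $\sval(u)$'s). Hence $\bigwedge_{v\in\TT}\init(v)$ and $\neg\error(w)$ need no work. What remains is to treat $\varphi[\vec w]$ and $\bigwedge_{\vec v\in\TT^k}\varphi[\vec v]$, where $\varphi\in\QFAshcroft$. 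Since $\TT$ is finite, the big conjunction is finite, so it suffices to show that each closed instance $\varphi[\vec v]$ is equivalent (for all global states $\val$ on $\TT$) to some $\Assertion{\TT}$ formula.

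I would prove this by structural induction on $\varphi[\vec v]$, viewed as a closed formula over $\VNode\uplus\VData\uplus\{\sval\}$ with parameters from $\TT$.

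\textbf{Node terms.} Every $\Node$-sorted term is closed, since the only $\Node$ free variables were instantiated and there is no quantification over $\Node$. Such a term is built from parameters, constants and $\VNode$ function symbols, so it evaluates in $\TT$ to a fixed node $u$. We replace it by the parameter $u$.

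\textbf{Node atoms.} An atom purely over $\VNode$ (a $\VNode$ predicate or an equality of node terms) then has a fixed truth value in $\TT$, independent of $\val$. We replace it by $\top$ or $\bot$.

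\textbf{Data terms.} Every occurrence $\sval(t)$ becomes $\sval(u)$. Data-sorted terms are therefore $\VData$-terms over the symbols $\sval(u)$ and $\Data$-sorted variables. Data atoms remain $\VData$ atoms in these symbols.

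\textbf{Connectives and data quantifiers.} Boolean connectives and quantifiers over $\Data$ commute with the replacement. The result is a $\VData$-formula $\phi[\sval(u_1),\ldots,\sval(u_n)]$, which is exactly the shape of an $\Assertion{\TT}$ formula.

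Equivalence holds because each replacement preserves the denotation in every state $(\TT,\val)$. This also yields the displayed equation $\varphi[\vec v]\equiv\phi_r[\sval(t_r^{(1)}(\vec v)),\ldots]$ for normalized assertions, with each $\alpha_r[\vec v]$ collapsing to $\top$ for exactly $r=r(\vec v)$.

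The main subtlety I expect is bookkeeping rather than mathematics. First, we must check that the definition of $\Assertion{\TT}$ allows data quantifiers and arbitrary $\VData$ formulas, so that the translated $\phi$ qualifies. Second, we must check that the node substitution is well defined when distinct node terms denote the same node; this is harmless, since we substitute denotations rather than syntax. With these points checked, replacing the pre- and postconditions by their normal-form equivalents gives a triple with the same validity, since validity of a Hoare triple depends only on the sets of states its assertions denote.
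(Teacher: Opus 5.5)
Your proposal is correct and follows the same route as the paper. Over the fixed finite $\TT$, the pure-node parts of $\varphi[\vec v]$ evaluate to $\top$ or $\bot$, and each $\sval(t[\vec v])$ is replaced by $\sval(u)$ with $u = t(\vec v)$; this yields an $\Assertion{\TT}$ formula, and the finite conjunction is then immediate. Your structural induction is a somewhat more careful version of the paper's appeal to Step 1 of normalization, since it handles node atoms under data quantifiers directly, and you also explicitly cover the $\init$/$\error$ assertions, which the paper leaves implicit.
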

\begin{proof}
    For any formula $\varphi[\nv_1, \ldots, \nv_k] \in \QFAshcroft$ and tuple $\vec{v} \in \TT^k$, the formula $\varphi[\vec{v}]$ is a Boolean combination of 
    formulas of the form $\phi_\Node[\vec{v}]$ and $\phi_\Data[\sval(t_1[\vec{v}]), \ldots, \sval(t_n[\vec{v}])]$ (see Step 1 of the normalization algorithm in \Cref{sec:nai}). The former is equivalent to either true or false. The latter is equivalent to the formula $\phi_\Data[\sval(u_1), \ldots, \sval(u_n)]$ where $\vec{u} = (t_1(\vec{v}), \ldots, t_n(\vec{v}))$. Hence, $\varphi[\vec{v}]$ is equivalent to a formula in $\Assertion{\TT}$, and so is any conjunction $\bigwedge_{\vec{v} \in \TT^k} \varphi[\vec{v}]$.
\end{proof}

\begin{lemma} \label{lem:li-p}
    Let $\TT$ and $\TT'$ be topologies with an isomorphism $\beta : \TT \to \TT'$. Let $\val$ be a global state on $\TT'$.
    \begin{enumerate}
        \item For any \QFAshcroft{} formula $\varphi[\nv_1, \ldots, \nv_k]$
    and tuple $\vec{v} \in {\TT'}^k$, if $\val \models \varphi(\vec{v})$, then $\beta^* \val \models \varphi(\beta^{-1}(\vec{v}))$.
        \item For any Ashcroft assertion $\Phi$, if $\val \models \Phi$, then $\beta^* \val \models \Phi$.
    \end{enumerate}
\end{lemma}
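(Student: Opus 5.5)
The plan is to prove part (1) by structural induction on the \QFAshcroft{} formula $\varphi$, and then derive part (2) from (1) using that $\beta$ is a bijection on $\Node$. The induction must range over formulas with free $\Data$-sorted variables, since $\Data$-quantifiers may occur inside $\varphi$. So I would strengthen the claim: for every $\QFAshcroft$-style formula $\varphi$ with free $\Node$ variables $\vec{\nv}$ and free $\Data$ variables $\vec{y}$, every $\vec{v} \in {\TT'}^k$ and every $\vec{d} \in \DD^{|\vec{y}|}$, we have $\val \models \varphi(\vec{v}, \vec{d})$ iff $\beta^*\val \models \varphi(\beta^{-1}(\vec{v}), \vec{d})$. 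Proving the biconditional costs nothing extra, and it makes the negation case immediate.

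The first step is a term lemma, itself proved by induction on terms.
\begin{itemize}
\item For a $\Node$-sorted term $t$ (built from $\VNode$ symbols and the variables $\vec{\nv}$), $t^{\TT}(\beta^{-1}(\vec{v})) = \beta^{-1}(t^{\TT'}(\vec{v}))$. This holds because $\beta^{-1}$ is an isomorphism and so commutes with function symbols and constants.
\item For a $\Data$-sorted term, its value under $(\beta^*\val, \beta^{-1}(\vec{v}), \vec{d})$ equals its value under $(\val, \vec{v}, \vec{d})$. The only nontrivial case is $\sval(t)$. Here $\beta^*\val = \val\circ\beta$, so $(\val\circ\beta)(\beta^{-1}(t^{\TT'}(\vec{v}))) = \val(t^{\TT'}(\vec{v}))$. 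The $\VData$ function symbols are interpreted in the fixed $\DD$ on both sides, so those cases are trivial.
\end{itemize}

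With the term lemma in hand, the atomic cases follow directly.
\begin{itemize}
\item $\VNode$ predicates and $\Node$ equalities are preserved and reflected by the isomorphism $\beta^{-1}$.
\item $\VData$ predicates and $\Data$ equalities have identical argument values on both sides, by the term lemma.
\end{itemize}
The Boolean connectives are immediate from the induction hypothesis. For a $\Data$ quantifier $\exists y$ or $\forall y$, both sides range over the same domain $\DD$, so the induction hypothesis applied to each $d \in \DD$ closes the case. No $\Node$ quantifiers occur, by the definition of \QFAshcroft{}. Specialising to formulas with no free $\Data$ variables gives (1).

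For (2), write $\Phi = \forall \vec{\nv}.\,\varphi$ and assume $\val \models \Phi$. Given any $\vec{w} \in \TT^k$, set $\vec{v} := \beta(\vec{w}) \in {\TT'}^k$. Then $\val \models \varphi(\vec{v})$, and by (1) $\beta^*\val \models \varphi(\beta^{-1}(\vec{v})) = \varphi(\vec{w})$. Since $\vec{w}$ was arbitrary, $\beta^*\val \models \Phi$.

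The only real obstacle is the bookkeeping around $\beta^*$. One has to check that the contravariance of $\beta^* \val = \val\circ\beta$ matches the use of $\beta^{-1}$ on nodes, and to remember the strengthening needed for free $\Data$ variables under quantifiers. Everything else is routine.
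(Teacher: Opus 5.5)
Your proposal is correct and takes the same approach as the paper: structural induction on $\varphi$ for (1), then (2) by instantiating $\Phi$ at $\beta(\vec{w})$ and applying (1). The paper only sketches the base case for atoms $\phi[\sval(\nv_1), \ldots, \sval(\nv_k)]$, so your term lemma and your strengthening to a biconditional with free $\Data$-sorted variables (needed to carry the induction through $\Data$ quantifiers) are a more careful execution of the same argument.
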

\begin{proof}
    Let $\varphi[\nv_1, \ldots, \nv_k]$ be a \QFAshcroft{} formula and tuple $\vec{v} \in {\TT'}^k$. Statement 1 can be shown by induction on the structure of $\varphi$. For instance, as a base case, if $\varphi$ is a formula of the form $\phi[\sval(\nv_1), \ldots, \sval(\nv_k)]$, where $\phi$ is a $\VData$-formula, then
    \begin{align*}
        \val \models \varphi(\vec{v}) &\iff \DD \models \phi(\val(v_1), \ldots, \val(v_k)) \\
        &\iff \DD \models \phi((\beta^*\val)(\beta^{-1}(v_1)), \ldots, (\beta^*\val)(\beta^{-1}(v_k))) \\
        &\iff \beta^* \val \models \varphi(\beta^{-1}(\vec{v})).
    \end{align*}

    Statement 2 follows from 1. For any $\vec{u} \in \TT^k$,
    \begin{align*}
        \val \models \forall \vec{\nv}.\, \varphi(\vec{\nv}) &\implies \val \models \varphi(\beta(\vec{u})) \\
        &\implies \beta^* \val \models \varphi(\beta^{-1}(\beta(\vec{u}))) \\
        &\implies \beta^* \val \models \varphi(\vec{u}).
    \end{align*}
    Hence, $\val \models \forall \vec{\nv}.\, \varphi(\vec{\nv}) \implies \beta^* \val \models \forall \vec{\nv}.\, \varphi(\vec{\nv})$.

\end{proof}

\begin{lemma} \label{lem:li-t}
    Let $\tp{\TT, \sem{-}}$ and $\tp{\TT', \sem{-}'}$ be two programs in a parameterized program such that there is an embedding $f : \TT \into \TT'$. For any global states $\val$, $\val_1$, and $\val_2$ on $\TT'$,
    \begin{itemize}
        \item For any node $v \in \TT$, if $\val_1 \xrightarrow{f(v)} \val_2$, then $\val_1 \circ f \xrightarrow{v} \val_2 \circ f$.
        \item For any parameterized specification, if $\val$ is an initial state, then $\val \circ f$ is an initial state; if $\val|_{f(\TT)}$ contains an error, then $\val \circ f$ contains an error.
    \end{itemize}

\end{lemma}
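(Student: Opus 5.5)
The plan is to unfold the definitions and use the parameterized-program symmetry condition, applied node by node through the embedding $f$. The key fact is that an embedding restricts to local isomorphisms on neighbourhoods. For any $v \in \TT$, $f$ maps the substructure generated by $v$ in $\TT$ onto the substructure generated by $f(v)$ in $\TT'$, because embeddings commute with the function symbols. Hence $\beta := f|_{\nb(v)} : \nb(v) \to \nb(f(v))$ is an isomorphism with $\beta(v) = f(v)$, that is, a local isomorphism witnessing $v \simeq f(v)$. Since both programs belong to the same parameterized program, $\sem{v} = \beta^*\sem{f(v)}'$. By the symmetry of the parameterized specification, we likewise get $\init(v) = \beta^*\init'(f(v))$ and $\error(v) = \beta^*\error'(f(v))$.

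For the first item, suppose $\val_1 \xrightarrow{f(v)} \val_2$. By definition of the induced transition relation, $(\val_1|_{\nb(f(v))}, \val_2|_{\nb(f(v))}) \in \sem{f(v)}'$, and $\val_1, \val_2$ agree outside $\nb(f(v))$. Applying $\beta^*$ gives $(\val_1|_{\nb(f(v))} \circ \beta,\ \val_2|_{\nb(f(v))} \circ \beta) \in \sem{v}$. The left component equals $(\val_1 \circ f)|_{\nb(v)}$, and similarly for $\val_2$, so the local step holds. For the frame condition, take $u \in \TT \setminus \nb(v)$. Then $f(u) \notin \nb(f(v)) = f(\nb(v))$ by injectivity of $f$, so $\val_1(f(u)) = \val_2(f(u))$. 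Together these give $\val_1 \circ f \xrightarrow{v} \val_2 \circ f$.

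For the second item, let $\val$ be initial on $\TT'$ and $v \in \TT$. Then $\val|_{\nb(f(v))} \in \init'(f(v))$, so $(\val \circ f)|_{\nb(v)} = \beta^*(\val|_{\nb(f(v))}) \in \beta^*\init'(f(v)) = \init(v)$. Thus $\val \circ f$ is initial. For errors, I read ``$\val|_{f(\TT)}$ contains an error'' as saying that there is some $f(v)$ with $\val|_{\nb(f(v))} \in \error'(f(v))$; here $\nb(f(v)) \subseteq f(\TT)$ because $f(\TT)$ is a substructure. The same computation then gives $(\val\circ f)|_{\nb(v)} \in \error(v)$, so $\val \circ f$ contains an error.

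The only step needing care is the claim that $f(\nb(v)) = \nb(f(v))$. I would prove it by noting that $f(\nb(v))$ is closed under the function symbols, because $f$ preserves them, and contains $f(v)$, which gives $\supseteq$. Conversely, every element of $\nb(v)$ is $t(v)$ for some term $t$, and $f(t(v)) = t(f(v))$, which gives $\subseteq$. The isomorphism property of the restriction then follows because $f$ preserves and reflects all symbols.
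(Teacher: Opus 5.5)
Your proposal is correct and takes essentially the same approach as the paper. Both restrict $f$ to the local isomorphism $\beta = f|_{\nb(v)}$ from $v$ to $f(v)$, pull membership in $\sem{-}$, $\init$, and $\error$ back along $\beta^*$ using the symmetry conditions, and use injectivity of $f$ for the frame condition. You also prove $f(\nb(v)) = \nb(f(v))$, which the paper uses without justification.
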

\begin{proof}
    Let $\val$, $\val_1$, and $\val_2$ be global states on $\TT'$.
    \begin{itemize}
        \item Let $v$ be a node in $\TT$. Restrict $f$ to a local isomorphism $\beta$ from $v$ to $f(v)$. Also restrict $f$ to an injection $g$ from $\setcomp{\nb(v)}$ into $\setcomp{\beta(\nb(v))} = \setcomp{\nb(\beta(v))}$.
    
        Suppose $\val_1 \xrightarrow{f(v)} \val_2$, which means $(\val_1, \val_2)|_{\nb(\beta(v))} \in \sem{\beta(v)}'$ and $\val_1|_{\nb(\beta(v))} = \val_2|_{\nb(\beta(v))}$. Then,
        \begin{gather*}
            (\val_1 \circ f, \val_2 \circ f)|_{\nb(v)} = \beta^*((\val_1, \val_2)|_{\nb(\beta(v))}) \in \beta^* \sem{\beta(v)}' = \sem{v}, \\
            (\val_1 \circ f)|_{\setcomp{\nb(v)}} = \val_1|_{\setcomp{\nb(\beta(v))}} \circ g = \val_2|_{\setcomp{\nb(\beta(v))}} \circ g = (\val_2 \circ f)|_{\setcomp{\nb(v)}}.
        \end{gather*}
        Hence, $\val_1 \circ f \xrightarrow{v} \val_2 \circ f$.

        \item Let $\{\tp{\init_i, \error_i}\}_{i\in\cI}$ be a parameterized specification. Suppose $v \in \TT$ satisfies $\val|_{\nb(f(v))} \in \error_i(f(v))$. Restrict $f$ to a local isomorphism $\beta$ from $v$ to $f(v)$. Then
        \[ (\val \circ f)|_{\nb(v)} = \beta^* (\val|_{\nb(f(v))}) \in \beta^* \error_i(\beta(v)) = \error_j(v). \]
        Hence, $\val \circ f$ contains an error.

        The proof for initial states is similar.

    \end{itemize}
\end{proof}

\begin{theorem}[\Cref{thm:ashcroft-basis}]
    Let $\cP = \{ \prog_i \}_{i\in\cI}$ be a parameterized program over a topology family $\cT = \{ \TT_i \}_{i\in\cI}$. For any $k \in \NN$, if $\cT$ contains a basis $\{ \TT_j \}_{j\in\cJ}$ of rank $k + 1$ as a subset, then for any $\Phi$ of width $k$ and parameterized safety specification, $\Phi$ is an Ashcroft invariant for $\cP$ if and only if it is an Ashcroft invariant for $\{ \prog_j \}_{j\in\cJ}$.
\end{theorem}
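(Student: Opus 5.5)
The forward direction is immediate: $\{\prog_j\}_{j\in\cJ}$ is a subfamily of $\cP$, and being an Ashcroft invariant for a parameterized program means being one for every member. For the converse I will assume $\Phi = \forall \vec{\nv}.\,\varphi[\vec{\nv}]$ of width $k$ is an Ashcroft invariant for every $\prog_j$, $j\in\cJ$. I will then fix an arbitrary $i\in\cI$ and check the three conditions (Initialization, Continuation, Safety) for $\prog_i$, arguing by contrapositive in each case. A failure of any condition is witnessed by a bounded set of nodes, namely at most $k+1$. That set can be pulled back along an embedding $f:\TT_j\into\TT_i$ supplied by the basis (Definition~\ref{def:topo-basis}), and Lemma~\ref{lem:li-t} transports the relevant concrete states along $f$ via $\val\mapsto \val\circ f$.

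The key auxiliary fact concerns evaluation of $\QFAshcroft$ formulas under an embedding. For an embedding $f:\TT_j\into\TT_i$, a global state $\val$ on $\TT_i$, and $\vec{u}\in\TT_j^k$, I want
\[ \val\models\varphi(f(\vec{u})) \iff \val\circ f\models\varphi(\vec{u}). \]
The proof is by induction on $\varphi$, in the same way as Lemma~\ref{lem:li-p}. The point is that $f$ preserves $\VNode$-terms, $f(t(\vec{u}))=t(f(\vec{u}))$, and reflects atomic $\VNode$-formulas since it is an embedding. Data atoms evaluate $\sval$ only on such terms, and $(\val\circ f)(t(\vec{u}))=\val(t(f(\vec{u})))$. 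Data-sorted quantifiers range over $\DD$ in both cases, so they pass through the induction unchanged. As a consequence, whenever $\val\models\Phi$ on $\TT_i$, we also have $\val\circ f\models\Phi$ on $\TT_j$, because every $k$-tuple of $\TT_j$ maps to a $k$-tuple of $\TT_i$.

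Now the three cases. For \textbf{Initialization}, suppose $\val$ is initial on $\TT_i$ but $\val\not\models\varphi(\vec{w})$ for some $\vec{w}\in\TT_i^k$. Pad $\vec{w}$ to a $(k+1)$-tuple by repeating an entry and apply the basis property. This gives $f:\TT_j\into\TT_i$ with $\vec{w}=f(\vec{u})$. By Lemma~\ref{lem:li-t}, $\val\circ f$ is initial, and by the auxiliary fact $\val\circ f\not\models\varphi(\vec{u})$, which contradicts the hypothesis for $\prog_j$. For \textbf{Safety}, suppose $\val\models\Phi$ but $\val|_{\nb(w)}\in\error_i(w)$. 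Take $f$ whose image contains $w$. Then $\val\circ f\models\Phi$ and, by Lemma~\ref{lem:li-t}, $\val\circ f$ contains an error, again a contradiction. For \textbf{Continuation}, suppose $\val\models\Phi$, $\val\xrightarrow{v_0}\val'$, and $\val'\not\models\varphi(\vec{w})$. Apply the basis property to the $(k+1)$-tuple $(v_0,\vec{w})$; this is exactly where rank $k+1$ is used. We get $f:\TT_j\into\TT_i$ with $v_0=f(a)$ and $\vec{w}=f(\vec{u})$. Lemma~\ref{lem:li-t} gives $\val\circ f\xrightarrow{a}\val'\circ f$, the auxiliary fact gives $\val\circ f\models\Phi$, and it also gives $\val'\circ f\not\models\varphi(\vec{u})$. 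So $\Phi$ is not inductive for $\prog_j$, a contradiction.

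I expect the main obstacle to be the auxiliary transport lemma and the correct use of Lemma~\ref{lem:li-t}. The auxiliary lemma needs care with the nested term structure and with data quantifiers. Lemma~\ref{lem:li-t} relies on $\nb(f(v))=f(\nb(v))$, i.e.\ on embeddings preserving generated substructures. I also have to check that the frame condition of a transition restricts correctly: nodes of $\TT_j$ outside $\nb(a)$ map outside $\nb(f(a))$ by injectivity, so their values remain unchanged. One subtlety is that the basis members must themselves be programs of $\cP$, so that the semantics and specification on $\TT_j$ are the symmetric ones. The hypothesis that $\cT$ contains the basis as a subset ensures exactly this.
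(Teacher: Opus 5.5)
Your proposal is correct and follows essentially the same route as the paper. The forward direction is by restriction to the subfamily. The converse uses the same three-case check of Initialization, Continuation and Safety, with the rank-$(k+1)$ basis applied to $(v_0,\vec{w})$ for continuation and Lemma~\ref{lem:li-t} pulling back initial states, transitions and errors. The differences are only in packaging: you argue each case contrapositively, and your single transport fact $\val\models\varphi(f(\vec{u}))\iff\val\circ f\models\varphi(\vec{u})$ merges two steps the paper takes separately, namely restricting $\val$ to the image $f(\TT_j)$ (valid since $\varphi$ has no node quantifiers) and applying Lemma~\ref{lem:li-p} to the isomorphism $\beta:\TT_j\to f(\TT_j)$.
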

\begin{proof}
    The forward direction is trivial. For the backward direction, let $k \in \NN$ and suppose $\cT$ contains a basis $\{ \TT_j \}_{j\in\cJ}$ of rank $k + 1$ as a subset. Let $\Phi := \forall \vec{\nv}.\, \varphi[\vec{\nv}]$ be an Ashcroft invariant for $\{ \prog_j \}_{j\in\cJ}$ w.r.t. parameterized specification $\{\tp{\init_i, \error_i}\}_{i\in\cI}$. Let $i \in \cI$. We prove that $\Phi$ is an Ashcroft invariant for $\prog_i$.

    \begin{description}
        \item[Initialization] Let $\val$ be an initial state on $\TT_i$. Let $\vec{v} \in \TT_i$ be an arbitrary tuple. Observe that any basis of rank $k + 1$ is a basis of any rank lower than $k + 1$. Hence, we can take $j \in \cJ$ such that there is an embedding $f : \TT_j \into \TT_i$ such that the image $\TT'_j := f(\TT_j)$ contains $\vec{v}$. Let $\val' = \val|_{\TT'_j}$ and $\beta : \TT_j \to \TT'_j$ be $f$ with codomain restricted to $\TT'_j$.
        \begin{align*}
            & \beta^* \val' \text{ is an initial state on } \TT_j &&\text{\Cref{lem:li-t}} \\
            {}\implies& \beta^* \val' \models \Phi &&\text{The initialization property of $\Phi$} \\
            {}\implies& \beta^*\val' \models \varphi(\beta^{-1}(\vec{v})) &&\beta^{-1}(\vec{v}) \in \TT_j^k \\
            {}\implies& \val' \models \varphi(\vec{v}) &&\text{\Cref{lem:li-p} applied to $\beta^{-1}$} \\
            {}\implies& \val \models \varphi(\vec{v}). &&\text{$\varphi$ has no quantification over nodes}
        \end{align*}

        \item[Continuation] Let $\val_1 \xrightarrow{u} \val_2$ be a transition in $\prog_i$, where $\val_1$ and $\val_2$ are global states on $\TT_i$. We prove that if $\val_1 \models \Phi$, then $\val_2 \models \Phi$.

        Let $\vec{v} \in \TT_i^k$ be an arbitrary tuple. By the definition of a basis of rank $k + 1$, take $j \in \cJ$ such that there is an embedding $f : \TT_j \into \TT_i$ such that the image $\TT'_j := f(\TT_j)$ contains $u$ and $\vec{v}$. Let $\val'_1 = \val_1|_{\TT'_j}$, $\val'_2 = \val_2|_{\TT'_j}$, and $\beta : \TT_j \to \TT'_j$ be $f$ with codomain restricted to $\TT'_j$.
        \begin{align*}
            \val_1 \models \Phi &\implies \val'_1 \models \Phi &&\text{$\Phi$ quantifies universally over nodes} \\
            &\implies \beta^* \val'_1 \models \Phi &&\text{\Cref{lem:li-p}} \\
            &\implies \beta^* \val'_2 \models \Phi &&\text{\Cref{lem:li-t} and the continuation of $\Phi$} \\
            &\implies \beta^* \val'_2 \models \varphi(\beta^{-1}(\vec{v})) && \beta^{-1}(\vec{v}) \in \TT_j^k \\
            &\implies \val'_2 \models \varphi(\vec{v}) &&\text{\Cref{lem:li-p} applied to $\beta^{-1}$} \\
            &\implies\val_2 \models \varphi(\vec{v}). &&\text{$\varphi$ has no quantification over nodes}
        \end{align*}
        \item[Safety] Let $\val$ be a global state on $\TT_i$. Let $v \in \TT_i$ be an arbitrary node. We prove that if $\val \models \Phi$, then $v$ does not witness an error.

        Take $j \in \cJ$ such that there is an embedding $f : \TT_j \into \TT_i$ such that the image $\TT'_j := f(\TT_j)$ contains $v$. Let $\val' = \val|_{\TT'_j}$ and $\beta : \TT_j \to \TT'_j$ be $f$ with codomain restricted to $\TT'_j$. Then
        \begin{align*}
            \val \models \Phi &\implies \val' \models \Phi \\
            &\implies \beta^* \val' \models \Phi \\
            &\implies \text{No error in } \beta^* \val' \\
            &\implies \text{No error in } \val', &&
        \end{align*}
        where the last line follows from \Cref{lem:li-t}.

    \end{description}
\end{proof}

\begin{theorem}[\Cref{thm:ashcroft-fin}]
    Let $k \in \NN$ and $\cT$ be a topology family such that a finite basis of rank $k+1$ contained in $\cT$ is given. For any Ashcroft assertion $\Phi$ of width $k$, parameterized program over $\cT$, and parameterized safety specification, one can compute a $\VData$-sentence that is valid modulo $\DD$ iff $\Phi$ is an Ashcroft invariant.
\end{theorem}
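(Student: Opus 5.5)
By \Cref{thm:ashcroft-basis}, $\Phi$ is an Ashcroft invariant for the whole parameterized program if and only if it is one for each of the finitely many programs $\prog_j$ over the given finite basis $\{\TT_j\}_{j\in\cJ}$ of rank $k+1$. So it suffices to compute, for each $j \in \cJ$, a $\VData$-sentence $\theta_j$ that is valid modulo $\DD$ iff $\Phi$ is an invariant for $\prog_j$. We then output $\bigwedge_{j\in\cJ}\theta_j$. This is a finite conjunction because $\cJ$ is finite, and a conjunction of sentences is valid iff each conjunct is.

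Fix $j$. By \Cref{prop:ashcroft-htri}, $\Phi$ is an invariant for $\prog_j$ iff every Hoare triple in $H_\Phi^{\TT_j}$ (\Cref{fig:hts}) is valid. Since $\TT_j$ is a finite topology, the quantifications $\forall \vec{w}\in\TT_j^k$ and $\forall v_0, w\in\TT_j$ range over finitely many choices, so $H_\Phi^{\TT_j}$ is a finite, explicitly enumerable set of triples. The pre- and postconditions are finite conjunctions $\bigwedge_{\vec v\in\TT_j^k}\varphi[\vec v]$, $\bigwedge_{v}\init_j(v)$, $\varphi[\vec w]$, and $\neg\error_j(w)$.

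Next, we convert each triple to normal form effectively, following the proof of \Cref{prop:htri-nf}. After substituting concrete nodes of $\TT_j$, each $\phi_\Node[\vec v]$ subformula is a closed quantifier-free $\VNode$-formula over the finite structure $\TT_j$. We evaluate it to $\top$ or $\bot$ by evaluating terms in $\TT_j$. Likewise, each term $t(\vec v)$ occurring under $\sval$ is evaluated to a concrete node $u$. The result is an $\Assertion{\TT_j}$ formula.

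This is the step that needs the most care. We must ensure that $\TT_j$ is given as a computable finite structure, so that evaluating function and predicate symbols is effective. We must also check that the step-1 separation of $\varphi$ into $\VNode$-atoms and $\VData$-atoms is well defined for arbitrary \QFAshcroft{} formulas, including $\Data$-sorted quantifiers that may appear inside data subformulas. Atoms mixing sorts occur only via $\sval$, so the split is syntactic. The initial, transition, and error formulas are $\Assertion{\TT_j}$/$\Transition{\TT_j}$ formulas by assumption, and $\gform_{\TT_j}\sem{v_0}$ is a finite conjunction computable from them.

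Finally, we apply \Cref{prop:valid-entail} to each normalized triple. A triple $\htri{\varphi}{a}{\psi}$ is valid iff $\DD\models\subst(\varphi\land\gform_{\TT_j}\sem{a}\Rightarrow\psi')$, and a triple with $\epsilon$ is valid iff $\DD\models\subst(\varphi\Rightarrow\psi)$. Each of these is a $\VData$-formula with free $\Data$ variables $x_v, x'_v$. We take its universal closure to obtain a sentence valid modulo $\DD$ iff the triple is valid. Then $\theta_j$ is the conjunction over the finitely many triples. Every step is a terminating syntactic transformation, so the resulting sentence is computable, which completes the argument.
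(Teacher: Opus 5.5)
Your proposal is correct and follows the same route as the paper's proof. Both use \Cref{thm:ashcroft-basis} and \Cref{prop:ashcroft-htri} to reduce the problem to validity of the finitely many triples in $\bigcup_{j\in\cJ} H_\Phi^{\TT_j}$, turn each triple into a $\VData$-sentence via \Cref{prop:htri-nf} and \Cref{prop:valid-entail}, and take the conjunction. Your extra remarks on effectiveness (evaluating node terms and atoms in the finite $\TT_j$, and taking universal closures over the $x_v, x'_v$) only spell out what the paper leaves implicit.
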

\begin{proof}
    Write the topology family $\cT$ as $\{ \TT_i \}_{i\in\cI}$ and the parameterized program as $\{ \prog_i \}_{i\in\cI}$. Suppose a basis $\{ \TT_j \}_{j\in\cJ}$ is given. Then, by \Cref{thm:ashcroft-basis,prop:ashcroft-htri}, $\Phi$ is an Ashcroft invariant for $\cP$ if and only if the Hoare triples in $H := \bigcup_{j\in\cJ} H_{\Phi}^{\TT_j}$ are all valid. By \Cref{prop:htri-nf,prop:valid-entail}, for each Hoare triple, we can compute a $\VData$-sentence that is valid modulo $\DD$ if and only if the triple is valid. Since $H$ is finite, the conjunction of these sentences satisfies our need.
\end{proof}

\begin{lemma}[\Cref{lem:qftype-fin}]
    Fix a finite vocabulary $\cV$. Let $\cT$ be a class of $\cV$-structures that is uniformly locally finite.
    \begin{enumerate}
        \item There are finitely many quantifier-free $k$-types $T_1, \ldots, T_m$ in $\cT$, and every $T_r$ is definable by a quantifier-free formula $\alpha_r$, i.e., for any $\cV$-structure $\TT$ and $\vec{v} \in \TT^k$, we have $\TT \models \alpha_r(\vec{v}) \iff \qftype(\TT, \vec{v}) = T_r$. Moreover, the formula defining the quantifier-free $k$-type of a given node tuple in $\cT$ is computable.
        \item Any quantifier-free formula with $k$ free variables is equivalent over $\cT$ to a disjunction of a subset of the $\alpha_r$'s.
    \end{enumerate}
\end{lemma}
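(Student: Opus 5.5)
The plan is to reduce everything to terms of bounded height. Uniform local finiteness gives, for $d = k$, a bound $n$ such that every neighbourhood $\nb(\vec{v})$ with $\vec{v} \in \TT^k$, $\TT \in \cT$, has at most $n$ elements. The first claim to establish is that every element of $\nb(\vec{v})$ is the value of some term of height at most $n$ in $\vec{v}$. To prove it, let $G_h$ be the set of values of terms of height $\le h$. Then $G_0 \subseteq G_1 \subseteq \cdots \subseteq \nb(\vec{v})$. Once $G_h = G_{h+1}$, the set is closed under all function symbols of $\cV$, so it equals $\nb(\vec{v})$. Since $|\nb(\vec{v})| \le n$ and the chain can strictly grow at most $n$ times, stabilization happens by height $n$.

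Since $\cV$ is finite and $k$ is fixed, there are only finitely many terms $t_1, \ldots, t_M$ in $\nu_1, \ldots, \nu_k$ of height at most $n+1$. Let $\Delta$ be the finite set of atomic formulas built from these terms: equalities $t_a = t_b$, predicate applications, and equalities $f(t_{a_1}, \ldots) = t_b$ with $f(t_{a_1},\ldots)$ of height $\le n+1$. For a tuple $(\TT,\vec{v})$, define $\alpha_{\TT,\vec{v}}$ as the conjunction of the literals from $\Delta$ that are true at $\vec{v}$.

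The main step is to show that for any $\cV$-structure $\TT'$ with $\vec{u} \in \TT'^k$, the implication $\TT' \models \alpha_{\TT,\vec{v}}(\vec{u})$ gives $\qftype(\TT,\vec{v}) = \qftype(\TT',\vec{u})$. I will use \Cref{prop:qftype-sym} and build a local isomorphism. Fix representative terms of height $\le n$ for the elements of $\nb(\vec{v})$, and define $\beta(t(\vec{v})) := t(\vec{u})$ for these. The equality literals make $\beta$ well defined and injective. The literals about $f(t_{a_1},\ldots)$ show that the image of $\beta$ is closed under the functions, which gives surjectivity onto $\nb(\vec{u})$. Those literals also show that $\beta$ commutes with the functions, and the predicate literals show that it preserves and reflects relations.

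This step is the main obstacle. Because $\TT'$ is arbitrary, I must check that closure of the image in $\TT'$ is forced by the height-$(n+1)$ literals alone, and not assumed from local finiteness of $\TT'$. That is why $\Delta$ includes terms one level higher than the representatives.

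Finiteness follows because $\Delta$ is finite, so there are finitely many formulas $\alpha_{\TT,\vec{v}}$. Hence there are finitely many types $T_1,\ldots,T_m$ realized in $\cT$, each defined by its $\alpha_r$. Computability also follows: given a tuple in a finite topology, evaluate every literal of $\Delta$ there.

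For (2), take a quantifier-free $\phi$ with $k$ free variables. By \Cref{prop:qftype-sym}, tuples of the same type agree on $\phi$, so $\phi$ is equivalent over $\cT$ to $\bigvee\{\alpha_r : \phi \in T_r\}$. This uses the fact that every tuple in $\cT$ satisfies exactly one $\alpha_r$, namely the one for its own type.
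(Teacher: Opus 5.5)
Your proof is correct and takes essentially the same route as the paper. Both describe $\nb(\vec{v})$ through bounded-height representative terms together with equality, inequality, relation and function-graph literals, check that this conjunction forces a local isomorphism (you spell this out where the paper only says ``one can verify''), and derive (2) from \Cref{prop:qftype-sym}. One small adjustment is needed for the computability claim: build $\Delta$ from the bound $|\nb(\vec{v})|$, which is computable from the given finite topology, rather than from the uniform bound $n$, which need not be known; your main-step argument goes through unchanged with that bound.
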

\begin{proof}
\begin{enumerate}
    \item Let $\TT \in \cT$ and $\vec{v} \in \TT^k$. For each $u \in \nb(\vec{v})$, let $t_u[\vec{\nv}]$ be a $\cV$-term of minimal height such that $t_u(\vec{v}) = u$. Let $\alpha$ be the conjunction of the following formulas:
    \begin{itemize}
        \item $t_{v_i} = \nv_i$ for each $i \in \{ 1, \ldots, k\}$;
        \item $t_{u_1} \neq t_{u_2}$ for all distinct $u_1, u_2 \in \nb(\vec{v})$;
        \item for all relation symbol $R \in \cV$ and $\vec{u} \in \nb(\vec{v})^n$ where $n = \ar(R)$, if $R(\vec{u})$ is true, then $R (t_{u_1}, \ldots, t_{u_n})$, otherwise $\neg R (t_{u_1}, \ldots, t_{u_n})$; and
        \item $f (t_{u_1}, \ldots, t_{u_r}) = t_w$ for all function symbol $f \in \cV$, $\vec{u} \in \nb(\vec{v})^n$ where $n = \ar(f)$, and $w = f(u_1, \ldots, u_n)$.
    \end{itemize}
    One can verify that $\alpha$ defines $\qftype(\TT, \vec{v})$. The height of terms in $\alpha$ is bounded above by $|\nb(\vec{v})|$. By uniform local finiteness, up to logical equivalence, there are finitely many possible choices of $\alpha$, and therefore finitely many quantifier-free $k$-types.
    \item Combining 1 and \Cref{prop:qftype-sym}, we have that any $\simeq$-equivalence class of node tuples of length $k$ in $\cT$ is definable by some $\alpha_r$.
    
    Let $\phi$ be a quantifier-free formula with $k$ variables. The class of node tuples in $\cT$ defined by $\phi$ is closed under $\simeq$-equivalence. Thus, it is a union of $\simeq$-equivalence classes. Take the disjunction of the $\alpha_r$'s that define each equivalence class.
\end{enumerate}
\end{proof}

\begin{theorem}[\Cref{thm:nf2}]
    Assume the setting of \Cref{thm:nf}. Let $\vec{\chi} \in \{\top,\bot\}^m$ be any indicator vector that selects a basis of rank $k$ from $\cT$. Then over $\cT$, any Ashcroft assertion is equivalent to a normalized Ashcroft assertion of the same width
    \[ \forall \nv_1, \ldots, \nv_k.\, \bigvee_{r=1}^m \alpha_r[\vec{\nv}] \land \phi_r[\sval(t_r^{(1)}[\vec{\nv}]), \ldots, \sval(t_r^{(n_r)}[\vec{\nv}])] \]
    such that for every $r$,  $\phi_r$ is symmetric w.r.t. the type $T_r$, and $\phi_r \equiv \top$ if $\chi_r \equiv \bot$.
\end{theorem}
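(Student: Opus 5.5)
We start from \Cref{thm:nf}: any Ashcroft assertion $\Phi$ of width $k$ is equivalent over $\cT$ to
\[
\Psi \;:=\; \forall \vec{\nv}.\, \bigvee_{r=1}^m \alpha_r[\vec{\nv}] \land \psi_r\bigl[\sval(t_r^{(1)}[\vec{\nv}]), \ldots, \sval(t_r^{(n_r)}[\vec{\nv}])\bigr].
\]
We then change the $\psi_r$ in two steps, keeping equivalence over $\cT$ at each step. The key tool is \Cref{lem:li-p}, applied to automorphisms and more generally to isomorphisms between neighbourhoods. The intended meaning of ``equivalent over $\cT$'' is that for every $\TT \in \cT$ and every global state $\val$ on $\TT$, we have $\val \models \Phi$ iff $\val \models \Psi'$, where $\Psi'$ is the new assertion.

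\textbf{Step 1 (symmetrization).} For each $r$, fix the realizing tuple $\vec{v}_r$ and let $G_r$ be the finite group of automorphisms of $\nb(\vec{v}_r)$. Each $f \in G_r$ induces the permutation $\pi_f$ on the representative terms. Define
\[
\phi_r[x_1,\ldots,x_{n_r}] := \bigwedge_{f \in G_r} \psi_r[x_{\pi_f(1)},\ldots,x_{\pi_f(n_r)}].
\]
This conjunction is symmetric by construction, since precomposing with $\pi_g$ only reindexes the conjunction over the group. It remains to show that replacing $\psi_r$ by $\phi_r$ preserves equivalence, and the direction $\phi_r \Rightarrow \psi_r$ is immediate. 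For the converse, suppose $\val \models \Psi$ and $\vec{v}$ has type $T_r$, and fix $f \in G_r$. Transport $f$ via a local isomorphism $\beta$ from $\vec{v}_r$ to $\vec{v}$; this gives an automorphism $g$ of $\nb(\vec{v})$. The tuple $g(\vec{v})$ has the same type $T_r$, and its representative terms evaluate to the permuted nodes: $t_r^{(i)}(g(\vec{v})) = g(t_r^{(i)}(\vec{v})) = t_r^{(\pi_f(i))}(\vec{v})$, up to the convention on the direction of $\pi_f$. Since $\Psi$ is universal, it holds at $g(\vec{v})$, which yields the $f$-conjunct at $\vec{v}$. The main bookkeeping obstacle is that $g(\vec{v})$ must lie inside $\nb(\vec{v})$. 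This holds because $g$ maps $\nb(\vec{v})$ to itself, and $g(\vec{v})$ generates the same neighbourhood.

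\textbf{Step 2 (dropping unselected types).} Define $\phi'_r := \phi_r$ if $\chi_r \equiv \top$ and $\phi'_r := \top$ otherwise. The new assertion is implied by the old one, so we must show the old one is implied by the new. Take $\vec{v}$ of type $T_s$ with $\chi_s \equiv \bot$. The selected neighbourhoods $\cU$ form a basis of rank $k$, so there is some $r$ with $\chi_r \equiv \top$ and an embedding $e : \nb(\vec{v}_r) \into \TT$ whose image contains $\vec{v}$. The plan is to recover $\phi_s$ at $\vec{v}$ from the selected conjunct $\phi_r$ at the tuple $e(\vec{v}_r)$.

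This is the main obstacle: a formula over one type has to be expressed through a formula over another type. The symmetric normal form does not provide this automatically; it has to be arranged when $\phi_r$ is defined. The fix is to strengthen $\phi_r$ for selected $r$ so that it also contains, as conjuncts, the original constraints of every tuple inside $\nb(\vec{v}_r)$:
\[
\phi_r := \bigwedge_{\vec{w} \in \nb(\vec{v}_r)^k} \psi_{r(\vec{w})}\text{ instantiated at } \vec{w},
\]
with each node of $\vec{w}$ written through the representative terms of $T_r$, which is possible because $\nb(\vec{w}) \subseteq \nb(\vec{v}_r)$. This family of conjuncts is closed under the automorphisms of $\nb(\vec{v}_r)$, so after symmetrizing as in Step 1 the result stays symmetric. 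Soundness follows because every conjunct is an instance of the original $\Psi$. Completeness follows by the basis argument: every $\vec{v}$ lies inside the embedded image of some selected $\nb(\vec{v}_r)$. Applying \Cref{lem:li-p} along $e$, as in the proof of \Cref{thm:ashcroft-basis}, transfers the conjunct for $e^{-1}(\vec{v})$ to $\vec{v}$ and gives $\psi_s$ at $\vec{v}$. The width is unchanged throughout, and the matrix is still of the normalized shape of \Cref{thm:nf}.
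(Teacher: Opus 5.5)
Your proposal is correct, and once the fix in Step 2 is adopted it is exactly the paper's construction. Indexing conjuncts by tuples $\vec{w} \in \nb(\vec{v}_r)^k$ matches the paper's indexing by pairs $(s, f)$ with $f : \AA_s \into \AA_r$, via $\vec{w} = f(\vec{v}_s)$; so your strengthened $\phi_r$ is the paper's $\bigwedge_{s}\bigwedge_{f : \AA_s \into \AA_r} \phi_s[x_{\pi_f(1)}, \ldots, x_{\pi_f(n_s)}]$, your soundness and basis-completeness arguments are its two directions of the equivalence, and your separate automorphism symmetrization in Step 1 becomes redundant (the paper does it all in one stroke, as you note yourself when observing that the sub-tuple conjuncts are already closed under automorphisms).
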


\begin{proof}
    Recall that for each $r$, $\vec{v}_r$ is an arbitrary tuple that realizes $T_r$, and $\vec{t}_r$ is a list of representative terms. Let $\vec{u}_r = \vec{t}_r(\vec{v}_r)$. Then $\vec{u}_r$, which is a tuple of length $n_r = |\nb(\vec{v}_r)|$, enumerates the nodes in $\AA_r := \nb(\vec{v}_r)$ in some order.

    Take an Ashcroft assertion $\Phi$ of width $k$ in normal form as presented in \Cref{thm:nf}, and set
    \[ \phi'_r[x_1, \ldots, x_{n_r}] := \bigwedge_{s=1}^m \bigwedge_{f : \AA_s \into \AA_r} \phi_s[x_{\pi_f(1)}, \ldots, x_{\pi_f(n_s)}], \]
    where $\pi_f : \{1,\ldots,n_s\} \into \{1,\ldots,n_r\}$ maps each $i$ to the unique $j$ such that $f(\vec{u}_s^{(i)}) = \vec{u}_r^{(j)}$.

    It is easy to see that $\phi'_r$ is symmetric w.r.t. the type $T_r$. Let $g$ be an isomorphism on $\AA_r$. Then,
    \begin{align*}
        \phi'_r[x_{\pi_g(1)}, \ldots, x_{\pi_g(n_r)}] &= \bigwedge_{s=1}^m \bigwedge_{f : \AA_s \into \AA_r} \phi_s[x_{\pi_g(\pi_f(1))}, \ldots, x_{\pi_g(\pi_f(n_s))}] \\
        &= \bigwedge_{s=1}^m \bigwedge_{f : \AA_s \into \AA_r} \phi_s[x_{\pi_{g \circ f}(1)}, \ldots, x_{\pi_{g \circ f}(n_s)}] \\
        &= \bigwedge_{s=1}^m \bigwedge_{f : \AA_s \into \AA_r} \phi_s[x_{\pi_f(1)}, \ldots, x_{\pi_f(n_s)}] \\
        &= \phi'_r[x_1, \ldots, x_{n_r}],
    \end{align*}
    where the second equality is because for all $i = 1, \ldots, n_s$,
    \[
        (g \circ f)(\vec{u}_s^{(i)}) = g(\vec{u}_r^{(\pi_f(i))}) = \vec{u}_r^{(\pi_g(\pi_f(i)))},
    \]
    and thus $\pi_{g \circ f} = \pi_g \circ \pi_f$. Consequently, $\chi_r \to \phi'_r$ is also symmetric w.r.t. the type $T_r$.

    Next, we will show that
    \[ \Phi' := \forall \nv_1, \ldots, \nv_k.\, \bigvee_{r=1}^m \alpha_r[\vec{\nv}] \land (\chi_r \to \phi'_r[\sval(t_r^{(1)}[\vec{\nv}]), \ldots, \sval(t_r^{(n_r)}[\vec{\nv}])]) \]
    is equivalent to the original formula $\Phi$ over $\cT$.

    Let $\TT \in \cT$ and $\val$ be a global state over $\TT$.
    \begin{itemize}
        \item Suppose $\val \models \Phi'$. Let $\vec{w} \in \TT^k$ and suppose it is of type $T_s$. We will show that $\val \models \phi_s[\sval(t_s^{(1)}(\vec{w})), \ldots, \sval(t_s^{(n_s)}(\vec{w}))]$.

        Take the local isomorphism $g$ from $\vec{v}_s$ to $\vec{w}$. By assumption, there is some $r$ with $\chi_r \equiv \top$ and an embedding $h : \AA_r \into \TT$ whose image contains $\nb(\vec{w}) = g(\AA_s)$. Let $f = h'^{-1} \circ g$, where $h'$ is the isomorphism obtained from $h$ by restricting codomain. Then
        \begin{align*}
            \val \models \Phi' &\implies \val \models \phi'_r[\sval(t_r^{(1)}(h(\vec{v}_r))), \ldots, \sval(t_r^{(n_r)}(h(\vec{v}_r)))] \\
            &\implies \val \models \phi_s[\sval(t_r^{(\pi_f(1))}(h(\vec{v}_r))), \ldots, \sval(t_r^{(\pi_f(n_s))}(h(\vec{v}_r)))] \\
            &\implies \val \models \phi_s[\sval(t_s^{(1)}(\vec{w})), \ldots, \sval(t_s^{(n_s)}(\vec{w}))],
        \end{align*}
        where the last implication is because for all $i = 1, \ldots, n_s$,
        \begin{align*}
            t_r^{(\pi_f(i))}(h(\vec{v}_r)) &= h(t_r^{(\pi_f(i))}(\vec{v}_r)) = h(\vec{u}_r^{(\pi_f(i))}) \\
            &= h(f(\vec{u}_s^{(i)})) = g(\vec{u}_s^{(i)}) \\
            &= t_s^{(i)}(\vec{w}).
        \end{align*}

        \item Suppose $\val \models \Phi$. Let $\vec{w} \in \TT^k$ and suppose it is of type $T_r$. We will show that for any $s \in \{1,\ldots,m\}$ with an embedding $f : \AA_s \into \AA_r$, we have $\val \models \phi_s[\sval(t_r^{(\pi_f(1))}(\vec{w})), \ldots, \sval(t_r^{(\pi_f(n_s))}(\vec{w}))]$.
        
        Take the local isomorphism $h$ from $\vec{v}_r$ to $\vec{w}$. For all $i = 1, \ldots, n_s$,
        \begin{align*}
            t_r^{(\pi_f(i))}(\vec{w}) &= t_r^{(\pi_f(i))}(h(\vec{v}_r)) = h(\vec{u}_r^{(\pi_f(i))}) \\
            &= h(f(\vec{u}_s^{(i)})) = t_s^{(i)}(h(f(\vec{v}_s))).
        \end{align*}
        Hence,
        \begin{align*}
            \val \models \Phi &\implies \val \models \phi_s[\sval(t_s^{(1)}(h(f(\vec{v}_s)))), \ldots, \sval(t_s^{(n_s)}(h(f(\vec{v}_s))))] \\
            &\implies \val \models \phi_s[\sval(t_r^{(\pi_f(1))}(\vec{w})), \ldots, \sval(t_r^{(\pi_f(n_s))}(\vec{w}))].
        \end{align*}
    \end{itemize}
\end{proof}

\newpage
\section{Experimental Data}\label{app:experiments}

\subsection{Benchmarks}\label{app:benchmarks} 
The benchmarks are listed in \Cref{tab:benchmarks} in this page and the next. They are over various topology families: lines (L), rings (R), grids (G), depth-2 variable-branching trees (BDT), binary trees (BT), lines with a binary direction predicate (DL), and rings with a ternary orientation predicate (OR).

\begin{center}
\begin{longtable}{c>{\raggedright\arraybackslash}p{0.22\textwidth}>{\raggedright\arraybackslash}p{0.5\textwidth}}
    \caption{List of benchmarks \label{tab:benchmarks}} \\
\hline
\textbf{No.} & \textbf{Name} & \textbf{Description} \\
\hline
L1  & \benchmark{simple_pipeline} & In a pipeline where every node computes the previous value $+$ 1, the end result is greater than $1$ + the initial value. \\ \hline
L2  & \benchmark{pipeline_atomic} & \Cref{fig:pip} without reset \\ \hline
L3  & \benchmark{pipeline_atomic_reset} & \Cref{fig:pip} \\ \hline
L4  & \benchmark{pipeline_reset} & \Cref{fig:pip} without atomicity \\ \hline
L5  & \benchmark{right_ge_any} & Same program as L1. Verify the end result is no less than any intermediate value (inclusive). \\ \hline
L6  & \benchmark{token_passing_bool} & Token passing on a line \\ \hline
L7  & \benchmark{get_max_leftmost_le_any} & In a pipeline that computes the max, verify the initial value is minimal. \\ \hline
R1  & \benchmark{simple_swap} & Nondeterministically swap the values of left and right variables. \\ \hline
R2  & \benchmark{dec} & Nondeterministically set the variable on the right to the minimum of the left and the right. \\ \hline
R3  & \benchmark{inc} & Nondeterministically set the variable on the right to the maximum of the left and the right. \\ \hline
R4  & \benchmark{token_passing_bool} & Token passing on a ring \\ \hline
R5  & \benchmark{token_passing_bool_2} & Token passing on a ring with 2 tokens \\ \hline
R6  & \benchmark{token_passing_int} & Passing 2 tokens on a ring; use a integer-typed counter \\ \hline
R7  & \benchmark{token_passing_int_2} & Passing 2 tokens on a ring; use a integer-typed counter \\ \hline
G1  & \benchmark{simple_grid_pipeline} & Compute min(up, left) + 1. Verify any intermediate value (inclusive) is no less than the value at the top left corner. \\ \hline
G2  & \benchmark{bottom_right_ge_any} & Compute max(up, left) + 1. Verify the value at the bottom right corner is greater than any intermediate value. \\ \hline
G3  & \benchmark{token_passing_bool} & Token passing on a grid \\ \hline
BDT1 & \benchmark{random_max} & Compute the max of the leaf values by updating the value at the root. Verify the value at the root is always no less than any leaf value that has been committed. \\ \hline
BDT2 & \benchmark{convolution_max} & \cite[Figure 2]{pps} \\ \hline
BDT3 & \benchmark{token_passing_bool} & Token passing on the depth-2 trees \\ \hline
BT1 & \benchmark{bool_token_passing} & Token passing on a perfect binary tree \\ \hline
BT2 & \benchmark{count_ge_height} & The size of an arbitrary binary tree is no less than its height. \\ \hline
BT3 & \benchmark{balanced} & If an arbitrary binary tree is balanced, then the left and right subtree of any node have the same size. \\ \hline
BT4 & \benchmark{balanced_pwr} & If an arbitrary binary tree is balanced, then the left and right subtree of any node have the same size; compute 2 to the power of the height instead of the height. \\ \hline
DL1 & \benchmark{get_max} & In a pipeline that computes the maximum number seen so far, any intermediate result does not exceed the final result in the end. \\ \hline
DL2 & \benchmark{get_max_two_way} & For a line of processes with distinct IDs, the leader election scheme that computes the maximum ID from left to right and passes it back produces a unique leader. \\ \hline
DL3 & \benchmark{trusted_chain} & If the first process receives a message, and every process that receives a message is trusted and sends a message to the next process, if it exists, then every process is trusted when there is no pending message. \cite{FeldmanPIS+2017} \\ \hline
OR1 & \benchmark{leader_election_conflated_channels} & The ring leader election algorithm \cite{ChangR1979} selects the process with the maximum ID as the leader. \cite{PadonMPS+2016} Assume conflated channels. \\ \hline
OR2 & \benchmark{leader_election_capacity1_channels} & The ring leader election algorithm \cite{ChangR1979} selects the process with the maximum ID as the leader. \cite{PadonMPS+2016} Assume bounded channels with capacity 1. \\ \hline
R8 & \benchmark{ring_agreement} & The ring agreement protocol in \cite[Sec. 6.10]{Haziza2015}. When it terminates, any non-initiator receives the value computed in the first round, which is no less than its initial value. \\ \hline
\end{longtable}
\end{center}

\subsection{Data} \label{app:data}
The data is listed in Table \ref{tab:results}. The experiment was carried out in a MacBook Air M1 with 16 GB RAM. Data for BT4 is obtained from Golem. All other rows are obtained from Z3/Spacer.

We give the CHC generation and solving phase each a timeout of 600 sec. The winning configuration, under the {\tool} columns, is determined by Solve Time. Gen Time and Size corresponding to the winning configuration is reported, and ties are broken with Gen Time. When there is no configuration to solve the benchmark, the configuration with shortest Gen Time is used.

\begin{table}
\begin{center}
\begin{tabular}{|l|c|c||c|c|c||c|c|c|}\hline \multicolumn{3}{|c||}{} & \multicolumn{3}{|c||}{\Cref{alg:cl}} & \multicolumn{3}{|c|}{{\sc Mosaic}} \\ \hline \multirow{2}{*}{Benchmark} & \multirow{2}{*}{Topology} & {Inv} & Gen & Solve & \multirow{2}{*}{Size} & Gen & \multirow{2}{*}{Size} & Solve\\ & & width & Time& Time & & Time & & Time \\ \hline
L1    &  \multirow{7}{*}{Line} &1  &0.15 & 0.08 & 6/36   & 0.11 & 3/20 & 0.01 \\  
L2    &                        &2 &4.17 & TO&  52/406   & 0.3 & 7/70 & 2.59 \\ 
L3    &                        &2 & 4.25& 5.24& 52/406 & 1.45 & 16/146 & 0.43 \\ 
L4    &                        &2 & 4.49& TO&  52/1112 & 0.31 & 7/190 & TO \\         
L5    &                        &2 & 4.32& TO&  52/759 & 0.3 & 7/130 & 3.32 \\ 
L6    &                        & 2& 3.97& 10.85& 52/1471 & 1.45 & 16/539 & 2.52 \\ 
L7    &                        &1 & 0.15& 0.02& 6/38  & 0.15 & 5/34 & 0.01 \\ \hline   
R1   &  \multirow{7}{*}{Ring} &1 &  0.11& < 0.01& 2/11 & 0.1 & 1/7 & < 0.01 \\ 
R2    &                        &1&  0.11& 0.01& 2/19 & 0.1 & 1/12 & 0.01 \\ 
R3    &                        &1&  0.11& 0.01& 2/19 & 0.1 & 1/12 & 0.01 \\ 
R4    &                        &2&  1.26& 3.79& 18/477 & 0.22 & 4/122 & 1.85 \\ 
R5    &                        &3&  123.2& -&  163/4518 & 1.17 & 4/152 & TO \\         
R6    &                        &2&  1.18&  1.81& 18/292 & 0.62 & 8/132 & 0.43 \\
R7    &                        &3&  114.08& -&  163/3419& 15.25 & 16/401 & 29.85 \\
R8    &                        &2& 1.4 & 18 & 18/383 & 0.67 & 8/173 & 10.5 \\\hline         
G1    &  \multirow{3}{*}{Grid} &1& 1.73 & 0.2 & 11/330 & 0.66 & 9/252 & 0.15 \\ 
G2    &                        &2& TO & - & - & 97.88 & 63/3273 & TO \\         
G3    &                        &2& TO & - & - & 96.55 & 63/3551 & 60.13 \\  \hline 
BDT1    &  \multirow{3}{*}{BD-Tree} &1& 0.13& 0.02&  4/20 & 0.11 & 2/12 &0.01 \\
BDT2    &                        &1& 0.12& 0.04&  4/20 & 0.1 & 2/12 & 0.02 \\ 
BDT3    &                        &2& 1.13& 1.1& 26/419 & 0.66 & 12/205 & 0.45 \\  \hline 
BT1    &  \multirow{4}{*}{Bin-Tree} &2& 30.24 & 7.71 &  36/1071 & 18.17 & 13/472 & 4.3 \\ 
BT2    &                        &1& 0.21 & 0.01 &  36/1071 & 0.2 & 4/50 & 0.11 \\
BT3    &                        &1& 0.21 & TO &  4/50 & 0.12 & 2/30 & TO \\
BT4    &                        &1& 0.21 & 21.81 &  4/50 & 0.12 & 2/30 & 10.43 \\  \hline
DL1    &  \multirow{3}{*}{D-Line} &1  &10.53 & 1.78 & 4/50 & 3.07 & 17/166 & 0.26 \\           
DL2    &                        &2 &9.66 & 49.8&  4/50 & 3.24 & 16/290 & 12.02 \\          
DL3    &                        &2 & 10.36& 1.54& 56/484 & 0.51 & 7/76 & 0.16 \\ \hline
OR1    &  \multirow{2}{*}{O-Ring} &3& TO& 16.36 & 4/152 & 16.36 & 4/152 & 6.08 \\     
OR2    &                        &3& TO& 16.4 & 4/188 & 16.4 & 4/188 & 32.81 \\
\hline
\end{tabular}
\caption{Experimental Results: Size is \#predicates/\#clauses. Times are in seconds. \label{tab:results}}
\end{center}
\end{table}


}{}

\end{document}